%% LyX 2.1.1 created this file.  For more info, see http://www.lyx.org/.
%% Do not edit unless you really know what you are doing.
\documentclass[american,french,english]{paper}
\usepackage[T1]{fontenc}
\usepackage[latin9]{inputenc}
\usepackage{refstyle}
\usepackage{float}
\usepackage{textcomp}
\usepackage{amsthm}
\usepackage{amsmath}
\usepackage{amssymb}
\usepackage{fixltx2e}
\usepackage{graphicx}
\usepackage{esint}

\makeatletter

%%%%%%%%%%%%%%%%%%%%%%%%%%%%%% LyX specific LaTeX commands.

\AtBeginDocument{\providecommand\lemref[1]{\ref{lem:#1}}}
\AtBeginDocument{\providecommand\propref[1]{\ref{prop:#1}}}
%% Because html converters don't know tabularnewline
\providecommand{\tabularnewline}{\\}
\RS@ifundefined{subref}
  {\def\RSsubtxt{section~}\newref{sub}{name = \RSsubtxt}}
  {}
\RS@ifundefined{thmref}
  {\def\RSthmtxt{theorem~}\newref{thm}{name = \RSthmtxt}}
  {}
\RS@ifundefined{lemref}
  {\def\RSlemtxt{lemma~}\newref{lem}{name = \RSlemtxt}}
  {}

%%%%%%%%%%%%%%%%%%%%%%%%%%%%%% Textclass specific LaTeX commands.
 \newcommand\thmsname{\protect\theoremname}
 \newcommand\nm@thmtype{theorem}
 \theoremstyle{plain}
 
 \newenvironment{namedthm}[1][Undefined Theorem Name]{
   \ifx{#1}{Undefined Theorem Name}\renewcommand\nm@thmtype{theorem*}
   \else\renewcommand\thmsname{#1}\renewcommand\nm@thmtype{namedtheorem}
   \fi
   \begin{\nm@thmtype}}
   {\end{\nm@thmtype}}
\theoremstyle{plain}
\newtheorem{thm}{\protect\theoremname}[section]
  \theoremstyle{remark}
  \newtheorem{rem}[thm]{\protect\remarkname}
  \theoremstyle{plain}
  \newtheorem{prop}[thm]{\protect\propositionname}
\newenvironment{lyxlist}[1]
{\begin{list}{}
{\settowidth{\labelwidth}{#1}
 \setlength{\leftmargin}{\labelwidth}
 \addtolength{\leftmargin}{\labelsep}
 }}
{\end{list}}
  \theoremstyle{plain}
  \newtheorem{lem}[thm]{\protect\lemmaname}

%%%%%%%%%%%%%%%%%%%%%%%%%%%%%% User specified LaTeX commands.
\date{07/02/2014}

\makeatother

\usepackage{babel}
\makeatletter
\addto\extrasfrench{%
   \providecommand{\fg}{\ifdim\lastskip>\z@\unskip\fi~\frqq}%
}

\makeatother
  \addto\captionsamerican{\renewcommand{\lemmaname}{Lemma}}
  \addto\captionsamerican{\renewcommand{\propositionname}{Proposition}}
  \addto\captionsamerican{\renewcommand{\remarkname}{Remark}}
  \addto\captionsamerican{\renewcommand{\theoremname}{Theorem}}
  \addto\captionsenglish{\renewcommand{\lemmaname}{Lemma}}
  \addto\captionsenglish{\renewcommand{\propositionname}{Proposition}}
  \addto\captionsenglish{\renewcommand{\remarkname}{Remark}}
  \addto\captionsenglish{\renewcommand{\theoremname}{Theorem}}
  \addto\captionsfrench{\renewcommand{\lemmaname}{Lemme}}
  \addto\captionsfrench{\renewcommand{\propositionname}{Proposition}}
  \addto\captionsfrench{\renewcommand{\remarkname}{Remarque}}
  \addto\captionsfrench{\renewcommand{\theoremname}{Th�or�me}}
  \providecommand{\lemmaname}{Lemma}
  \providecommand{\propositionname}{Proposition}
  \providecommand{\remarkname}{Remark}
  \providecommand{\theoremname}{Theorem}
\providecommand{\theoremname}{Theorem}

\begin{document}
\title{A Tracking Approach to Parameter Estimation in Linear Ordinary Differential
Equations}

 \author{{\large Nicolas J-B. Brunel, Quentin Clairon}}
\institution{ENSIIE \& Laboratoire de Math\'ematiques et Mod\'elisation d'Evry,\\  UMR CNRS 8071, Universit\'e
d'Evry, France}

\maketitle
\begin{abstract}
Ordinary Differential Equations are widespread tools to model chemical,
physical, biological process but they usually rely on parameters which
are of critical importance in terms of dynamic and need to be estimated
directly from the data. Classical statistical approaches (nonlinear
least squares, maximum likelihood estimator) can give unsatisfactory
results because of computational difficulties and ill-posedness of the 
 statistical problem. New estimation methods that use some nonparametric devices have been proposed
to circumvent these issues. We present a new estimator that shares properties with 
 Two-Step estimator and Generalized Smoothing (introduced by Ramsay et al. \cite{Ramsay2007}). We introduce a perturbed model and we use optimal control theory for constructing a criterion that aims at minimizing the discrepancy with
data and the model. Here, we focus on the case of linear Ordinary Differential Equations as our criterion has a closed-form expression that permits a detailed analysis.  
 Our approach avoids the use of a nonparametric estimator
of the derivative, which is one of the main cause of inaccuracy in Two-Step estimators. Moreover, we take into account model discrepancy and our estimator is more robust to model misspecification than classical methods. The discrepancy with the parametric ODE model correspond to the minimum perturbation (or control) to apply to the initial
model. Its qualitative analysis can be informative for misspecification
diagnosis. In the case of well-specified model, we show the consistency
of our estimator and that we reach the parametric $\sqrt{n}-$ 
rate when regression splines are used in the first step.
\end{abstract}

%Inference; Asymptotics; Ordinary Differential Equations; Optimal Control; Generalized Smoothing

\section{Introduction}

We consider a dynamical process defined by an Ordinary Differential
Equation (ODE) with a known and fixed initial value

\begin{equation}
\left\{ \begin{array}{ll}
\dot{x} & =f(t,x,\theta)\\
x(0) & =x_{0}
\end{array}\right.\label{eq:ode1}
\end{equation}
 Such a model is called an Initial Value Problem (IVP). The state
$x$ is in $\mathbb{R}^{d}$ and $\theta$ is an unknown parameter,
that belongs to a subset $\Theta$ of $\mathbb{R}^{p}$. $f$ is a
time-dependent vector field from $\left[0,\, T\right]\times\mathbb{R}^{d}\times\Theta$
to $\mathbb{R}^{d}$. This class of dynamical models are commonly
used in physics, engineering, ecology,\dots \cite{Fuguitt1947,Mirsky2009,CellComputl2002,albert1997biochemical}.
Let $t\mapsto X_{\theta^{*}}(t)=X^{*}(t)$ be the solution to the
IVP (\ref{eq:ode1}) on $\left[0,T\right]$, for the true parameter
set $\theta^{*}$. 

We want to estimate $\theta^{*}$ from noisy observations $Y_{i},\: i=1,\dots,n$
of the trajectory $X^{*}$, made at time $t_{i}$. Estimation can
be done by classical estimators such as Nonlinear Least Squares (NLS),
Maximum Likelihood Estimator (MLE) \cite{LiOsborne2005} or Bayesian
approaches (\cite{Huang2006},\cite{gelman1996},\cite{calderhead2009}
and \cite{ghasemi2011} for example). Nevertheless, the statistical
estimation of an ODE model by NLS leads to a difficult nonlinear estimation
problem. These difficulties were pointed out by Ramsay et al. \cite{Ramsay2007}:
computational complexity comes from repeated ODE integrations required
by the optimization algorithm; moreover, the usual criterion exhibits
multiple local minima . Even though meta-heuristic methods can be
proposed to circumvent the last issue, parameter estimation can be
considered as an ill-posedness inverse problem \cite{Engl2009}, that
needs alternatives to the classic statistical approaches.

Alternative statistical estimators have been developped or adapted
to this particular framework, such as hierarchical Bayesian approaches
\cite{Huang2006,Raftery2010} and MCMC (\cite{calderhead2011}), Generalized
Smoothing \cite{Ramsay2007,QiZhao2010,Campbell2011,Campbell2013}
or Two-Step estimators \cite{Brunel2008,Brunel2013,LiangWujasa2008,GugushviliKlaassen2010}.
Recently variational approaches have also been developed \cite{Timmer2012}. 

As other two-step estimators, our method produces a minimum distance
estimator \cite{Koul1992} but it shares strong links with Generalized
Smoothing approaches. Two-Step estimators were initiated by \cite{Varah1982}
and aims at minimizing a discrepancy measure between a nonparametric
estimator $\hat{X}$ and quantities characterizing the differential
models. Usually, a Two-Step method is defined by the following procedure:
\begin{enumerate}
\item Construct a nonparametric curve estimator $\widehat{X}$ from the
data $(t_{i},Y_{i})_{1\leq i\leq n}$, 
\item Compute a model discrepancy measure $R(\hat{X},\theta)$, such as
the weighted $L^{2}$ distance 
\begin{equation}
R(\hat{X},\theta)=\int_{0}^{T}\left\Vert \dot{\widehat{X}}(t)-f(t,\widehat{X}(t),\theta)\right\Vert ^{2}w(t)dt\label{eq:classicalTS}
\end{equation}

\item Define the parameter estimator 
\begin{equation}
\widehat{\theta}^{TS}=\arg\min_{\theta\in\Theta}R(\hat{X},\theta)\label{eq:TS-Estimator}
\end{equation}

\end{enumerate}
These estimators have a good computational efficiency vs NLS and they
avoid repeated ODE integration. In practice, the used criteria are
also smoother and easier to optimize than the NLS criterion. Two-step
estimators are consistent in general, but there is a trade-off with
the statistical precision, and some care in the use of nonparametric
estimate $\dot{\widehat{X}}$ has to be taken in order to keep the parametric
rate \cite{Brunel2008,GugushviliKlaassen2010}. The variance of TS
estimator can be higher, but the use of other criterion, for instance
based on a weak formulation of the differential equation can give
competitive alternatives, in particular in high dimensional parameter
space or small sample size, see \cite{Brunel2013}.

 In the case of Generalized Smoothing \cite{Ramsay2007}, the solution
$X^{*}$ is approximated by a basis expansion that solves approximately
the ODE model; hence, the parameter inference is performed by dealing
with an imperfect model, as the collocation approximation of the ODE
solution can be seen as a relaxation of the ODE model constraint, needed
for taking into account some uncertainty about the model. Based on
the Generalized Profiling approach, Hooker proposed a criteria that
estimates the lack-of-fit through the estimation of a ``forcing function''
$t\mapsto u(t)$ in the ODE $\dot{x}-f(t,x,\tilde{\theta})=u(t)$,
where $\tilde{\theta}$ is a previous estimate obtained by Generalized
Profiling \cite{Brunel2013}.  The objective of this paper is to
provide a parameter estimate and an approximate solution to the ODE that 
\begin{itemize}
\item avoids the use a nonparametric estimate of the derivative $\dot{X}$ as in two step estimators, 
\item incorporates robustness in model specification and controls the quality of approximation, 
\item introduces the use of infinite dimensional optimization tools, exploiting the differential structure of the model.
\end{itemize}
One interest of the latter point is to avoid the use of series expansion
for function estimation, and avoid some arbitrary practical choices about the
basis. Moreover, infinite dimensional optimization tools
give powerful characterization of the solutions that gives an additional
insight in Generalized Smoothing.

Our method provides a consistent parametric estimator when the model
is correct. We show that it is root-$n$ consistent and asymptotically
normal. At the same time, we get a discrepancy measure between the
model and the data under the form of an optimal control $u$ analogous
to the forcing function in \cite{Hooker2009}. 

In the next section, we introduce the notations and we motivate our
approach by discussing the Generalized Smoothing approach, and the
link with Optimal Control Theory. In section \ref{sec:Consistency-of-the},
we show that the estimator is consistent under some regularity assumption
about the model. Then in section \ref{sec:Asymptotics-of}, we show
that we reach the root$-n$ rate using regression splines for $\widehat{X}$.
Finally, we show the interest of our method on a toy model (and we
make comparison with Nonlinear Least Squares and Generalized Smoothing) and we consider also a real data case, where a linear ODE is used for describing the isomerization reaction of $\alpha$-Pinene.

\section{Model and methodology\label{sec:Model-and-methodology}}

We introduce our statistical framework, and we recall the mechanics
of the Generalized Smoothing estimator in the particular context of
a linear ODE.

\subsection{The statistical model and a Generalized Smoothing wrap-up }

We observe a ``true'' trajectory $X^{*}$ at $n$ random times $0=t_{1}<t_{2}\dots<t_{n}=T$,
such that we have $n$ observations $(Y_{1},\dots,Y_{n})$ defined
as 
\begin{equation}
Y_{i}=X^{*}(t_{i})+\epsilon_{i}\label{eq:StatModel}
\end{equation}
where $\epsilon_{i}$ is the (random) observation error. We assume
that there is a true parameter $\theta^{*}$ belonging to a subset
$\Theta$ of $\mathbb{R}^{p}$, such that $X^{*}$ is the unique solution
of the linear ODE 
\begin{equation}
\dot{x}(t)=A_{\theta}(t)x(t)+r_{\theta}(t)\label{eq:LinearODEmodel}
\end{equation}
with initial condition $X^{*}(0)=X_{0}^{*}$; where $t\mapsto A_{\theta}(t)\in\mathbb{R}^{d\times d}$
and $t\mapsto r_{\theta}(t)\in\mathbb{R}^{d}$. More generally, we
denote $X_{\theta}$ the solution of (\ref{eq:LinearODEmodel}) for
a given $\theta$, and initial condition $X_{0}^{*}$. We assume that
the initial condition $X_{0}^{*}$ is exactly known, and we want 
to infer $\theta^{*}$ from $(Y_{1},\dots,Y_{n})$. In the linear
case, Duhamel's formula gives a closed form expression for $X_{\theta}$
for $t\in\left[0,T\right]$ 
\[
X_{\theta}(t)=\Phi_{\theta}(t,0)X_{0}^{*}+\int_{0}^{T}\Phi_{\theta}(t,s)r_{\theta}(s)ds
\]
where the matrix-valued function $\Phi_{\theta}\,:\,\left(t,s\right)\mapsto\Phi_{\theta}\left(t,s\right)$
is the so-called resolvant of the ODE. By definition, the resolvant
is the solution of the homogenous ODE
\[
\begin{array}{l}
\dot{\Phi_{\theta}}(t,s)=A_{\theta}(t)\Phi_{\theta}(t,s)\\
\Phi_{\theta}(s,s)=I_{d}
\end{array}
\]
The estimation of $\theta^{*}$ can be done straightforwardly  with
the Nonlinear Least Squares (NLS) estimator that minimizes 
\[
\sum_{i=1}^{n}\left\Vert Y_{i}-X_{\theta}(t_{i})\right\Vert _{2}^{2}.
\]
 In Generalized Smoothing (GS), parameter estimation is regularized
by using an approximate solutions of the ODE (\ref{eq:LinearODEmodel}),
as GS takes advantage of the double interpretation of splines for
smoothing data, and for numerical solving of ODE by collocation.

A basis expansion $\widehat{X}(t,\theta)=\widehat{\beta}(\theta)^{T}p(t)$
is computed for each $\theta$, where $\widehat{\beta}(\theta)$ is
obtained by minimizing in $\beta$ the criterion 

\begin{equation}
J_{n}(\beta\vert\theta,\lambda)=\sum_{i=1}^{n}\left\Vert Y_{i}-\widehat{\beta}^{T}p(t_{i})\right\Vert _{2}^{2}+\lambda\int_{0}^{T}\left\Vert \widehat{\beta}^{T}\dot{p}(t)-\left(A_{\theta}(t)\widehat{\beta}^{T}p(t)+r_{\theta}(t)\right)\right\Vert _{2}^{2}dt\label{eq:Cost_GS}
\end{equation}
This first step is considered as profiling along the nuisance parameter
$\beta$, whereas the estimation of the parameter of interest is obtained
by minimizing the sum of squared errors of the proxy $\hat{X}(t,\theta)$
by 
\begin{equation}
\hat{\theta}^{GS}=\arg\min_{\theta}\sum_{i=1}^{n}\left\Vert Y_{i}-\hat{X}(t_{i},\theta)\right\Vert ^{2}\label{eq:Estimator_GS}
\end{equation}
Obviously, the estimator depends on the hyperparameter $\lambda$,
that needs to be selected from the data in practice (some adaptive
procedures have been proposed, see \cite{chervoneva2014}). The essential
difference with NLS is to replace the exact solution $X_{\theta}(\cdot)$
by an approximation $\hat{X}(\cdot,\theta)$ (that depends also on
the data). This means that GS deals with 2 sources of errors: in addition to the classical statistical error (variance due to noisy data), there is an approximation error as $\hat{X}(\cdot,\theta)$ is a spline that does not solve exactly the ODE model (\ref{eq:LinearODEmodel}).
Indeed, collocation algorithms compute the coefficients of a B-spline
expansion based on the relationships between $\hat{X}$ and its derivative
$\dot{\hat{X}}$ evaluated on an appropriate grid of time points $0=s_{1}<s_{2}<\dots<s_{p}$,
\cite{boor2001}. This gives a nonlinear system that is usually
solved with a Newton algorithm, whose roots are the unknown coefficients
of the basis expansion. The collocation schemes are essentially useful
for solving Boundary Value Problems (instead of the classical Initial
Value Problem). 

For parameter estimation, the basis expansion is defined in a somehow
arbitrary manner and the ODE constraint is not used as an equality constraint as it should be the
case in a ``normal'' collocation scheme. Instead, the ODE equation
is transformed into an inequality constraint defined on the interval
$[0,T]$ and the model constraint is never set to 0 because of the
trade-off with the data-fitting term $\sum_{i=1}^{n}\left\Vert Y_{i}-\widehat{\beta}^{T}p(t_{i})\right\Vert _{2}^{2}$.
For this reason, the ODE model (\ref{eq:LinearODEmodel}) is not solved
and it is useful to introduce the discrepancy term $\hat{u}_{\theta}(t)=\widehat{\beta}^{T}\dot{p}(t)-\left(A_{\theta}(t)\widehat{\beta}^{T}p(t)+r_{\theta}(t)\right)$
that corresponds to a model error. In fact, the proxy $\hat{X}(\cdot,\theta)$
satisfies a pertubed ODE $\dot{x}=A_{\theta}x+r_{\theta}+\hat{u}_{\theta}$.
This forcing function $\hat{u}_{\theta}$ is an outcome of the optimization
process and can be relatively hard to analyze or understand, as it
depends on the basis expansion used and it depends also on the data
via the minimization of $J_{n}(\beta\vert\theta,\lambda)$. Nevertheless,
Hooker et al. have proposed goodness-of-fit tests based on this so-called
``empirical forcing function'' $\hat{u}_{\theta}$, as $\hat{u}_{\theta}$
are the residuals but at the derivative scale and not at the state
scale \cite{Hooker2009,HookerEllner2013}. 

Based on these remarks, we introduce the pertubed linear ODE 
\begin{equation}
\dot{x}(t)=A_{\theta}(t)x(t)+r_{\theta}(t)+u(t)\label{eq:ControlLinearODEmodel-1}
\end{equation}
where the function $t\mapsto u(t)$ can be any function in $L^{2}$.
The solution of the corresponding Initial Value Problem 
\[
\begin{array}{l}
\dot{x}(t)=A_{\theta}(t)x(t)+r_{\theta}(t)+u(t)\\
x(0)=X_{0}
\end{array}
\]
is denoted $X_{\theta,u}$. Instead of using the spline proxy $\hat{X}(\cdot,\theta)$
for approximating $X^{*}$, we use the trajectories $X_{\theta,u}$
of the ODE (\ref{eq:ControlLinearODEmodel-1}) controlled by the function
$u$.

\subsection{The Tracking Estimator}

Following the Generalized Smoothing approach, we look for a candidate
$X_{\theta,u}$ that can minimize at the same time the discrepancy
with the data and the norm $\left\Vert u\right\Vert _{L^{2}}$. Moreover,
we replace the classical Sum of Squared Errors by a smoothed version $\int_{0}^{T}\left\Vert \hat{X}(t)-X_{\theta,u}(t)\right\Vert _{2}^{2}dt$
based on a nonparametric proxy $\hat{X}$.
Hence, we consider the subsequent cost function 
\begin{equation}
C\left(\hat{X};u,\theta,\lambda\right)=\int_{0}^{T}\left\Vert \hat{X}(t)-X_{\theta,u}(t)\right\Vert _{2}^{2}dt+\lambda\int_{0}^{T}\left\Vert u(t)\right\Vert _{2}^{2}dt\label{eq:tracking_cost_function}
\end{equation}
for a given $\lambda>0$. Moreover, for each $\theta$ in $\Theta$,
we introduce the infimum function 
\begin{equation}
S\left(\hat{X};\theta,\lambda\right)=\inf_{u\in L^{2}}C\left(\hat{X};u,\theta,\lambda\right)\label{eq:ProfiledCost}
\end{equation}
obtained by ``profiling'' on the function $u$. The definition of
$S$ mimicks the minimization of $J_{n}(\beta\vert\theta,\lambda)$
but it is more involved as it is defined on infinite functional space,
instead of a finite dimensional vector space. Finally, our estimator
is defined by minimizing the same function $S$ i.e 

\begin{equation}
\widehat{\theta}^{T}=\arg\min_{\theta\in\Theta}S\left(\hat{X};\theta,\lambda\right)\label{eq:OurEstimator}
\end{equation}
whereas the GS estimator minimizes a different criterion $\sum_{i=1}^{n}\left\Vert Y_{i}-\hat{X}(t_{i},\theta)\right\Vert ^{2}$.
This means that in our methodology, we try to find a parameter $\theta$
that maintain a reasonable trade-off between model and data, whereas
the Generalized Smoothing Estimator $\hat{\theta}^{GS}$ is dedicated
to fit the data with the proxy $\hat{X}(\cdot,\theta)$, without considering
the size of model error represented by $\hat{u}_{\theta}$. 

Before going deeper into the interpretation and analysis of our estimator,
we need to show that the criterion $S\left(\hat{X};\theta,\lambda\right)$
is properly defined and that we can obtain a tractable expression
for computations and for the theoretical analysis of (\ref{eq:OurEstimator}).
The existence of $S$ is a direct consequence of the so-called Linear-Quadratic
Theory (LQ Theory), which belongs to the broader field of Optimal
Control Theory \cite{pontryagin1962minimum,Sontag1998,milyutin1998,clarke2013variationalcalculus}.
In our case, we consider the control of linear ODE with a quadratic
cost function that enables to have quite general and simple results.
This is possible because we have replaced the discrete sum of squared
errors by an integral criterion where the original data have been
replaced by a nonparametric proxy $\hat{X}$. Thanks to that, we can use
directly calculus of variations and optimal control \cite{Kirk1998optimalcontrol,clarke2013variationalcalculus}.
For completeness, we recall briefly in the appendix the main results
of LQ Theory.
\begin{namedthm}[Theorem and Definition of $S\left(\zeta;\theta,\lambda\right)$]
\label{thm:tracking_existence_unicity}Let $t\mapsto\zeta(t)$ be
a function belonging to $H^{1}(\left[0,\, T\right],\mathbb{R}^{d})$
and $X_{\theta,u}$ be the solution to the controlled ODE (\ref{eq:ControlLinearODEmodel-1}).
\\
For any $\theta$,$\lambda$, there exists a unique optimal control
$\bar{u}_{\theta,\lambda}$ that minimizes the cost function 
\begin{equation}
C\left(\zeta;u,\theta,\lambda\right)=\int_{0}^{T}\left\{ \left\Vert \zeta(t)-X_{\theta,u}(t)\right\Vert _{2}^{2}+\lambda\left\Vert u(t)\right\Vert _{2}^{2}\right\} dt\label{eq:LQCost}
\end{equation}
The control $\bar{u}_{\theta,\lambda}$ can be computed in a ``closed-loop''
form as 
\begin{equation}
\overline{u}_{\theta,\lambda}(t)=\frac{E(t)}{\lambda}\left(X_{\theta,\overline{u}_{\theta,\lambda}}(t)-\zeta(t)\right)+\frac{h(t)}{\lambda}\label{eq:OptimalControl}
\end{equation}
where $E$ and $h$ are solutions of the Final Value Problems 
\begin{equation}
\left\{ \begin{array}{l}
\dot{E}(t)=I_{d}-A_{\theta}(t)^{T}E(t)-E(t)A_{\theta}(t)-\frac{E(t)^{2}}{\lambda}\\
\dot{h}(t)=\text{\textminus}A_{\theta}(t)^{T}h(t)-E(t)\left(A_{\theta}(t)\zeta(t)+r_{\theta}(t)-\dot{\zeta}(t)\right)\text{\textminus}\frac{E(t)h(t)}{\lambda}
\end{array}\right.\label{eq:Ricatti_equ_fondamental}
\end{equation}
and $E(T)=0$, $h(T)=0$. For all $t\in[0,T]$, the matrix $E(t)$
is symetric, and the ODE defining the matrix-valued function $t\mapsto E(t)$
is called the Matrix Riccati Differential Equation of the ODE\textup{
(\ref{eq:ControlLinearODEmodel-1}). }

Finally, the Profiled Cost $S$ has the closed form \textup{
\begin{equation}
\begin{array}{lll}
S(\zeta;\theta,\lambda) & =& -\int_{0}^{T}\left\{ 2\left(A_{\theta}(t)\zeta(t)+r_{\theta}(t)-\dot{\zeta}(t)\right)^{\top}h(t)+\frac{\left\Vert h(t)\right\Vert ^{2}}{\lambda}\right\} dt\end{array}\label{eq:ExpressionS}
\end{equation}
}
\end{namedthm}
The cost (\ref{eq:LQCost}) is usually used for solving the so-called
``Tracking Problem'' that consists in finding the optimal control
$u$ to apply to the ODE (\ref{eq:ControlLinearODEmodel-1}) in order
to reach a target trajectory $t\mapsto\zeta(t)$, see \cite{Sontag1998}
for an excellent introduction. The estimation problem is then to determine
the parameter $\theta$ so that the corresponding ODE need a small
control $u$ (in $L^{2}$ norm) in order to be close to the noisy trajectory
$t\mapsto\hat{X}(t)$. 
\begin{rem}
$H^{1}(\left[0,\, T\right],\mathbb{R}^{d})=\left\{ X\in L^{2}(\left[0,\, T\right],\mathbb{R}^{d})\mid\dot{X}\in L^{2}(\left[0,\, T\right],\mathbb{R}^{d})\right\} $
\foreignlanguage{american}{is the classical Sobolev space of $L^2$ (weakly) differentiable function, see \cite{Brezis1983}.
The derivative is defined in the weak sense, so it allows us to consider non-parametric
estimator with some (controlled) discontinuities. Of course, every
differentiable functions belong to $H^{1}$ and the weak derivative
 coincides with the classic one}.
% where the weak derivative of
% the function $g$ in $H^{1}$ is not defined point-wise but as the
% function $\dot{g}\in L^{2}$ satisfying $\left\langle \dot{g},\varphi\right\rangle =-\left\langle g,\dot{\varphi}\right\rangle $,
% for all function $\varphi$ in $C^{1}$ with support included in $\left]0,1\right[$
%(denoted $C_{C}^{1}\left(\left]0,1\right[\right)$) . }
\end{rem}

\begin{rem}
We insist on the fact that $t\mapsto E(t),h(t)$ depends also on $\theta$,
$\lambda$ and $\zeta$ because of their definition via equation (\ref{eq:Ricatti_equ_fondamental}).
Nevertheless, we do not write it systematically for notational brievety.
As mentioned in the theorem, it is possible to compute $X_{\theta,\overline{u}_{\theta,\lambda}}$
in a ``closed-loop'' form as we can solve in a preliminary stage
the 2 equations (\ref{eq:Ricatti_equ_fondamental}) that gives the
function $E$ and $h$ for all $t\in\left[0,T\right]$. Then, we just
need to solve the ODE 
\[
\begin{array}{l}
\dot{x}(t)=A_{\theta}(t)x(t)+r_{\theta}(t)+\frac{E(t)}{\lambda}\left(x(t)-\zeta(t)\right)+\frac{h(t)}{\lambda}\\
x(0)=X_{0}
\end{array}
\]

\end{rem}

\begin{rem}
From equation (\ref{eq:ExpressionS}), we see that $S$ depends smoothly
in $\theta$ and $\lambda$, as in $\zeta$. This was not easy to
see from the infimum definition (\ref{eq:ProfiledCost}), but as the
minimum is reached, and attained for a known function, we can have
even more information than in the Generalized Smoothing approach based
on splines. 
\end{rem}

\begin{rem}
Our pertubated ODE framework permits to consider naturally the problem
of model misspecification, when the true model is 
\[
\dot{x}(t)=A_{\theta}(t)x(t)+r_{\theta}(t)+v(t)
\]
with $v\in L^{2}(\left[0,\, T\right],\mathbb{R}^{d})$ is an unknown
function. We do not provide any theoretical analysis for this kind
of model misspecification, but we consider it in the Experiments section,
in order to gain some insight. We will see in a simple example that
our estimator allows us to have more accurate estimation than classical
NLS estimator in that case. Moreover we can propose a proper correction
term to add to the initial model to counteract the misspecification
in order to lower the prediction error.
\end{rem}
The next section is dedicated to the derivation of the regularity
properties of $S$. Thanks to the use of a functional formulation
and the associated Linear-Quadratic theory, we show the smoothness in $\zeta$
and $\theta$, and compute directly the needed derivatives.

\section{Consistency of the Tracking estimator\label{sec:Consistency-of-the}}

Under reasonable and practical assumptions, we can assert
that the tracking estimator (\ref{eq:OurEstimator}) is a consistent
estimator of $\theta^{*}$ when the ODE model (\ref{eq:LinearODEmodel})
is well-specified, and when we use a consistent nonparametric estimator
$\hat{X}$. In practice, it is quite common to use a smoothing spline
or a kernel smoother in order to smooth the data and estimates roughly
the trajectory $X^{*}$. As the tracking estimator is an M-estimator,
we can employ the classical approaches for consistency that relies
on the regularity and convergence of the stochastic criterion $S\left(\hat{X};\theta,\lambda\right)$
to the asymptotic criterion $S\left(X^{*};\theta,\lambda\right)$.
Hence, we need to show some regularity in $\zeta$, uniformly in $\theta$.
Similarly, in order to compute the rate of convergence and the variance
of the estimator, we will need to check the smoothness w.r.t $\theta$.

\subsection{Regularity properties of $S(\zeta;\theta,\lambda)$}

We introduce some necessary assumptions about the ODE model in order
to derive the needed regularity as well as the identifiability property.
 The conditions are 
\begin{description}
\item [{C1:}] $\theta^{*}\in\Theta$ a compact subset of $\mathbb{R}^{p}$
\item [{C2:}] The model is identifiable at $\theta=\theta^{*}$ i.e 
\[
\forall\theta\in\Theta\,;\, X_{\theta}=X_{\theta^{*}}\Longrightarrow\theta=\theta^{*}
\]

\item [{C3:}] $\forall\left(t,\theta\right)\in\left[0\,,\, T\right]\times\Theta,\:(t,\theta)\longmapsto A_{\theta}(t)$
and $(t,\theta)\longmapsto r_{\theta}(t)$ are continuous.
\item [{C4:}] $\forall\left(t,\theta\right)\in\left[0,\, T\right]\times\Theta,\,\left(t,\theta\right)\longmapsto\frac{\partial A_{\theta}(t)}{\partial\theta}$
and $\left(t,\theta\right)\longmapsto\frac{\partial r_{\theta}(t)}{\partial\theta}$
are continuous
\end{description}
According to the context, $\left\Vert \cdot\right\Vert _{2}$ denotes
the Euclidean norm in $\mathbb{R}^{d}$ ($\left\Vert X\right\Vert _{2}=\sqrt{\sum_{i=1}^{d}X_{i}^{2}}$)
or the Frobenius matrix norm ($\left\Vert A\right\Vert _{2}=\sqrt{\sum_{i,j}\left|a_{i,j}\right|^{2}}$).
We use also the functional norm in $L^{2}\left(\left[0\, T\right],\mathbb{R}^{d}\right)$
defined by $\left\Vert f\right\Vert _{L^{2}}=\sqrt{\int_{0}^{T}\left\Vert f(t)\right\Vert _{2}^{2}dt}$.
Continuity and differentiability have to be understood w.r.t these
previous norms.

For the computation of $S\left(\hat{X};\theta,\lambda\right)$ (and
$S\left(X^{*};\theta,\lambda\right)$), we need some additional notations.
In particular, we recall that the Riccati equation $\dot{E}=I_{d}\text{\textminus}A_{\theta}(t)^{\top}E\text{\textminus}EA_{\theta}(t)\text{\textminus}\frac{E^{2}}{\lambda}$
depends on the model (\ref{eq:LinearODEmodel}) , but it does not
depend on the data $\hat{X}$, whereas it is the case for $h$, as we
have $\dot{h}(t)=\text{\textminus}A_{\theta}(t)^{T}h(t)-E(t)\left(A_{\theta}(t)\zeta(t)+r_{\theta}(t)-\dot{\zeta}(t)\right)\text{\textminus}\frac{E(t)h(t)}{\lambda}$.
For this reason, we introduce the functions $\alpha$ and $\beta$
defined by 
\[
\left\{ \begin{array}{l}
\alpha_{\theta}(t)=\left(A_{\theta}(t)^{T}+\frac{E_{\theta}(t)}{\lambda}\right)\\
\beta_{\theta}(t,\zeta)=E_{\theta}(t)\left(A_{\theta}(t)\zeta+r_{\theta}(t)-\dot{\zeta}\right)
\end{array}\right.
\]
We denote then $\widehat{h_{\theta}}$ the solution to the Final Value
Problem 
\[
\begin{cases}
\dot{h}=-\alpha_{\theta}(t)h-\beta_{\theta}(t,\widehat{X})\\
h(T)=0
\end{cases}
\]
and $h^{*}$ the solution corresponding to case $\zeta=X^{*}$. More
generally, we will denote $t\mapsto h_{\theta}(t,\zeta)$ for any
target trajectory $\zeta$. 

We introduce also the matrix-valued function $(t,s)\mapsto R_{\theta}(t,s)$
defined for all $t,s$ in $[0,T]$, as the solution of the Initial
Value Problem 
\begin{equation}
\textrm{ }\left\{ \begin{array}{l}
\dot{R_{\theta}}(t,s)=\alpha_{\theta}(T-t)R(t,s)\\
R_{\theta}(s,s)=I_{d}
\end{array}\right.\label{eq:reverse_time_R}
\end{equation}
and where the time has been reversed in the function $\alpha_{\theta}$.
 We show in the next proposition that $\forall\zeta\in H^{1}(\left[0,\, T\right])$,
$\theta\mapsto S(\zeta;\theta,\lambda)$ is well defined, i.e finite
on $\Theta$. 
\begin{prop}
\label{prop:existence_A_E_h}Under conditions 1 and 3 we have: 

\[
\overline{A}=\sup_{\theta\in\Theta}\left\Vert A_{\theta}\right\Vert _{L^{2}}<+\infty
\]
 
\[
\overline{X}=\sup_{\theta\in\Theta}\left\Vert X_{\theta}\right\Vert _{L^{2}}<+\infty
\]
 
\[
\bar{E}=\sup_{\theta\in\Theta}\left\Vert E_{\theta}\right\Vert _{L^{2}}<+\infty
\]
and 
\[
\forall\zeta\in H^{1}(\left[0,\, T\right]),\,\bar{h_{\zeta}}=\sup_{\theta\in\Theta}\left\Vert h_{\theta}(.,\zeta)\right\Vert _{L^{2}}<+\infty
\]
Hence, for all $\zeta$ in $H^{1}(\left[0,\, T\right])$, the map
$\theta\longmapsto S(\zeta;\theta,\lambda)$ is well defined on $\Theta$
(i.e $\sup_{\theta\in\Theta}\left\Vert S(\zeta;\theta,\lambda)\right\Vert <+\infty$). \end{prop}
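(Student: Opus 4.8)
The plan is to prove the four uniform bounds in the order in which they are stated, since each one feeds into the next, and then to read the finiteness of $S$ directly off the closed form (\ref{eq:ExpressionS}). The recurring device is to first obtain a pointwise (spectral or Euclidean) bound that is uniform in $(t,\theta)$ and then integrate over the compact interval $[0,T]$ to get the $L^2$ bound. For $\overline{A}$ this is immediate: by C3 the map $(t,\theta)\mapsto A_\theta(t)$ is continuous on the compact set $[0,T]\times\Theta$ (using C1), hence bounded, say $\|A_\theta(t)\|_2\le M_A$, and likewise $\|r_\theta(t)\|_2\le M_r$; then $\|A_\theta\|_{L^2}^2\le M_A^2 T$ uniformly in $\theta$. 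For $\overline{X}$ I would write $X_\theta$ in integral form from its IVP, estimate $\|X_\theta(t)\|_2\le\|X_0^*\|_2+M_r T+M_A\int_0^t\|X_\theta(s)\|_2\,ds$, and apply Gr\"onwall's lemma to obtain the uniform pointwise bound $\|X_\theta(t)\|_2\le(\|X_0^*\|_2+M_r T)e^{M_A T}$; integrating gives $\overline{X}<\infty$.

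The main obstacle is $\overline{E}$, because the Riccati equation (\ref{eq:Ricatti_equ_fondamental}) is nonlinear (the $E^2/\lambda$ term), so Gr\"onwall does not apply directly. Here I would use the Linear-Quadratic interpretation recalled in the appendix: since the Riccati equation involves neither $\zeta$ nor $r$, the symmetric matrix $-E_\theta(t)$ is the state Hessian of the value function of the homogeneous problem of minimizing $\int_t^T(\|x(s)\|_2^2+\lambda\|u(s)\|_2^2)\,ds$ subject to $\dot x=A_\theta x+u$, $x(t)=x$. Nonnegativity of the integrand gives $x^\top(-E_\theta(t))x\ge 0$, while evaluating the cost at the admissible control $u\equiv 0$ gives $x^\top(-E_\theta(t))x\le\int_t^T\|x_h(s)\|_2^2\,ds\le Te^{2M_A T}\|x\|_2^2$, where $x_h$ solves $\dot x_h=A_\theta x_h$ with $x_h(t)=x$. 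This yields a spectral bound $\|E_\theta(t)\|_{op}\le Te^{2M_A T}$ uniform in $(t,\theta)$, and since a symmetric matrix has Frobenius norm at most $\sqrt{d}$ times its spectral norm, a uniform Frobenius bound, hence $\overline{E}<\infty$. A purely ODE-theoretic alternative, avoiding the value function, is to reverse time and bound the largest eigenvalue $\mu$ of the resulting nonnegative matrix $-E_\theta(T-\cdot)$, which satisfies $\mu'\le 1+2M_A\mu-\mu^2/\lambda$ (differentiating the Rayleigh quotient at a top eigenvector); the right-hand side is negative for large $\mu$, so $\mu$ stays below the larger root of the quadratic, again uniformly in $\theta$.

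With $\overline{E}$ available, $\overline{h_\zeta}$ is handled like the second step. The coefficient $\alpha_\theta=A_\theta^\top+E_\theta/\lambda$ is uniformly bounded in operator norm by $M_\alpha:=M_A+(Te^{2M_A T})/\lambda$, and the forcing term obeys $\|\beta_\theta(t,\zeta)\|_2\le\|E_\theta(t)\|_{op}\,(M_A\|\zeta\|_\infty+M_r+\|\dot\zeta(t)\|_2)$, which lies in $L^2$ uniformly in $\theta$ because the Sobolev embedding $H^1([0,T])\hookrightarrow C([0,T])$ bounds $\|\zeta\|_\infty$ while $\dot\zeta\in L^2$ by definition of $H^1$. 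Writing $h_\theta(t,\zeta)=\int_t^T\bigl(\alpha_\theta(s)h_\theta(s,\zeta)+\beta_\theta(s,\zeta)\bigr)\,ds$ from $h(T)=0$ and applying Gr\"onwall backward gives the uniform pointwise bound $\|h_\theta(t,\zeta)\|_2\le e^{M_\alpha T}\int_0^T\|\beta_\theta(s,\zeta)\|_2\,ds$, and one more integration yields $\overline{h_\zeta}<\infty$.

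Finally, the well-definedness of $S$ follows by bounding the closed form (\ref{eq:ExpressionS}) term by term: by Cauchy--Schwarz the first integrand is controlled by $2\,\|A_\theta\zeta+r_\theta-\dot\zeta\|_{L^2}\,\|h_\theta(\cdot,\zeta)\|_{L^2}$, where $\|A_\theta\zeta+r_\theta-\dot\zeta\|_{L^2}\le M_A\|\zeta\|_\infty\sqrt{T}+M_r\sqrt{T}+\|\dot\zeta\|_{L^2}$ is finite uniformly in $\theta$, and the second term is bounded by $\overline{h_\zeta}^2/\lambda$. Both are finite uniformly in $\theta$, so $\sup_{\theta\in\Theta}|S(\zeta;\theta,\lambda)|<\infty$, which is the claim. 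The only genuinely delicate point is the nonlinear Riccati bound of the second paragraph; everything else reduces to continuity on a compact set and linear Gr\"onwall estimates.
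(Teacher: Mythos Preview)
Your argument is correct and largely parallels the paper's, but the treatment of $E_\theta$ and $h_\theta$ is organized differently. For $\overline{A}$ and $\overline{X}$ you do essentially what the paper does (continuity on a compact set, then standard linear ODE bounds). For the Riccati solution $E_\theta$, your value-function argument---bounding $x^\top(-E_\theta(t))x$ between $0$ and the cost of the admissible control $u\equiv 0$---is exactly the content of the paper's Lemma~\ref{lem:Ricatti_existence}, so the core idea is the same. The genuine divergence is in how $h_\theta$ is handled: you observe that once $E_\theta$ is uniformly bounded the equation for $h_\theta$ is \emph{linear} with uniformly bounded coefficients and an $L^2$ forcing term, so a direct backward Gr\"onwall estimate suffices. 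The paper instead packages $E_\theta$, $h_\theta$, and the scalar $\alpha_\theta$ into a single $(d{+}1)\times(d{+}1)$ block matrix $E_\theta^g$ that itself solves an extended Riccati equation, and then invokes the Riccati boundedness lemma once for the whole object. Your route is more elementary and makes the dependence on $\zeta\in H^1$ (via $\|\zeta\|_\infty$ and $\|\dot\zeta\|_{L^2}$) fully explicit; the paper's route is more unified and emphasizes that $h_\theta$ is just another entry of a Riccati solution. Either way the final step---reading off $\sup_\theta|S(\zeta;\theta,\lambda)|<\infty$ from the closed form (\ref{eq:ExpressionS}) by Cauchy--Schwarz---is the same.
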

\begin{proof}
$\bar{A}<+\infty$ exists as $(t,\theta)\mapsto A_{\theta}(t)$ is
a continuous function on the compact set $[0,T]\times\Theta$. The existence
and extension theorem for IVP solution of linear ODE ensures that
$\forall\theta\in\Theta,\,\left\Vert X_{\theta}\right\Vert _{L^{2}}<+\infty$.
Moreover, solutions are continuous in $(t,\theta)$ if the vector
field is continuous in $(t,\theta)$. By analogy with theorem \ref{thm:general_lq_theorem_existence_unicity} in appendix,
we know that 
\[
E_{\theta}^{g}(t)\,:=\left(\begin{array}{cc}
E_{\theta}(.) & h_{\theta}(.,\zeta)^{T}\\
h_{\theta}(.,\zeta) & \alpha_{\theta}(.,\zeta)
\end{array}\right)
\]
with $\alpha_{\theta}(t,\zeta)=\int_{t}^{T}\left(2\left(A_{\theta}(s)\zeta(s)-\dot{\zeta}(s)+r_{\theta}(s)\right)^{T}h_{\theta}(s,\zeta)+\frac{1}{\lambda}h_{\theta}(s,\zeta){}^{T}h_{\theta}(s,\zeta)\right)ds$
is the ODE solution of the extended Riccati ODE 
\[
\begin{array}{l}
\dot{E_{\theta}^{g}}(t)=W^{1}-A_{\theta}^{1}(t)^{t}E_{\theta}^{g}(t)-E_{\theta}^{g}(t)A_{\theta}^{1}(t)-\frac{1}{\lambda}E_{\theta}^{g}(t)^{2}\\
E_{\theta}^{g}(T)=0_{d+1,d+1}
\end{array}
\]
where $W_{1}=\left(\begin{array}{cc}
I_{d} & 0\\
0 & 0
\end{array}\right)$, $A_{\theta}^{1}(t)=\left(\begin{array}{cc}
A_{\theta}(t) & r_{\theta}^{1}(t)\\
0 & 0
\end{array}\right)$ and $r_{\theta}^{1}(t)=A_{\theta}(t)\zeta(t)+r_{\theta}(t)-\dot{\zeta}(t)$. 

Because for all $\theta\in\Theta,$ $A_{\theta}\in L^{2}\left(\left[0,\, T\right],\mathbb{R}^{d\times d}\right)$
and $(A_{\theta}\zeta-\dot{\zeta}+r_{\theta})\in L^{2}\left(\left[0,\, T\right],\mathbb{R}^{d}\right)$
thanks to \lemref{Ricatti_existence} in appendix, $E_{\theta}^{g}(t)$
is bounded and continuous in $(t,\theta)$. Hence $h_{\theta},E_{\theta}$
are bounded on $[0,T]\times\Theta$. We conclude for $\theta\mapsto S(\zeta;\theta,\lambda)$
thanks to norm inequality.
\end{proof}
We complete our analysis by showing that $S$ is continuously differentiable in $\theta$. 
\begin{prop}
\label{prop:c1_caracterisation_S}Under conditions C1-C3 
\[
\forall X\in H^{1}(\left[0,\, T\right]),\,\theta\longmapsto S(X;\theta,\lambda)
\]
is continuous on $\Theta$. \\
Under conditions C1-C4, $S$ is $C^{1}$ on $\Theta$.\end{prop}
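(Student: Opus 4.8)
The plan is to exploit the closed form \eqref{eq:ExpressionS} and reduce the regularity of $S$ in $\theta$ to that of the two auxiliary functions $E_{\theta}$ and $h_{\theta}(\cdot,\zeta)$. Writing $g_{\theta}(t):=A_{\theta}(t)\zeta(t)+r_{\theta}(t)-\dot\zeta(t)$, the integrand of $S(\zeta;\theta,\lambda)$ is $2\,g_{\theta}(t)^{\top}h_{\theta}(t,\zeta)+\|h_{\theta}(t,\zeta)\|^{2}/\lambda$. Hence, once we control how $h_{\theta}$ (and later its $\theta$-derivative) depend on $\theta$, both continuity and the $C^{1}$ property of $S$ will follow by passing the limit, resp. the derivative, through the integral.

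\textbf{Continuity (C1--C3).} By Proposition~\propref{existence_A_E_h} and the argument used there, $(t,\theta)\mapsto E_{\theta}(t)$ and $(t,\theta)\mapsto h_{\theta}(t,\zeta)$ are continuous and uniformly bounded on the compact set $[0,T]\times\Theta$. Take $\theta_{n}\to\theta$. Then $A_{\theta_{n}}\to A_{\theta}$ and $r_{\theta_{n}}\to r_{\theta}$ uniformly on $[0,T]$ by C3 and compactness, and $h_{\theta_{n}}(\cdot)\to h_{\theta}(\cdot)$ uniformly, hence in $L^{2}$. The only factor that is not uniformly bounded in $t$ is $\dot\zeta\in L^{2}$, but it appears in $S$ only through the $L^{2}$ pairing $\langle\dot\zeta,h_{\theta_{n}}\rangle$, which converges because $h_{\theta_{n}}\to h_{\theta}$ in $L^{2}$. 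Every remaining term is a product of uniformly convergent, uniformly bounded factors, dominated by the $L^{1}$ function $2\bar h_{\zeta}\,\bigl(|A_{\theta}\zeta+r_{\theta}|+|\dot\zeta|\bigr)+\bar h_{\zeta}^{2}/\lambda$; dominated convergence then yields $S(\zeta;\theta_{n},\lambda)\to S(\zeta;\theta,\lambda)$.

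\textbf{$C^{1}$ part (C1--C4).} It suffices to show that $\theta\mapsto E_{\theta}$ and $\theta\mapsto h_{\theta}$ are $C^{1}$ with continuous $\theta$-derivatives, and then to differentiate under the integral. Differentiating the Riccati final value problem \eqref{eq:Ricatti_equ_fondamental} in $\theta_{k}$ produces a linear final value problem for $E^{(k)}:=\partial_{\theta_{k}}E_{\theta}$,
\[
\dot{E}^{(k)}=-(\partial_{\theta_{k}}A_{\theta})^{\top}E_{\theta}-A_{\theta}^{\top}E^{(k)}-E^{(k)}A_{\theta}-E_{\theta}\,\partial_{\theta_{k}}A_{\theta}-\tfrac{1}{\lambda}\big(E^{(k)}E_{\theta}+E_{\theta}E^{(k)}\big),\quad E^{(k)}(T)=0,
\]
whose coefficients are continuous and bounded by C4 and Proposition~\propref{existence_A_E_h}. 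Since the Riccati vector field is polynomial in $E$ and $C^{1}$ in $\theta$ (this is exactly where C4 is needed), the classical theorem on $C^{1}$-dependence of ODE solutions on a parameter identifies the solution of this linear system with $\partial_{\theta_{k}}E_{\theta}$ and shows it is continuous in $(t,\theta)$. Differentiating the $h$-equation similarly gives a linear final value problem for $h^{(k)}:=\partial_{\theta_{k}}h_{\theta}$ with forcing $\partial_{\theta_{k}}\beta_{\theta}(t,\zeta)=E^{(k)}(A_{\theta}\zeta+r_{\theta}-\dot\zeta)+E_{\theta}\bigl((\partial_{\theta_{k}}A_{\theta})\zeta+\partial_{\theta_{k}}r_{\theta}\bigr)$. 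Finally, differentiating the integrand of $S$ term by term and using the uniform bounds (the $\dot\zeta$-term again treated as an $L^{2}$ pairing with the bounded $h^{(k)}$) justifies differentiation under the integral, and continuity of $\partial_{\theta_{k}}S$ follows exactly as in the continuity step.

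\textbf{Main obstacle.} The delicate point is the rigorous differentiable dependence of the nonlinear Riccati solution $E_{\theta}$ on $\theta$, and its propagation to $h_{\theta}$ in the presence of the merely-$L^{2}$ target derivative $\dot\zeta$. Because $\dot\zeta$ is not continuous, the standard parameter-differentiation theorems, stated for continuous vector fields, do not apply directly to the $h$-equation. The clean fix is to treat the $\dot\zeta$-forcing as an inhomogeneous term and integrate it against the resolvent $R_{\theta}$ of \eqref{eq:reverse_time_R}, which depends only on the smooth data $A_{\theta},E_{\theta}$: variation of constants then expresses $h_{\theta}$ explicitly as an integral of $\beta_{\theta}(\cdot,\zeta)$ against $R_{\theta}$, so all $\theta$-differentiations act on continuous, bounded quantities while the $\dot\zeta$ factor is only ever integrated, making $\theta\mapsto h_{\theta}\in L^{2}$ of class $C^{1}$.
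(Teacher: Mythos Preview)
Your approach is essentially the same as the paper's: use the closed form \eqref{eq:ExpressionS}, reduce everything to the regularity of $h_{\theta}$ (and $E_{\theta}$) in $\theta$, and pass limits/derivatives through the integral. The paper's proof is much terser because it offloads the hard work to propositions in the supplementary materials asserting the continuity of $(\theta,X)\mapsto h_{\theta}(\cdot,X)$ and of $(\theta,X)\mapsto \partial_{\theta}h_{\theta}(\cdot,X)$, and then simply records the explicit gradient formula; you have instead written out the sensitivity equations for $E_{\theta}$ and $h_{\theta}$ and invoked the classical $C^{1}$-dependence theorem directly.

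One small point worth noting: your ``main obstacle'' paragraph is well taken, and the fix you propose---writing $h_{\theta}$ via variation of constants against the resolvent $R_{\theta}$ so that $\dot\zeta$ is only ever integrated---is exactly the device the paper uses elsewhere (Lemma~\ref{lem:linear_h_wrt_X}) to eliminate $\dot\zeta$ from the expression for $h_{\theta}$. The paper's own route to boundedness and continuity in Proposition~\ref{prop:existence_A_E_h} is slightly different: it embeds $E_{\theta}$ and $h_{\theta}$ into a single extended $(d{+}1)\times(d{+}1)$ Riccati system $E_{\theta}^{g}$ with $L^{2}$ coefficients, and appeals to the Riccati existence lemma. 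Either device resolves the $L^{2}$-forcing issue, so your argument goes through.
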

\begin{proof}
Since
\[
S(X;\theta,\lambda)=-\int_{0}^{T}\left(2\left(A_{\theta}(t)X(t)+r_{\theta}(t)-\dot{X}(t)\right)^{T}h_{\theta}(t,X)+\frac{1}{\lambda}\left\Vert h_{\theta}(t,X)\right\Vert ^{2}\right)dt
\]
Condition 3, jointly with proposition 1 and 4 in the supplementary
materials give the continuity of $\theta\longmapsto\left(t\longmapsto A_{\theta}(t)\right)$
and $\left(\theta,X\right)\longmapsto\left(t\longmapsto h_{\theta}(t,X)\right)$
on $\Theta$. This is enough to show the continuity of $\theta\longmapsto S(X;\theta,\lambda)$
on $\Theta$. Moreover, the gradient w.r.t $\theta$ of $S(X;\theta,\lambda)$
is equal to:

\[
\begin{array}{lll}
\nabla_{\theta}S(X;\theta,\lambda) & = & -2\int_{0}^{T}\frac{\partial\left(A_{\theta}(t).X+r_{\theta}(t)\right)}{\partial\theta}^{T}h_{\theta}(t,X)dt\\
 &  & +2\int_{0}^{T}\frac{\partial h_{\theta}(t,X)}{\partial\theta}^{T}\left(A_{\theta}(t).X+r_{\theta}(t)-\dot{X}+\frac{1}{\lambda}h_{\theta}(t,X)\right)dt
\end{array}
\]
In addition to the previous proposition, condition 4 and proposition
7 in supplementary material gives the continuity of $\left(\theta,X\right)\longmapsto\left(t\longmapsto\frac{\partial\left(h_{\theta}(t,X)\right)}{\partial\theta}\right)$
on $\Theta$. This is enough to show the continuous differentiability
of $S(X;\theta,\lambda)$ on $\Theta$. 
\end{proof}
The last regularity properties justifies the use of classical optimization
method to retrieve the minimum of $S$.

In the next proposition, we show that the criteria $S(X;\theta,\lambda)$ can be expressed without using the derivative $\dot{X}$ (thanks to the knowledge of the initial condition). As a consequence, our estimator is less sensible to the nonparametric noise than classical Two-Step estimators.
\begin{prop}
Under conditions 1 and 2, $\forall X\in H^{1}(\left[0,\, T\right])$
with $X(0)=X_{0}^{*}$, $S(X;\theta,\lambda)$ does not depend on
$\dot{X}$, i.e it is continuous nonlinear integral of $t\mapsto X(t)$. \end{prop}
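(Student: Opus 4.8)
The plan is to start from the closed-form expression (\ref{eq:ExpressionS}) and to remove every occurrence of $\dot{X}$ by one well-chosen change of the adjoint variable followed by a single integration by parts, the boundary terms being killed precisely by $h(T)=0$, $E_\theta(T)=0$ and the hypothesis $X(0)=X_0^*$. The key observation is that $\dot X$ enters $S$ in two distinct ways: explicitly, through $g:=A_\theta X+r_\theta-\dot X$ in the term $2g^\top h$, and implicitly, because the source $\beta_\theta(t,X)=E_\theta g$ of the final value problem defining $h$ also contains $\dot X$. A naive integration by parts on the explicit term alone does not conclude, since the adjoint $h$ itself still carries a hidden $\dot X$-dependence.

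First I would introduce the new variable $k:=h-E_\theta X$. Differentiating and substituting $\dot h=-\alpha_\theta h-E_\theta(A_\theta X+r_\theta-\dot X)$, the contribution $E_\theta\dot X$ cancels the term $E_\theta\dot X$ coming from $\tfrac{d}{dt}(E_\theta X)$, so that $\dot k$ no longer contains $\dot X$ explicitly. Collecting the coefficient of $X$ and using the Riccati equation $\dot E_\theta=I_d-A_\theta^\top E_\theta-E_\theta A_\theta-E_\theta^2/\lambda$ together with $\alpha_\theta=A_\theta^\top+E_\theta/\lambda$, I expect the identity $\alpha_\theta E_\theta+E_\theta A_\theta+\dot E_\theta=I_d$ to hold; this collapses the entire $X$-coefficient to $-I_d$ and yields the $\dot X$-free final value problem $\dot k=-\alpha_\theta k-X-E_\theta r_\theta$ with $k(T)=h(T)-E_\theta(T)X(T)=0$. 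Checking this algebraic identity is the one genuine computation and is the heart of the argument: it says that the whole $\dot X$-dependence of $h$ is carried by the affine term $E_\theta X$.

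Next I would substitute $h=k+E_\theta X$ back into (\ref{eq:ExpressionS}). Since $k$ solves a $\dot X$-free equation and $E_\theta X$ is $\dot X$-free, $h$ is now an affine, hence $\dot X$-free, functional of $X$, so the terms $(A_\theta X+r_\theta)^\top h$ and $\tfrac1\lambda\lVert h\rVert^2$ no longer involve $\dot X$. The only surviving derivative is in $-2\dot X^\top h$, which I would split as $-2\dot X^\top k-2\dot X^\top E_\theta X$ and integrate by parts. For the first piece, $\int\dot X^\top k=[X^\top k]_0^T-\int X^\top\dot k$, the boundary term vanishes at $T$ because $k(T)=0$ and reduces at $0$ to $-(X_{0}^{*})^{\top}k(0)$, while $\dot k$ is $\dot X$-free. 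For the second piece I would use the symmetry of $E_\theta$ to write $\dot X^\top E_\theta X=\tfrac12\tfrac{d}{dt}(X^\top E_\theta X)-\tfrac12 X^\top\dot E_\theta X$; after integration the boundary contribution is $-\tfrac12 (X_{0}^{*})^{\top}E_\theta(0)X_{0}^{*}$, thanks to $E_\theta(T)=0$ and $X(0)=X_0^*$, and the remaining integral is again $\dot X$-free.

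Collecting everything, $S(X;\theta,\lambda)$ becomes a sum of a boundary term depending only on the constant $X_0^*$ and of integrals whose integrands involve only the pointwise values $X(t)$, the matrices $A_\theta,E_\theta$, and the vectors $r_\theta,k$, none of which carries a derivative of $X$; since $k$, and hence $h$, is affine in $X$, the functional is quadratic, i.e.\ a continuous nonlinear integral of $t\mapsto X(t)$, as claimed. The main obstacle is purely algebraic, namely guessing the substitution $k=h-E_\theta X$ and verifying the Riccati identity $\alpha_\theta E_\theta+E_\theta A_\theta+\dot E_\theta=I_d$; once this is in hand, one integration by parts with the three boundary conditions $h(T)=0$, $E_\theta(T)=0$ and $X(0)=X_0^*$ finishes the proof, and it is exactly the last of these that encodes the announced role of the known initial condition.
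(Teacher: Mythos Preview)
Your proof is correct and rests on the same mechanism as the paper's: both recognize that the $\dot X$-dependence of $h$ is exactly the term $E_\theta X$, and both remove the remaining explicit $\dot X$ by integration by parts, using $h(T)=0$, $E_\theta(T)=0$ and $X(0)=X_0^*$ to kill the boundary terms.

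The organization differs slightly. The paper proceeds via the explicit Duhamel representation of Lemma~\ref{lem:linear_h_wrt_X}, writing $h_\theta(t,X)=\int_t^T R_\theta(T-t,T-s)X(s)\,ds+E_\theta(t)X(t)+\text{const}$, and then carries out the integration by parts of Lemma~\ref{lem:contolled_X_der_h_by_X_h} directly on $\int\dot X^\top h$, handling the residual $\int X^\top E_\theta\dot X$ with a second IBP using the symmetry of $E_\theta$. Your substitution $k=h-E_\theta X$ is the ODE-level version of the same observation: rather than invoking the resolvent $R_\theta$, you verify the Riccati identity $\alpha_\theta E_\theta+E_\theta A_\theta+\dot E_\theta=I_d$ directly, which shows $k$ solves a $\dot X$-free final value problem, and then split $\int\dot X^\top h$ according to $h=k+E_\theta X$ before integrating by parts. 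Your route is a bit more self-contained (no resolvent needed) and makes the cancellation mechanism more transparent; the paper's route has the advantage of producing an explicit integral representation of $h$ in terms of $X$, which it reuses later for the consistency and asymptotic-normality arguments.
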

\begin{proof}
We will show $S(X;\theta,\lambda)$ can be written using only $X$
and not $\dot{X}$. First of all we will use  Lemma (\ref{lem:contolled_X_der_h_by_X_h})
to get rid of $\dot{X}$ in $\int_{0}^{T}\dot{X}(t)^{T}h_{\theta}(t,X)dt$,
it gives: 
\begin{equation}
\begin{array}{lll}
\int_{0}^{T}\dot{X}(t)^{T}h_{\theta}(t,X)dt & = & F_{1,\theta}(X)+F_{2,\theta}(X)+F_{3,\theta}(X)\\
 &  & -X_{0}^{*T}\int_{0}^{T}R_{\theta}(T,T-s)E_{\theta}(s)r_{\theta}(s)ds\\
 &  & -\frac{1}{2}X_{0}^{*T}E_{\theta}(0)X_{0}^{*}
\end{array}\label{eq:_X_der_without_der}
\end{equation}
with

\[
\left\{ \begin{array}{l}
F_{1,\theta}(X)=-X_{0}^{T}\int_{0}^{T}R_{\theta}(T,T-s)X(s)ds\\
F_{2,\theta}(X)=\int_{0}^{T}X(t)^{T}\left(\alpha_{\theta}(t)h_{\theta}(t,X)dt+\left(A_{\theta}(t)X(t)+r_{\theta}(t)\right)\right)dt\\
F_{3,\theta}(X)=\frac{1}{2}\int_{0}^{T}X(t)^{T}\dot{E_{\theta}(t)}X(t)dt
\end{array}\right.
\]
And so we can write $S(X;\theta,\lambda)$ under the form

\[
\begin{array}{lll}
S(X;\theta,\lambda) & = & -\int_{0}^{T}\left(2\left(A_{\theta}(t)X(t)+r_{\theta}(t)\right)^{T}h_{\theta}(t,X)+\frac{1}{\lambda}h_{\theta}(t,X)^{T}h_{\theta}(t,X)\right)dt\\
 &  & +F_{1,\theta}(X)+F_{2,\theta}(X)+F_{3,\theta}(X)\\
 &  & -X_{0}^{*T}\int_{0}^{T}R_{\theta}(T,T-s)E_{\theta}(s)r_{\theta}(s)ds\\
 &  & -\frac{1}{2}X_{0}^{*T}E_{\theta}(0)X_{0}^{*}
\end{array}
\]
since from Lemma (\ref{lem:linear_h_wrt_X}) we have the affine dependence
of $h_{\theta}(t,X)$ w.r.t $X$ through the formula: 
\[
h_{\theta}(t,X)=\int_{t}^{T}R_{\theta}(T-t,T-s)X(s)ds+E_{\theta}(t)X(t)+\int_{t}^{T}R_{\theta}(T-t,T-s)E_{\theta}(s)r_{\theta}(s)ds
\]
we see $S(X;\theta,\lambda)$ does not depend on $\dot{X}$.
\end{proof}

\subsection{Consistency}

As we have seen previously, conditions 1 and 3 ensure the existence
of $S(\hat{X};\theta,\lambda)$ and $S(X^{*};\theta,\lambda)$ for
all $\theta\in\Theta$. We derive the consistency of $\hat{\theta}^{T}$
by showing the uniform convergence of the criterion $S\left(\hat{X};\theta,\lambda\right)$,
and by insuring that $\theta^{*}$ is a unique and isolated global
minima of $S\left(X^{*};\theta,\lambda\right)$. Condition 2 is then
sufficient to show that $S\left(X^{*};\theta,\lambda\right)$ characterizes
well $\theta^{*}$, as global unique minimum.  Hence,  identifiability
and convergence in supremum norm are sufficient to imply consistency
(theorem 5.7 in \cite{Vaart1998}).
\begin{prop}
\label{prop:asym_identifiability}For all $X$ in $H^{1}(\left[0,\, T\right])$,
$S(X;\theta,\lambda)\geq0$ and under conditions C1 and C2 we have
\[
S(X^{*};\theta,\lambda)=0\Longleftrightarrow\theta=\theta^{*}
\]
\end{prop}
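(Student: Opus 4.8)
The plan is to read everything off the variational definition (\ref{eq:ProfiledCost}) of $S$ together with the attainment of the infimum granted by the Theorem and Definition of $S$. Non-negativity is immediate and requires no hypothesis: for every admissible control $u$, the integrand in (\ref{eq:LQCost}) is a sum of two non-negative terms (recall $\lambda>0$), so $C(X;u,\theta,\lambda)\ge 0$ for all $u$, and passing to the infimum over $u\in L^2$ gives $S(X;\theta,\lambda)\ge 0$. For the equivalence I would treat the two implications separately, the whole point being that the floor $S=0$ can only be reached when \emph{both} the tracking error and the perturbation vanish, which decouples the problem back to the unperturbed model.

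For the easy direction, assuming $\theta=\theta^{*}$, I would exhibit a control achieving zero cost. Taking $u=0$, the controlled trajectory $X_{\theta^{*},0}$ solves the unperturbed ODE (\ref{eq:LinearODEmodel}) with parameter $\theta^{*}$ and initial value $X_{0}^{*}$; by uniqueness of the IVP solution this is precisely $X^{*}=X_{\theta^{*}}$. Hence $C(X^{*};0,\theta^{*},\lambda)=0$, and since $S\ge 0$ this forces $S(X^{*};\theta^{*},\lambda)=0$.

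For the converse, suppose $S(X^{*};\theta,\lambda)=0$. By the Theorem and Definition of $S$ the infimum is attained at the unique optimal control $\bar{u}_{\theta,\lambda}$, so the key identity to write down is
\[
0=C(X^{*};\bar{u}_{\theta,\lambda},\theta,\lambda)=\int_{0}^{T}\left\{ \|X^{*}(t)-X_{\theta,\bar{u}_{\theta,\lambda}}(t)\|_{2}^{2}+\lambda\|\bar{u}_{\theta,\lambda}(t)\|_{2}^{2}\right\} dt .
\]
Both integrands being non-negative and $\lambda>0$, each must vanish almost everywhere, giving $\bar{u}_{\theta,\lambda}=0$ in $L^{2}$ and $X_{\theta,\bar{u}_{\theta,\lambda}}=X^{*}$ a.e. Since the optimal control is zero, the trajectory $X_{\theta,\bar{u}_{\theta,\lambda}}$ coincides with the uncontrolled solution $X_{\theta}$ of (\ref{eq:LinearODEmodel}) with initial value $X_{0}^{*}$, so that $X_{\theta}=X^{*}=X_{\theta^{*}}$. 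Condition C2 then yields $\theta=\theta^{*}$.

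There is no serious obstacle here, as the argument is essentially the observation above about when the cost attains its minimum value. The only points deserving a little care are the appeal to the Theorem and Definition of $S$ to ensure that the infimum is genuinely attained (so a value of zero really corresponds to an honest minimizer rather than just an approaching sequence), and the upgrade from the almost-everywhere equality of the two continuous ODE solutions $X_{\theta}$ and $X^{*}$ to a pointwise equality on $[0,T]$ before invoking the identifiability condition C2.
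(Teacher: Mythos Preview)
Your proof is correct and follows essentially the same route as the paper's: exhibit $u=0$ for the forward implication, and for the converse use that the zero value of the attained cost forces both integrands to vanish, whence $\bar u=0$, $X_{\theta}=X^{*}$, and C2 concludes. If anything, you are slightly more careful than the paper in explicitly invoking attainment of the infimum and the passage from a.e.\ equality to pointwise equality of continuous trajectories.
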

\begin{proof}
If $\theta=\theta^{*}$, then $u\equiv0$ is the cost which minimizes
\[
C\left(X^{*};u,\theta^{*},\lambda\right)=\int_{0}^{T}\left\Vert X^{*}(t)-X_{\theta^{*},u}(t)\right\Vert _{2}^{2}dt+\lambda\int_{0}^{T}\left\Vert u(t)\right\Vert _{2}^{2}dt
\]
 and in that case $S(X^{*};\theta^{*},\lambda)=\inf_{u\in L^{2}}C\left(X^{*};u,\theta^{*},\lambda\right)=0$. 

Conversely, let $\theta^{0}$ be such that $S(X^{*};\theta^{0},\lambda)=0$.
By definition, this means that $\int_{0}^{T}\left\Vert X^{*}(t)-X_{\theta^{0},u}(t)\right\Vert _{2}^{2}dt+\lambda\int_{0}^{T}\left\Vert u(t)\right\Vert _{2}^{2}dt=0$.
A consequence is that $u=0\, a.e$ and $X_{\theta^{*}u=0}(t)=X_{\theta^{0},u=0}(t)\, a.e$;
by the identifiability condition we get that $\theta^{0}=\theta^{*}$. \end{proof}
\begin{thm}
\label{thm:consistency}Under conditions 1, 2, 3 and if $\widehat{X}$
is consistent in probability (in $L^{2}-$norm sense), we have 
\[
\widehat{\theta}^{T}\overset{P}{\rightarrow}\theta^{*}
\]
\end{thm}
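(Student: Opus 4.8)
The plan is to treat $\widehat{\theta}^{T}$ as a standard M-estimator and invoke the classical consistency argument for minimum-contrast estimators (theorem 5.7 in \cite{Vaart1998}). Two ingredients are required: first, that the limit criterion $\theta\mapsto S(X^{*};\theta,\lambda)$ possesses a \emph{well-separated} minimum at $\theta^{*}$; and second, that the random criterion converges uniformly, $\sup_{\theta\in\Theta}\left|S(\widehat{X};\theta,\lambda)-S(X^{*};\theta,\lambda)\right|\overset{P}{\rightarrow}0$. Given these two facts, the usual sandwiching inequality forces $\widehat{\theta}^{T}$ into any prescribed neighbourhood of $\theta^{*}$ with probability tending to one.

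The well-separation is almost immediate from the results already established. By \propref{asym_identifiability}, $S(X^{*};\cdot,\lambda)\geq0$ and vanishes only at $\theta^{*}$, and by \propref{c1_caracterisation_S} it is continuous on $\Theta$, which is compact by C1. Hence for every $\varepsilon>0$ the set $\{\theta\in\Theta:\left\Vert \theta-\theta^{*}\right\Vert \geq\varepsilon\}$ is compact, the continuous map $S(X^{*};\cdot,\lambda)$ attains its infimum there, and that infimum is strictly positive since it cannot be attained at $\theta^{*}$. This yields $\inf_{\left\Vert \theta-\theta^{*}\right\Vert \geq\varepsilon}S(X^{*};\theta,\lambda)>0=S(X^{*};\theta^{*},\lambda)$, i.e. a well-separated minimum.

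The heart of the argument, and the main obstacle, is the uniform convergence, which must be derived solely from the $L^{2}$-consistency $\left\Vert \widehat{X}-X^{*}\right\Vert _{L^{2}}\overset{P}{\rightarrow}0$; in particular no control on $\dot{\widehat{X}}$ is available. This is precisely where the derivative-free representation of $S$ proved above becomes essential: writing $\Delta=\widehat{X}-X^{*}$ (with $\Delta(0)=0$, since both curves meet the known initial condition $X_{0}^{*}$), I would substitute the affine formula for $h_{\theta}(\cdot,X)$ from \lemref{linear_h_wrt_X} into that representation, so that $S(X;\theta,\lambda)$ appears as a continuous quadratic-plus-affine functional of $X\in L^{2}$ whose kernels are built only from $A_{\theta}$, $r_{\theta}$, $E_{\theta}$, $\dot{E}_{\theta}$ and $R_{\theta}$. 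Expanding $S(\widehat{X};\theta,\lambda)-S(X^{*};\theta,\lambda)$ then leaves a sum of terms linear and quadratic in $\Delta$, each bounded by $\left\Vert \Delta\right\Vert _{L^{2}}$ or $\left\Vert \Delta\right\Vert _{L^{2}}^{2}$ times a product of the sup-norms of these kernels. By \propref{existence_A_E_h}, together with the continuity of $A_{\theta},r_{\theta},E_{\theta},R_{\theta}$ in their arguments on the compact $[0,T]\times\Theta$ (conditions C1, C3), all these kernel norms are bounded uniformly in $\theta$; note that $E_{\theta}$, $\dot{E}_{\theta}$ and $R_{\theta}$ are deterministic, which is what renders the bound data-independent.

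Collecting these estimates gives a deterministic bound $\sup_{\theta\in\Theta}\left|S(\widehat{X};\theta,\lambda)-S(X^{*};\theta,\lambda)\right|\leq C_{1}\left\Vert \Delta\right\Vert _{L^{2}}+C_{2}\left\Vert \Delta\right\Vert _{L^{2}}^{2}$ with $C_{1},C_{2}$ finite and independent of $\theta$, whose right-hand side tends to zero in probability by the consistency of $\widehat{X}$. Combining this uniform convergence with the well-separated minimum established above then delivers $\widehat{\theta}^{T}\overset{P}{\rightarrow}\theta^{*}$ via theorem 5.7 in \cite{Vaart1998}. The delicate bookkeeping is confined to the quadratic expansion in $\Delta$ and to verifying the uniform kernel bounds; conceptually, the only genuine subtlety is that the argument truly needs the derivative-free form of $S$, since an $L^{2}$-consistent proxy offers no handle on $\dot{\widehat{X}}$.
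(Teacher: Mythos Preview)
Your proposal is correct and follows essentially the same route as the paper's proof: both establish the well-separated minimum via \propref{asym_identifiability} and continuity on the compact $\Theta$, both obtain uniform convergence by exploiting the derivative-free representation of $S$ together with the affine dependence of $h_{\theta}(\cdot,X)$ on $X$ (\lemref{linear_h_wrt_X}) so that only $\left\Vert \widehat{X}-X^{*}\right\Vert _{L^{2}}$ is needed, and both conclude via theorem~5.7 in \cite{Vaart1998}. The only organisational difference is that the paper packages the uniform bound into two intermediate steps (Propositions \ref{prop:controlled_S_by_h_X} and \ref{prop:controlled_h_by_X}, bounding first in terms of $\left\Vert \widehat{h}_{\theta}-h_{\theta}^{*}\right\Vert _{L^{2}}$ and then reducing this to $\left\Vert \widehat{X}-X^{*}\right\Vert _{L^{2}}$), whereas you substitute the affine formula for $h_{\theta}$ at the outset and expand directly; the resulting inequalities are the same in substance.
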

\begin{proof}
Using proposition \propref{controlled_S_by_h_X}, we have

\[
\begin{array}{l}
\left|S(X;\theta,\lambda)-S(X^{*};\theta,\lambda)\right|\\
\leq2\left(\bar{A}\bar{h}+K_{1}+K_{2}\left\Vert \widehat{h}_{\theta}\right\Vert _{L^{2}}+K_{3}\left\Vert \widehat{X}\right\Vert _{L^{2}}\right)\left\Vert X^{*}-\widehat{X}\right\Vert _{L^{2}}\\
+\left(\bar{A}\left\Vert \widehat{X}\right\Vert _{L^{2}}+K_{4}+\frac{1}{\lambda}\left(\left\Vert \widehat{h}_{\theta}\right\Vert _{L^{2}}+\bar{h}\right)\right)\left\Vert h_{\theta}^{*}-\widehat{h}_{\theta}\right\Vert _{L^{2}}
\end{array}
\]
with 
\[
\begin{array}{l}
K_{1}=\sqrt{d}\left\Vert X_{0}\right\Vert _{2}\bar{R}+\sqrt{d}\bar{A}\bar{X}+\sqrt{d}\bar{\dot{E}}\overline{X}\\
K_{2}=\sqrt{d}\left(\bar{A}+\frac{\bar{E}}{\lambda}\right)\\
K_{3}=\sqrt{d}\bar{A}+\sqrt{d}\bar{\dot{E}}\\
K_{4}=\sqrt{d}\left(\bar{A}+\frac{\bar{E}}{\lambda}\right)\bar{X}
\end{array}
\]
and 
\[
\begin{array}{l}
\bar{R}=\sup_{\theta\in\Theta}\left\Vert R_{\theta}(T,T-.)\right\Vert _{L^{2}}\\
\bar{\dot{E}}=\sup_{\theta\in\Theta}\left\Vert \dot{E}_{\theta}\right\Vert _{L^{2}}
\end{array}
\]
by using the same notation as in proposition \ref{prop:existence_A_E_h}.
Proposition \propref{controlled_h_by_X} permits to bound $\left\Vert h_{\theta}^{*}-\widehat{h}_{\theta}\right\Vert _{L^{2}}$
with $\left\Vert \widehat{X}-X^{*}\right\Vert _{L^{2}}$ as 
\[
\begin{array}{l}
\left\Vert \widehat{h_{\theta}}-h_{\theta}^{*}\right\Vert _{L^{2}}\leq K_{5}\left\Vert \widehat{X}-X^{*}\right\Vert _{L^{2}}\\
\textrm{with : }K_{5}=\sqrt{d}\left(Tde^{\sqrt{d}\left(\overline{A}+\frac{\overline{E}}{\lambda}\right)T}+\overline{E}\right)
\end{array}
\]
We obtain
\[
\begin{array}{l}
\begin{array}{l}
\left|S(X;\theta,\lambda)-S(X^{*};\theta,\lambda)\right|\leq\left(\left(2K_{2}+\frac{K_{5}}{\lambda}\right)\left\Vert \widehat{h_{\theta}}\right\Vert _{L^{2}}+\left(2K_{3}+K_{5}\bar{A}\right)\left\Vert \widehat{X}\right\Vert _{L^{2}}+K_{7}\right)\left\Vert X^{*}-\widehat{X}\right\Vert _{L^{2}}\\
\textrm{with: }K_{7}=2\left(\bar{A}\bar{h}+K_{1}\right)+K_{5}\left(K_{4}+\frac{\bar{h}}{\lambda}\right)
\end{array}\end{array}
\]
We can control $\left\Vert \widehat{X}\right\Vert _{L^{2}}\leq\left\Vert \widehat{X}-X^{*}\right\Vert _{L^{2}}+\left\Vert X^{*}\right\Vert _{L^{2}}$,
which proves that if is $\widehat{X}$ is consistent, then $\sup_{\theta\in\Theta}\left|S(X;\theta,\lambda)-S(X^{*};\theta,\lambda)\right|=o_{P}(1)$.
Application of the proposition \propref{asym_identifiability} gives
us the identifiability criteria. Hence we conclude by using the theorem
5.7 in \cite{Vaart1998}.
\end{proof}

\section{Asymptotics of $\widehat{\theta}^{T}$\label{sec:Asymptotics-of}}

Our objective in this part is to derive the proper rate of convergence
of the Tracking Estimator, as well as its asymptotic distribution.
The properties of the estimator depends on the behavior of the nonparametric
estimate $\hat{X}$ used for the approximation of $X^{*}$. In order
to fix ideas, we consider a regression spline, with a B-Spline decomposition
of dimension $K$ (increasing with $n$). That is we consider that
$\widehat{X}$ is defined as
\[
\widehat{X}(t)=\sum_{k=1}^{K}\beta_{kK}p_{kK}(t)=\beta_{K}^{T}p_{K}(t)
\]
where $\beta_{K}$ is computed by least-squares. It is likely that
we could derive the same kind of results for different estimates,
such as Local Polynomial or Smoothing Splines, as they behave similarly
asymptotically, and that we show that the Tracking Estimate can be
approximated by a plug-in estimate of a specific linear functional
of $\hat{X}$. We introduce additional regularity conditions needed
for the asymptotics: 
\begin{description}
\item [{C5:}] The Hessian $\frac{\partial^{2}S(X^{*};\theta,\lambda)}{\partial\theta^{T}\partial\theta}$
is nonsingular at $\theta=\theta^{*}$. 
\item [{C6:}] The observations $\left(t_{i},Y_{i}\right)$ are i.i.d with
$Var(Y_{i}\mid t_{i})=\sigma I_{d}$ with $\sigma<+\infty$
\item [{C7:}] Observations time $t_{i}$ are uniformely distributed on
$\left[0\,,\, T\right]$
\item [{C8:}] It exists $s\geq1$ such that $t\longmapsto A_{\theta^{*}}(t)$,
$t\longmapsto r_{\theta^{*}}(t)$ are $C^{s-1}\left(\left[0\,,\, T\right],\mathbb{R}^{d}\right)$
and $\sqrt{n}K^{-s}\longrightarrow0$ and $\frac{K^{s}}{n}\longrightarrow0$
\end{description}
Under these additional conditions, we show that $\hat{\theta}^{T}$
reaches the parametric convergence rate, and that it is asymptotically
normal. Our strategy consists in two stages:
\begin{lyxlist}{00.00.0000}
\item [{Stage~1~(Proposition~\ref{prop:linear_decomposition})}] We show that
$\hat{\theta}^{T}-\theta^{*}$ behaves asymptotically as the difference
$\Gamma(\widehat{X})-\Gamma(X^{*})$ where $\Gamma$ is a continuous
linear functional,

\item [{Stage~2~(Proposition~\ref{prop:newey_result})}] We
prove that $\Gamma\left(\widehat{X}-X^{*}\right)$ is asymptotically
normal for regression splines, based on the properties of plug-in
estimators computed with series estimators and derived in \cite{Newey1997}.
 \end{lyxlist}

\begin{rem}
Condition C5 is a classic feature for $M-$estimator to ensure local
identifiability, here: 
\[
\begin{array}{lll}
\frac{\partial^{2}S(X^{*};\theta^{*},\lambda)}{\partial\theta^{T}\partial\theta} & = & 2\int_{0}^{T}\frac{\partial\left(A_{\theta^{*}}(t)X^{*}+r_{\theta^{*}}(t)\right)}{\partial\theta}^{T}\frac{\partial h_{\theta^{*}}^{*}(t)}{\partial\theta}+\frac{\partial h_{\theta^{*}}^{*}(t)}{\partial\theta}^{T}\frac{\partial\left(A_{\theta^{*}}(t)X^{*}+r_{\theta^{*}}(t)\right)}{\partial\theta}dt\\
 & + & \frac{2}{\lambda}\int_{0}^{T}\frac{\partial h_{\theta^{*}}^{*}(t)}{\partial\theta}^{T}\frac{\partial h_{\theta^{*}}^{*}(t)}{\partial\theta}dt
\end{array}
\]
that is why we only require $\forall\left(t,\theta\right)\in\left[0\,,\, T\right]\times\Theta,\:(t,\theta)\longmapsto A_{\theta}(t)$
and $(t,\theta)\longmapsto r_{\theta}(t)$ to be $C^{1}$ and not
$C^{2}$
\end{rem}

\begin{rem}
Condition C8 is a classic feature for non-parametric estimator to
ensure optimal convergence rate of $\widehat{X}$ using bias-variance
tradeoff.\end{rem}
\begin{prop}
\label{prop:linear_decomposition}Under conditions 1-5, we have :
\[
\widehat{\theta}^{T}-\theta^{*}=2\frac{\partial^{2}S(X^{*};\theta^{*},\lambda)}{\partial\theta^{T}\partial\theta}^{-1}\left(\Gamma(\widehat{X})-\Gamma(X^{*})\right)+o_{P}(1)
\]
where $\Gamma\::\: C\left(\left[0\,,\, T\right],\mathbb{R}^{d}\right)\:\rightarrow\mathbb{R}^{p}$
is a linear functional defined by 
\begin{equation}
\Gamma(X)=\int_{0}^{T}\left(\frac{\partial\left(A_{\theta*}(t).X^{*}\right)}{\partial\theta}+\frac{1}{\lambda}\frac{\partial h_{\theta^{*}}(t,X^{*})}{\partial\theta}\right)^{T}\left(\int_{t}^{T}R_{\theta^{*}}(T-t,T-s)X(s)ds\right)dt.\label{eq:DefinitionGamma}
\end{equation}
 $R_{\theta^{*}}$ is defined \textup{by (\ref{eq:reverse_time_R}).}\end{prop}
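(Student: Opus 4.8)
The plan is to treat $\widehat{\theta}^{T}$ as a standard $M$-estimator and to linearise its defining first-order condition twice: first in $\theta$ around $\theta^{*}$, then in the nonparametric argument around $X^{*}$. By the consistency result (Theorem~\ref{thm:consistency}) and assuming $\theta^{*}$ interior to $\Theta$, with probability tending to one $\widehat{\theta}^{T}$ lies in the interior, so the stationarity condition $\nabla_{\theta}S(\widehat{X};\widehat{\theta}^{T},\lambda)=0$ holds; moreover Proposition~\ref{prop:asym_identifiability} together with $S\ge0$ forces $\theta^{*}$ to be an interior minimiser of $\theta\mapsto S(X^{*};\theta,\lambda)$, hence $\nabla_{\theta}S(X^{*};\theta^{*},\lambda)=0$. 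A componentwise mean-value expansion of $\theta\mapsto\nabla_{\theta}S(\widehat{X};\theta,\lambda)$ at $\theta^{*}$ then yields
\[
0=\nabla_{\theta}S(\widehat{X};\theta^{*},\lambda)+\frac{\partial^{2}S(\widehat{X};\tilde{\theta},\lambda)}{\partial\theta^{T}\partial\theta}\,(\widehat{\theta}^{T}-\theta^{*})
\]
for some intermediate $\tilde{\theta}$ lying between $\widehat{\theta}^{T}$ and $\theta^{*}$.

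Next I would show the intermediate Hessian converges. Using $\widehat{\theta}^{T}\overset{P}{\to}\theta^{*}$, the $L^{2}$-consistency of $\widehat{X}$, and the joint continuity in $(\theta,X)$ of the first-order quantities from Propositions~\ref{prop:existence_A_E_h} and~\ref{prop:c1_caracterisation_S}, one obtains $\partial^{2}S(\widehat{X};\tilde{\theta},\lambda)/\partial\theta^{T}\partial\theta\overset{P}{\to}H:=\partial^{2}S(X^{*};\theta^{*},\lambda)/\partial\theta^{T}\partial\theta$, which is invertible by C5. Here the subtlety that only $C^{1}$ regularity of $A_{\theta},r_{\theta}$ is assumed (C4) is resolved exactly as in the remark following C8: the Hessian is read off the Gauss--Newton--type expression, and the second-order-in-$\theta$ terms are harmless because at $(\theta^{*},X^{*})$ both residuals vanish, $h_{\theta^{*}}(\cdot,X^{*})\equiv0$ and $A_{\theta^{*}}X^{*}+r_{\theta^{*}}-\dot{X}^{*}\equiv0$ (since $X^{*}$ solves the ODE). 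Continuous mapping then gives $(\partial^{2}S(\widehat{X};\tilde{\theta},\lambda)/\partial\theta^{T}\partial\theta)^{-1}\overset{P}{\to}H^{-1}$, so it remains to analyse $\nabla_{\theta}S(\widehat{X};\theta^{*},\lambda)=\nabla_{\theta}S(\widehat{X};\theta^{*},\lambda)-\nabla_{\theta}S(X^{*};\theta^{*},\lambda)$.

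The core of the argument, and the main obstacle, is to linearise the map $X\mapsto\nabla_{\theta}S(X;\theta^{*},\lambda)$ at $X^{*}$ and to identify the Fréchet derivative with $-2\Gamma$. Because $h_{\theta}(t,X)$ is \emph{affine} in $X$ (Lemma~\ref{lem:linear_h_wrt_X}, whose linear part is built from the kernel $\int_{t}^{T}R_{\theta}(T-t,T-s)X(s)\,ds+E_{\theta}(t)X(t)$), the gradient is a quadratic functional of $X$, so its increment splits \emph{exactly} into a linear term plus a purely quadratic remainder. Differentiating the two integrals in the gradient formula of Proposition~\ref{prop:c1_caracterisation_S} in the direction $\delta=\widehat{X}-X^{*}$, the two vanishings $h_{\theta^{*}}(\cdot,X^{*})=0$ and $A_{\theta^{*}}X^{*}+r_{\theta^{*}}-\dot{X}^{*}=0$ annihilate every cross term in which a residual factor is merely differentiated, leaving an integral that pairs the sensitivity $\tfrac{\partial(A_{\theta^{*}}X^{*}+r_{\theta^{*}})}{\partial\theta}+\tfrac{1}{\lambda}\tfrac{\partial h_{\theta^{*}}(t,X^{*})}{\partial\theta}$ against the linear operator acting on $\delta$. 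The residual $\dot{\delta}$ contributions and the $E_{\theta^{*}}(t)\delta(t)$ piece of the linear operator are then eliminated by the same integration-by-parts manipulations used to show $S$ is independent of $\dot{X}$ (exploiting the Riccati equation for $E_{\theta}$, the defining equation~(\ref{eq:reverse_time_R}) for $R_{\theta}$, and the boundary conditions), producing precisely $-2\,\Gamma(\delta)=-2(\Gamma(\widehat{X})-\Gamma(X^{*}))$ with $\Gamma$ as in~(\ref{eq:DefinitionGamma}). This reduction, where all the Riccati/resolvent bookkeeping and the derivative cancellations are carried out, is the delicate part.

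Finally I would control the remainder. Boundedness of $R_{\theta^{*}},E_{\theta^{*}}$ and of the $\theta$-derivatives from Propositions~\ref{prop:existence_A_E_h}--\ref{prop:c1_caracterisation_S} shows that $\Gamma$ is a continuous linear functional and that the quadratic part of the gradient increment is $O_{P}(\|\widehat{X}-X^{*}\|_{L^{2}}^{2})=o_{P}(\|\widehat{X}-X^{*}\|_{L^{2}})$. Combining the three steps,
\[
\widehat{\theta}^{T}-\theta^{*}=-H^{-1}\nabla_{\theta}S(\widehat{X};\theta^{*},\lambda)+o_{P}(\|\widehat{X}-X^{*}\|_{L^{2}})=2\,H^{-1}\bigl(\Gamma(\widehat{X})-\Gamma(X^{*})\bigr)+o_{P}(1),
\]
which is the claimed decomposition, the sharper $o_{P}(\|\widehat{X}-X^{*}\|_{L^{2}})$ error being exactly what Stage~2 needs to upgrade the statement to the $\sqrt{n}$ rate.
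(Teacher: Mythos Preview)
Your overall architecture---linearise the first-order condition in $\theta$, then linearise in $X$, and identify the Fr\'echet derivative with $-2\Gamma$ via integration by parts and Lemma~\ref{lem:linear_h_wrt_X}---is the same as the paper's. The difference lies in \emph{how} the $\theta$-linearisation is carried out, and this is where your proposal has a genuine gap under the stated hypotheses.

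You write a componentwise mean-value expansion of $\theta\mapsto\nabla_{\theta}S(\widehat{X};\theta,\lambda)$, which produces the full Hessian $\partial^{2}S(\widehat{X};\tilde{\theta},\lambda)/\partial\theta^{T}\partial\theta$ at an intermediate point $\tilde{\theta}$. But under C4 the maps $\theta\mapsto A_{\theta},r_{\theta}$ are only $C^{1}$, so $S$ is only known to be $C^{1}$ (Proposition~\ref{prop:c1_caracterisation_S}) and this Hessian need not exist away from $(\theta^{*},X^{*})$. Your proposed fix---that the second-order-in-$\theta$ pieces are harmless because the residuals $h_{\theta^{*}}(\cdot,X^{*})$ and $A_{\theta^{*}}X^{*}+r_{\theta^{*}}-\dot{X}^{*}$ vanish---only kills those terms \emph{at the limit point}; it does not make the Hessian exist at $(\tilde{\theta},\widehat{X})$, where neither residual is zero. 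So the mean-value step itself is not justified under C1--C5.

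The paper avoids this by never forming the full Hessian. Instead of Taylor-expanding $\nabla_{\theta}S$, it works directly inside the first-order condition~(\ref{eq:first_order_condition_ap}) and expands only the two \emph{factors} appearing there:
\[
A_{\widehat{\theta}}\widehat{X}+r_{\widehat{\theta}}-\dot{\widehat{X}}
=A_{\widehat{\theta}}(\widehat{X}-X^{*})+\frac{\partial(A_{\tilde{\theta}}X^{*}+r_{\tilde{\theta}})}{\partial\theta}(\widehat{\theta}-\theta^{*})+(\dot{X}^{*}-\dot{\widehat{X}}),
\qquad
h_{\widehat{\theta}}(t,\widehat{X})=\frac{\partial h_{\tilde{\theta}}(t,\widehat{X})}{\partial\theta}(\widehat{\theta}-\theta^{*})+N_{\theta^{*}}(t).(\widehat{X}-X^{*}),
\]
using the mean-value theorem on each factor (which needs only first $\theta$-derivatives) and the affine structure in $X$. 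Substituting back yields $\int H_{1}\,dt\,(\widehat{\theta}-\theta^{*})=\int H_{2}(\widehat{X}-X^{*})-\partial_{\theta}h_{\widehat{\theta}}^{T}(\dot{X}^{*}-\dot{\widehat{X}})\,dt$, where $H_{1},H_{2}$ involve only first derivatives; the continuous-mapping argument then sends $\int H_{1}\,dt$ to $\tfrac{1}{2}H$ directly, bypassing any second-derivative requirement. After that, the integration-by-parts reduction to $\Gamma$ is essentially what you describe. If you are willing to add a $C^{2}$ assumption on $A_{\theta},r_{\theta}$, your two-step linearisation goes through; under C1--C5 as stated, you should mimic the paper's factor-by-factor expansion instead.
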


\begin{prop}
\label{prop:newey_result}Under conditions 1-8 and by defining $\Gamma$
as in proposition \propref{linear_decomposition} we have that $\Gamma(\widehat{X})-\Gamma(X^{*})$
is asymptotically normal and $\Gamma(\widehat{X})-\Gamma(X^{*})=O_{P}(n^{-1/2})$
\end{prop}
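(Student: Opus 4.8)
The plan is to recognize that $\Gamma$ is a \emph{bounded} linear functional on $L^{2}$, so that $\Gamma(\widehat{X})$ is a root-$n$ estimable plug-in functional of the regression spline, to which the series-estimator central limit theorem of \cite{Newey1997} applies. First I would rewrite (\ref{eq:DefinitionGamma}) as a single integral against a deterministic weight. Writing $g_{\theta^{*}}(t)$ for the $d\times p$ matrix $\frac{\partial(A_{\theta^{*}}(t)X^{*})}{\partial\theta}+\frac{1}{\lambda}\frac{\partial h_{\theta^{*}}(t,X^{*})}{\partial\theta}$ (a fixed function, since it is evaluated at the true $(\theta^{*},X^{*})$), the argument $X$ enters only through the inner integral $\int_{t}^{T}R_{\theta^{*}}(T-t,T-s)X(s)\,ds$; applying Fubini over the triangle $\{0\le t\le s\le T\}$ gives
\[
\Gamma(X)=\int_{0}^{T}w(s)^{T}X(s)\,ds,\qquad w(s)^{T}:=\int_{0}^{s}g_{\theta^{*}}(t)^{T}R_{\theta^{*}}(T-t,T-s)\,dt .
\]
I would then check that $w\in C([0,T],\mathbb{R}^{p\times d})\subset L^{2}$: $g_{\theta^{*}}$ is continuous and bounded by conditions C1--C4 together with the boundedness of $E_{\theta^{*}}$, $h_{\theta^{*}}$ and their $\theta$-derivatives established in Proposition \propref{existence_A_E_h} and the supplementary material, while $R_{\theta^{*}}$ is continuous and bounded as the solution of the linear ODE (\ref{eq:reverse_time_R}) driven by the bounded continuous coefficient $\alpha_{\theta^{*}}$. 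Hence $\Gamma$ is a bounded linear functional on $L^{2}$, which is precisely the property that makes its plug-in estimator $\sqrt{n}$-estimable.

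Recognizing $\Gamma(\widehat{X})=\int_{0}^{T}w(s)^{T}\widehat{X}(s)\,ds$ as a linear functional of the series estimator $\widehat{X}=\widehat{\beta}_{K}^{T}p_{K}$, I would split, schematically,
\[
\Gamma(\widehat{X})-\Gamma(X^{*})=\underbrace{\Gamma(\widehat{X})-\Gamma(\Pi_{K}X^{*})}_{\text{stochastic}}+\underbrace{\Gamma(\Pi_{K}X^{*})-\Gamma(X^{*})}_{\text{bias}} ,
\]
where $\Pi_{K}X^{*}$ is the projection of $X^{*}$ onto the spline space. The stochastic term is linear in the observation noise, equal up to the usual bias--variance bookkeeping of series regression to $A(p_{K})^{T}(P^{T}P)^{-1}P^{T}\epsilon$ with $A(p_{K})_{k}=\Gamma(p_{kK})$; this is a sum of independent mean-zero contributions, so the Lindeberg CLT applies, and under the homoscedastic i.i.d.\ design of C6--C7 its variance converges to a finite limit precisely because $\Gamma$ is $L^{2}$-bounded, yielding the $\sqrt{n}$ rate. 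For the approximation bias, condition C8 forces $X^{*}\in C^{s}$ (bootstrapping $\dot{X}^{*}=A_{\theta^{*}}X^{*}+r_{\theta^{*}}$ with $C^{s-1}$ coefficients raises the smoothness one order at a time), so the standard spline bound gives $\|\Pi_{K}X^{*}-X^{*}\|_{L^{2}}=O(K^{-s})$ and hence $|\Gamma(\Pi_{K}X^{*})-\Gamma(X^{*})|\le\|w\|_{L^{2}}\,O(K^{-s})$ by Cauchy--Schwarz; the requirement $\sqrt{n}K^{-s}\to0$ then makes $\sqrt{n}$ times the bias vanish.

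These are exactly the hypotheses of the asymptotic normality result for series estimators in \cite{Newey1997}: a square-integrable representer $w$ (root-$n$ estimable functional), a target smooth enough for the spline bias to be negligible, and the rate conditions $\sqrt{n}K^{-s}\to0$ and $K^{s}/n\to0$ controlling bias and variance. Applying that theorem gives $\sqrt{n}\bigl(\Gamma(\widehat{X})-\Gamma(X^{*})\bigr)\to\mathcal{N}(0,V)$ for a finite covariance $V$, and in particular $\Gamma(\widehat{X})-\Gamma(X^{*})=O_{P}(n^{-1/2})$. The main obstacle I anticipate is not the CLT itself but the careful verification of Newey's design-side conditions for the B-spline basis under the uniform observation-time design C7 --- controlling the smallest eigenvalue of the normalized Gram matrix $P^{T}P/n$ and the size $\zeta(K)=\sup_{t}\|p_{K}(t)\|$ so that the stochastic term's variance converges and the series linearization remainder is $o_{P}(n^{-1/2})$. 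Establishing the $L^{2}$-boundedness of $\Gamma$ through the Fubini rewriting is what reduces everything else to Newey's standard root-$n$ regime.
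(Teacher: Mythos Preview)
Your proposal is correct and follows essentially the same route as the paper: both reduce the claim to Newey's asymptotic normality theorem for linear functionals of series estimators by exhibiting $\Gamma$ as a bounded linear functional on $L^{2}$ with a square-integrable representer, then checking the smoothness and rate conditions C6--C8. The only cosmetic difference is that you construct the representer $w$ explicitly via Fubini on the triangle $\{0\le t\le s\le T\}$, whereas the paper simply notes that $\Gamma$ is linear and continuous on $L^{2}$ and invokes the Riesz--Fr\'echet theorem to obtain $v\in L^{2}$; your explicit $w$ is of course that Riesz representer, and your additional discussion of the bias--variance split and the B-spline Gram-matrix conditions makes transparent what the paper leaves implicit in the citation of \cite{Newey1997}.
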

To obtain the final result, we only have to combine the two previous
propositions:
\begin{thm}
\label{thm:asymptotic}If $\widehat{X}$ is a regression spline and
conditions C1-C8 are satisfied, then $\widehat{\theta}^{T}-\theta^{*}$
is asymptotically normal and 
\[
\widehat{\theta}^{T}-\theta^{*}=O_{P}(n^{-1/2})
\]
\end{thm}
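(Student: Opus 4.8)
The plan is simply to combine Propositions \propref{linear_decomposition} and \propref{newey_result}, since together they already reduce the asymptotic behaviour of $\widehat{\theta}^{T}-\theta^{*}$ to that of a single linear functional of the nonparametric proxy. First I would invoke Proposition \propref{linear_decomposition}, valid under conditions C1--C5, which supplies the linearization
\[
\widehat{\theta}^{T}-\theta^{*}=M\left(\Gamma(\widehat{X})-\Gamma(X^{*})\right)+\text{remainder},\qquad M:=2\,\frac{\partial^{2}S(X^{*};\theta^{*},\lambda)}{\partial\theta^{T}\partial\theta}^{-1}.
\]
The point to stress is that, by condition C5, the Hessian $\frac{\partial^{2}S(X^{*};\theta^{*},\lambda)}{\partial\theta^{T}\partial\theta}$ is nonsingular at $\theta^{*}$, so $M$ is a fixed deterministic invertible $p\times p$ matrix, independent of the sample. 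This is exactly what lets all the randomness of the estimator be pushed onto the single term $\Gamma(\widehat{X})-\Gamma(X^{*})$.

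Next I would apply Proposition \propref{newey_result}: under the full set C1--C8, and in particular with $\widehat{X}$ a regression spline, $\Gamma(\widehat{X})-\Gamma(X^{*})$ is asymptotically normal and of order $O_{P}(n^{-1/2})$. Because a fixed linear map sends an asymptotically normal sequence to an asymptotically normal one (the limiting covariance being conjugated by $M$), the leading term $M\left(\Gamma(\widehat{X})-\Gamma(X^{*})\right)$ is again asymptotically normal and $O_{P}(n^{-1/2})$. Slutsky's lemma then transfers both the rate and the limiting normal law to $\widehat{\theta}^{T}-\theta^{*}$, provided the remainder in the linearization is negligible at the relevant scale.

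Since the substantive content lives in the two propositions we may assume, the combination itself is routine, and I expect the only point requiring care to be the order of that remainder: one must verify it is $o_{P}(n^{-1/2})$ rather than merely $o_{P}(1)$, so that it does not swamp the leading term. This is precisely where condition C8 intervenes, the requirement $\sqrt{n}K^{-s}\to 0$ controlling the spline approximation bias so it does not contaminate the parametric rate and $K^{s}/n\to 0$ controlling the variance contribution; together with the consistency established in Theorem \ref{thm:consistency} and the nonsingularity C5, these ensure the quadratic remainder of the underlying Taylor expansion is dominated by the linear term. I would therefore anticipate the main (and rather mild) obstacle to be this bookkeeping on the remainder order, not any genuinely new analytic difficulty.
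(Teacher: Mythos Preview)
Your approach is exactly the paper's: the authors simply state ``To obtain the final result, we only have to combine the two previous propositions'' and give no further argument. Your observation that the remainder in Proposition~\propref{linear_decomposition} ought to be $o_{P}(n^{-1/2})$ rather than merely $o_{P}(1)$ for the rate and limiting law to transfer via Slutsky is in fact more careful than the paper, which states only $o_{P}(1)$ in that proposition and does not explicitly address this point.
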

\begin{rem}
The asymptotic linear representation given by proposition \ref{prop:linear_decomposition}
allows us to obtain an expression for the asymptotic variance. This
latter is given by the formula (\ref{eq:as_variance_expression})
in appendix D as well as a plug-in consistent estimator.
\end{rem}

\section{Experiments\label{sec:Experiments}}

We use several simple test beds for evaluating the practical efficiency
of the Tracking Estimator $\hat{\theta}^{T}$, and we compare it with
the performance of NLS $\hat{\theta}^{NLS}$and of Generalized Smoothing
$\hat{\theta}^{GS}$. The different models are linear in the states,
and they can be linear and nonlinear w.r.t parameters. We use several
sample size and several variance error for comparing robustness and
efficiency for varying sample size and noise level.

\subsection{Experimental design}

For a given sample size $n$ and noise level $\sigma$, we estimate
the Mean Square Error and the mean Absolute Relative Error (ARE) 
\[
\mathbb{E}_{\theta^{\text{*}}}\left[\frac{\left|\theta^{*}-\widehat{\theta}\right|}{\left|\theta^{*}\right|}\right]
\]
by Monte Carlo, based on $N_{MC}=100$ runs. For each run, we simulate
an ODE solution with a Runge-Kutta algorithm (ode45 in Matlab), and
a centered Gaussian noise (with variance $\sigma$) is added, in order
to obtain the $Y_{i}$'s. 

We compare also the quality of estimation of  $\hat{\theta}^{T}$, $\widehat{\theta}^{GS}$
and $\widehat{\theta}^{NLS}$ based on their prediction quality. We compute the Prediction Error  
\begin{equation}
\mathrm{\mathbb{E}_{\theta^{\text{*}},\sigma}\left[\left\Vert Y^*-X_{\widehat{\theta}}\right\Vert _{L^{2}}^{2}\right]}\label{eq:prediction_error}
\end{equation}
where
\begin{itemize}
\item $Y^*$ is a new observation drawn from the true model (\ref{eq:StatModel}),
\item $X_{\widehat{\theta}}$ is the solution to the linear ODE (\ref{eq:LinearODEmodel}) with parameter $\widehat{\theta}$.
\end{itemize}
It should be emphasized that parameter estimation and prediction error
minimization are two different problems, although they are related. Parameter estimation
is required when parameters are directly of interest, for instance for a deep 
understanding of the inner dynamics of the system. One put forward
prediction error when the aim is only to quantitatively predict the
system state. Our primary interest is parameter estimation but we
also discuss prediction performance for the three methods.

The nonparametric estimate $\hat{X}$ required in the first step is a spline defined 
with a uniform knots sequence $\xi_{k},k=1,\dots,K$. For each run
and each state variables, the number of knots is selected by minimizing
the GCV criterion, \cite{ruppert2003semiparametric}.\\
 We discuss in the next section an automated method for selecting adaptively the
hyperparameter $\lambda$.

\subsection{Selection of $\lambda$}

The criterion $S\left(\hat{X};\theta,\lambda\right)$ is based on
a balance between data fidelity and model fidelity. When $\lambda\rightarrow0$, 
we can select any $u$ in order to interpolate $\hat{X}$. In that case, $\theta$ has almost no influence on $S\left(\hat{X};\theta,\lambda\right)$
value.\\ 
When $\lambda\rightarrow\infty$, the optimal perturbation $\overline{u}\longrightarrow0$, 
and we get a NLS-like criterion where the observations $Y_i$'s are replaced by the proxy $\widehat{X}$. Because of this dramatic influence, we propose to select  $\lambda$ by minimizing the Sum of Squared Errors 
\[SSE(\lambda) = \sum_{i=1}^n\left(Y_{i}-X_{\widehat{\theta_{\lambda}}^{T}}(t_{i})\right)^{2}\].
%or the corrected SSE: $\sum_{i}\left(Y_{i}-X_{\widehat{\theta_{\lambda}}^{T},u}(t_{i})\right)^{2}$, but due to the overfitting issue the latter drove use to choose the smallest $\lambda$ possible.

% So in order to choose $\lambda$ for each model we have tested a trial vector of smoothing value and retain the one minimizing $\sum_{i}\left(Y_{i}-X_{\widehat{\theta_{\lambda}}^{T}}(t_{i})\right)^{2}$.

%Our aim was to minimize on function $g(\lambda)$ such $\mathbb{E}_{\theta^{\text{*}}}\left[\left\Vert \theta^{*}-\widehat{\theta_{\lambda}}^{T}\right\Vert _{2}^{2}\right]\leq g(\lambda)$,
% because of the uneasiness of finding such functions 

\subsection{Gradient computation}

For optimization purpose, we need to compute the gradient $\nabla_{\theta}S(X;\theta,\lambda)$
which involves both $\frac{\partial\widehat{h_{\theta}}}{\partial\theta}$
and $\frac{\partial E_{\theta}}{\partial\theta}$. These partial derivatives can be obtained by solving the sensitivity equations,  that gives the function values
at each time $t\in\left[0,\, T\right]$. Nevertheless, the size of the ODE to solve grows quickly, as the sensitivity systems is of size $(d^{2}+d)\times p$, and it becomes a computational burden for the optimization
process. For this reason, we use the adjoint method to compute gradient
expression \cite{Cao2003}. This method exploits the fact that we do not need a point-wise computation of $\frac{\partial\widehat{h_{\theta}}}{\partial\theta}$
and $\frac{\partial E_{\theta}}{\partial\theta}$ but only some integral of the derivatives.
An explanation of this method for gradient and Hessian computation
can be found in \cite{Blayo2011}. The Riccati ODE can be written in vector form (of size $D:=d^2+d$)
\[
\begin{array}{l}
\dot{Q_{\theta}}=F(Q_{\theta},\theta,t)\\
Q_{\theta}(T)=0
\end{array}
\]
where $F$ is the row formulation of the Riccati ODE vector field and the solution is the vector \[
Q_{\theta}(t)=\left(\widehat{h_{\theta}}^{T},\left(E_{\theta}^{r}\right)^{T}\right)^{T}(t).
\] 
Hence, we can compute $\nabla_{\theta}S(X;\theta,\lambda)$ thanks to the
formula
\[
\begin{array}{c}
\nabla_{\theta}S(X;\theta,\lambda)= \int_{0}^{T} \{ \frac{\partial g(Q_{\theta}(t),\theta,t)}{\partial Q}-P(t).\frac{\partial F}{\partial\theta}(Q_{\theta}(t),\theta,t) \} dt\end{array}
\]
with
\[
g(Q_{\theta},\theta,t)=-2\left(A_{\theta}(t)\widehat{X}(t)-\dot{\widehat{X}}(t)+r_{\theta}(t)\right)^{T}\widehat{h_{\theta}}-\frac{1}{\lambda} \Vert \widehat{h_{\theta}}\Vert^2
\]
and $P$ is the so-called adjoint vector of size $D=d^{2}+d$, solution of the adjoint model 
\[
\begin{array}{l}
\begin{array}{l}
\dot{P}(t)=\frac{\partial g(Q_{\theta}(t),\theta,t)}{\partial Q}-P(t).\frac{\partial F}{\partial Q}(Q_{\theta},\theta,t)\\
P(0)=0
\end{array}\end{array}
\]
The computational details for $\frac{\partial g}{\partial\theta},\frac{\partial g}{\partial Q},\frac{\partial F}{\partial\theta},\frac{\partial F}{\partial Q}$
are left in appendix B. The adjoint method is more efficient than the direct sensitivity approach, as we need to solve an ODE of size $D$, instead of a system of size $D\times p$, which is valuable when the number of parameters increases.

\subsection{Simple scalar equation}

\subsubsection{Linear w.r.t parameter}

We consider here the basic model linear in parameter
\begin{equation}
\dot{x}=ax\label{eq:exp_scalar_ode1}
\end{equation}
with initial condition equal to $X_{0}^{*}=1$. It is the simplest
model we can consider, here the solution of the Cauchy problem is
$x(t)=X_{0}^{*}e^{at}$ and we will use this closed form for the NLS
estimator. Here we have tested two different sample size $n=50$ and
$n=20$ (observations were uniformly distributed between $0$ and $5$)
and two different noise level $\sigma=2$ and $\sigma=4$. The
lambda values tested are the $40$ values uniformly distributed between
$10$ and $400$. 

\begin{center}
{\small{}}
\begin{table}[h]
\selectlanguage{american}%
\begin{centering}
{\small{}}%
\begin{tabular}{|c|c|c|c|c|}
\hline 
\selectlanguage{english}%
{\small{}$\left(n,\sigma\right)$}\selectlanguage{american}%
 & \selectlanguage{english}%
\selectlanguage{american}%
 & \multicolumn{1}{c|}{\selectlanguage{english}%
{\small{}MSE ($\times10^{-5}$)}\selectlanguage{american}%
} & \selectlanguage{english}%
{\small{}ARE ($\times10^{-3}$)}\selectlanguage{american}%
 & \selectlanguage{english}%
{\small{}Pred error }\selectlanguage{american}%
\tabularnewline
\hline 
\hline 
\selectlanguage{english}%
{\small{}$\left(50,2\right)$}\selectlanguage{american}%
 & \selectlanguage{english}%
\selectlanguage{american}%
 & \selectlanguage{english}%
\selectlanguage{american}%
 & \selectlanguage{english}%
\selectlanguage{american}%
 & \selectlanguage{english}%
\selectlanguage{american}%
\tabularnewline
\hline 
\selectlanguage{english}%
\selectlanguage{american}%
 & \selectlanguage{english}%
{\small{}$\widehat{\theta}^{T}$}\selectlanguage{american}%
 & \selectlanguage{english}%
{\small{}1.19}\selectlanguage{american}%
 & \selectlanguage{english}%
{\small{}3.2}\selectlanguage{american}%
 & \selectlanguage{french}%
4.52\selectlanguage{american}%
\tabularnewline
\hline 
\selectlanguage{english}%
\selectlanguage{american}%
 & \selectlanguage{english}%
{\small{}$\widehat{\theta}^{NLS}$}\selectlanguage{american}%
 & \selectlanguage{english}%
{\small{}1.25}\selectlanguage{american}%
 & \selectlanguage{english}%
{\small{}3.5}\selectlanguage{american}%
 & \selectlanguage{french}%
4.53\selectlanguage{american}%
\tabularnewline
\hline 
\selectlanguage{english}%
\selectlanguage{american}%
 & \selectlanguage{english}%
{\small{}$\widehat{\theta}^{GS}$}\selectlanguage{american}%
 & \selectlanguage{french}%
43\selectlanguage{american}%
 & \selectlanguage{french}%
22.5\selectlanguage{american}%
 & \selectlanguage{french}%
6.06\selectlanguage{american}%
\tabularnewline
\hline 
\selectlanguage{english}%
{\small{}$\left(50,4\right)$}\selectlanguage{american}%
 & \selectlanguage{english}%
\selectlanguage{american}%
 & \selectlanguage{english}%
\selectlanguage{american}%
 & \selectlanguage{english}%
\selectlanguage{american}%
 & \selectlanguage{english}%
\selectlanguage{american}%
\tabularnewline
\hline 
\selectlanguage{english}%
\selectlanguage{american}%
 & \selectlanguage{english}%
{\small{}$\widehat{\theta}^{T}$}\selectlanguage{american}%
 & \selectlanguage{english}%
{\small{}3.42}\selectlanguage{american}%
 & \selectlanguage{english}%
{\small{}5.3}\selectlanguage{american}%
 & \selectlanguage{french}%
9.05\selectlanguage{american}%
\tabularnewline
\hline 
\selectlanguage{english}%
\selectlanguage{american}%
 & \selectlanguage{english}%
{\small{}$\widehat{\theta}^{NLS}$}\selectlanguage{american}%
 & \selectlanguage{english}%
{\small{}3.91}\selectlanguage{american}%
 & \selectlanguage{english}%
{\small{}6.0}\selectlanguage{american}%
 & \selectlanguage{french}%
9.07\selectlanguage{american}%
\tabularnewline
\hline 
\selectlanguage{english}%
\selectlanguage{american}%
 & \selectlanguage{english}%
{\small{}$\widehat{\theta}^{GS}$}\selectlanguage{american}%
 & \selectlanguage{french}%
230\selectlanguage{american}%
 & \selectlanguage{french}%
47.2\selectlanguage{american}%
 & \selectlanguage{french}%
11.76\selectlanguage{american}%
\tabularnewline
\hline 
\selectlanguage{english}%
{\small{}$\left(20,2\right)$}\selectlanguage{american}%
 & \selectlanguage{english}%
\selectlanguage{american}%
 & \selectlanguage{english}%
\selectlanguage{american}%
 & \selectlanguage{english}%
\selectlanguage{american}%
 & \selectlanguage{english}%
\selectlanguage{american}%
\tabularnewline
\hline 
\selectlanguage{english}%
\selectlanguage{american}%
 & \selectlanguage{english}%
{\small{}$\widehat{\theta}^{T}$}\selectlanguage{american}%
 & \selectlanguage{english}%
{\small{}2.56}\selectlanguage{american}%
 & \selectlanguage{english}%
{\small{}4.6}\selectlanguage{american}%
 & \selectlanguage{french}%
4.58\selectlanguage{american}%
\tabularnewline
\hline 
\selectlanguage{english}%
\selectlanguage{american}%
 & \selectlanguage{english}%
{\small{}$\widehat{\theta}^{NLS}$}\selectlanguage{american}%
 & \selectlanguage{english}%
{\small{}2.92}\selectlanguage{american}%
 & \selectlanguage{english}%
{\small{}5.6}\selectlanguage{american}%
 & \selectlanguage{french}%
4.74\selectlanguage{american}%
\tabularnewline
\hline 
\selectlanguage{english}%
\selectlanguage{american}%
 & \selectlanguage{english}%
{\small{}$\widehat{\theta}^{GS}$}\selectlanguage{american}%
 & \selectlanguage{french}%
100\selectlanguage{american}%
 & \selectlanguage{french}%
32.7\selectlanguage{american}%
 & \selectlanguage{french}%
7.48\selectlanguage{american}%
\tabularnewline
\hline 
\selectlanguage{english}%
{\small{}$\left(20,4\right)$}\selectlanguage{american}%
 & \selectlanguage{english}%
\selectlanguage{american}%
 & \selectlanguage{english}%
\selectlanguage{american}%
 & \selectlanguage{english}%
\selectlanguage{american}%
 & \selectlanguage{english}%
\selectlanguage{american}%
\tabularnewline
\hline 
\selectlanguage{english}%
\selectlanguage{american}%
 & \selectlanguage{english}%
{\small{}$\widehat{\theta}^{T}$}\selectlanguage{american}%
 & \selectlanguage{english}%
{\small{}8.82}\selectlanguage{american}%
 & \selectlanguage{english}%
{\small{}7.8}\selectlanguage{american}%
 & \selectlanguage{french}%
9.07\selectlanguage{american}%
\tabularnewline
\hline 
\selectlanguage{english}%
\selectlanguage{american}%
 & \selectlanguage{english}%
{\small{}$\widehat{\theta}^{NLS}$}\selectlanguage{american}%
 & \selectlanguage{english}%
{\small{}9.41}\selectlanguage{american}%
 & \selectlanguage{english}%
{\small{}9.3}\selectlanguage{american}%
 & \selectlanguage{french}%
9.10\selectlanguage{american}%
\tabularnewline
\hline 
\selectlanguage{english}%
\selectlanguage{american}%
 & \selectlanguage{english}%
{\small{}$\widehat{\theta}^{GS}$}\selectlanguage{american}%
 & \selectlanguage{french}%
440\selectlanguage{american}%
 & \selectlanguage{french}%
66.0\selectlanguage{american}%
 & \selectlanguage{french}%
11.80\selectlanguage{american}%
\tabularnewline
\hline 
\end{tabular}\foreignlanguage{english}{{\small{}\protect\caption{\label{tab:exp_res_scalar_ode1}Results obtained for the linear model}
}}
\par\end{centering}{\small \par}

\selectlanguage{english}%
\end{table}

\par\end{center}{\small \par}

The obtained results are presented in table \ref{tab:exp_res_scalar_ode1}.
In every cases, the Tracking estimator gives more precise estimation than NLS and GS estimators (both in term
of MSE and ARE), but this improvement is not impressive as MSE is
expressed at scale $10^{-5}$ and ARE at scale $10^{-3}$. The differences
are small among the different estimation methods and estimations are
reliable in all cases (even for the GS approach). This example mainly
illustrates tracking approach is a relevant estimation method and can
compete with the most used methods for simple model.

\subsubsection{Nonlinear w.r.t parameters}

\paragraph{Well-specified model}
We consider a following time dependent 
model
\begin{equation}
\dot{x}=\frac{\theta_{1}}{\theta_{2}^{2}+t}x.\label{eq:assumed_model}
\end{equation}
that is non-linear in parameters. We test two sample sizes $n=50$ and $n=20$
(observations are uniformly distributed between $0$ and $15$) and
two noise levels $\sigma=2$ and $\sigma=4$. The 
true parameter value is $\theta^{*}=\left(\theta_{1}^{*},\theta_{2}^{*}\right)=\left(1.4,1\right)$
and the initial condition is equal to $X_{0}^{*}=1$. The sequence of 
lambda used is $\lambda^{v}=\left\{ 10^{k}\right\} _{-1\leq k\leq7}$.

\begin{center}
{\small{}}
\begin{table}[h]
\selectlanguage{american}%
\centering{}{\small{}}%
\begin{tabular}{|c|c|c|c|c|}
\hline 
\selectlanguage{english}%
{\small{}$\left(n,\sigma\right)$}\selectlanguage{american}%
 & \selectlanguage{english}%
\selectlanguage{american}%
 & \multicolumn{1}{c|}{\selectlanguage{english}%
{\small{}MSE ($\times10^{-2}$)}\selectlanguage{american}%
} & \selectlanguage{english}%
{\small{}ARE ($\times10^{-2}$)}\selectlanguage{american}%
 & \selectlanguage{english}%
{\small{}Pred error }\selectlanguage{american}%
\tabularnewline
\hline 
\hline 
\selectlanguage{english}%
{\small{}$\left(50,2\right)$}\selectlanguage{american}%
 & \selectlanguage{english}%
\selectlanguage{american}%
 & \selectlanguage{english}%
\selectlanguage{american}%
 & \selectlanguage{english}%
\selectlanguage{american}%
 & \selectlanguage{english}%
\selectlanguage{american}%
\tabularnewline
\hline 
\selectlanguage{english}%
\selectlanguage{american}%
 & \selectlanguage{english}%
{\small{}$\widehat{\theta}^{T}$}\selectlanguage{american}%
 & \selectlanguage{english}%
{\small{}0.18}\selectlanguage{american}%
 & \selectlanguage{english}%
{\small{}4.04}\selectlanguage{american}%
 & \selectlanguage{french}%
7.79\selectlanguage{american}%
\tabularnewline
\hline 
\selectlanguage{english}%
\selectlanguage{american}%
 & \selectlanguage{english}%
{\small{}$\widehat{\theta}^{NLS}$}\selectlanguage{american}%
 & \selectlanguage{english}%
{\small{}0.18}\selectlanguage{american}%
 & \selectlanguage{english}%
{\small{}4.06}\selectlanguage{american}%
 & \selectlanguage{french}%
7.79\selectlanguage{american}%
\tabularnewline
\hline 
\selectlanguage{english}%
\selectlanguage{american}%
 & \selectlanguage{english}%
{\small{}$\widehat{\theta}^{GS}$}\selectlanguage{american}%
 & \selectlanguage{french}%
4.88\selectlanguage{american}%
 & \selectlanguage{french}%
19.94\selectlanguage{american}%
 & \selectlanguage{french}%
46.27\selectlanguage{american}%
\tabularnewline
\hline 
\selectlanguage{english}%
{\small{}$\left(50,4\right)$}\selectlanguage{american}%
 & \selectlanguage{english}%
\selectlanguage{american}%
 & \selectlanguage{english}%
\selectlanguage{american}%
 & \selectlanguage{english}%
\selectlanguage{american}%
 & \selectlanguage{english}%
\selectlanguage{american}%
\tabularnewline
\hline 
\selectlanguage{english}%
\selectlanguage{american}%
 & \selectlanguage{english}%
{\small{}$\widehat{\theta}^{T}$}\selectlanguage{american}%
 & \selectlanguage{english}%
{\small{}0.68}\selectlanguage{american}%
 & \selectlanguage{english}%
{\small{}8.35}\selectlanguage{american}%
 & \selectlanguage{french}%
15.57\selectlanguage{american}%
\tabularnewline
\hline 
\selectlanguage{english}%
\selectlanguage{american}%
 & \selectlanguage{english}%
{\small{}$\widehat{\theta}^{NLS}$}\selectlanguage{american}%
 & \selectlanguage{english}%
{\small{}0.68}\selectlanguage{american}%
 & \selectlanguage{english}%
{\small{}8.35}\selectlanguage{american}%
 & \selectlanguage{french}%
15.57\selectlanguage{american}%
\tabularnewline
\hline 
\selectlanguage{english}%
\selectlanguage{american}%
 & \selectlanguage{english}%
{\small{}$\widehat{\theta}^{GS}$}\selectlanguage{american}%
 & \selectlanguage{french}%
9.15\selectlanguage{american}%
 & \selectlanguage{french}%
30.10\selectlanguage{american}%
 & \selectlanguage{french}%
67.53\selectlanguage{american}%
\tabularnewline
\hline 
\selectlanguage{english}%
{\small{}$\left(20,2\right)$}\selectlanguage{american}%
 & \selectlanguage{english}%
\selectlanguage{american}%
 & \selectlanguage{english}%
\selectlanguage{american}%
 & \selectlanguage{english}%
\selectlanguage{american}%
 & \selectlanguage{english}%
\selectlanguage{american}%
\tabularnewline
\hline 
\selectlanguage{english}%
\selectlanguage{american}%
 & \selectlanguage{english}%
{\small{}$\widehat{\theta}^{T}$}\selectlanguage{american}%
 & \selectlanguage{english}%
{\small{}0.87}\selectlanguage{american}%
 & \selectlanguage{english}%
{\small{}8.93}\selectlanguage{american}%
 & \selectlanguage{french}%
15.59\selectlanguage{american}%
\tabularnewline
\hline 
\selectlanguage{english}%
\selectlanguage{american}%
 & \selectlanguage{english}%
{\small{}$\widehat{\theta}^{NLS}$}\selectlanguage{american}%
 & \selectlanguage{english}%
{\small{}0.87}\selectlanguage{american}%
 & \selectlanguage{english}%
{\small{}8.95}\selectlanguage{american}%
 & \selectlanguage{french}%
15.59\selectlanguage{american}%
\tabularnewline
\hline 
\selectlanguage{english}%
\selectlanguage{american}%
 & \selectlanguage{english}%
{\small{}$\widehat{\theta}^{GS}$}\selectlanguage{american}%
 & \selectlanguage{french}%
11.11\selectlanguage{american}%
 & \selectlanguage{french}%
33.43\selectlanguage{american}%
 & \selectlanguage{french}%
82.57\selectlanguage{american}%
\tabularnewline
\hline 
\selectlanguage{english}%
{\small{}$\left(20,4\right)$}\selectlanguage{american}%
 & \selectlanguage{english}%
\selectlanguage{american}%
 & \selectlanguage{english}%
\selectlanguage{american}%
 & \selectlanguage{english}%
\selectlanguage{american}%
 & \selectlanguage{english}%
\selectlanguage{american}%
\tabularnewline
\hline 
\selectlanguage{english}%
\selectlanguage{american}%
 & \selectlanguage{english}%
{\small{}$\widehat{\theta}^{T}$}\selectlanguage{american}%
 & \selectlanguage{english}%
{\small{}1.30}\selectlanguage{american}%
 & \selectlanguage{english}%
{\small{}11.16}\selectlanguage{american}%
 & \selectlanguage{french}%
15.64\selectlanguage{american}%
\tabularnewline
\hline 
\selectlanguage{english}%
\selectlanguage{american}%
 & \selectlanguage{english}%
{\small{}$\widehat{\theta}^{NLS}$}\selectlanguage{american}%
 & \selectlanguage{english}%
{\small{}1.33}\selectlanguage{american}%
 & \selectlanguage{english}%
{\small{}11.22}\selectlanguage{american}%
 & \selectlanguage{french}%
15.63\selectlanguage{american}%
\tabularnewline
\hline 
\selectlanguage{english}%
\selectlanguage{american}%
 & \selectlanguage{english}%
{\small{}$\widehat{\theta}^{GS}$}\selectlanguage{american}%
 & \selectlanguage{french}%
11.23\selectlanguage{american}%
 & \selectlanguage{french}%
32.79\selectlanguage{american}%
 & \selectlanguage{french}%
62.79\selectlanguage{american}%
\tabularnewline
\hline 
\end{tabular}\foreignlanguage{english}{{\small{}\protect\caption{\label{tab:exp_res_scalar_ode2}Results obtained for the non linear
model}
}}\selectlanguage{english}%
\end{table}

\par\end{center}{\small \par}
The results are presented in table \ref{tab:exp_res_scalar_ode2}.
The Tracking and NLS estimators have equivalent performance in the
well specified case. The GS approach gives a less precise estimation
but it is still reliable, but the prediction error for GS is much more important than for the Tracking and NLS estimators. This illustrates the high
sensitivity of ODE model w.r.t parameter and the need of accurate
estimates for prediction, even for simple models.

\paragraph{Misspecified model}

The data is generated by a pertubed model
\begin{equation}
\dot{x}=\frac{\theta_{1}}{\theta_{2}^{2}+t}x+\sin(t)\label{eq:true_model}
\end{equation}
with $\theta_{1}^{*}=1.4,\theta_{2}^{*}=1$ and $X_{0}^{*}=1$. Nevertheless, we still use the model (\ref{eq:assumed_model}) for the  parameter estimation. We use the two sample size $n=100$ and $n=50$ (observations were uniformly distributed
between $0$ and $15$) and the two  noise levels $\sigma=2$ and $\sigma=4$. 
We use a sequence of hyperparameter $\lambda^{v}=\left\{ 10^{k},\,5\times10^{k}\right\} _{-4\leq k\leq0}$. 

Moreover, we are interested in the use of the residual control
$\overline{u}$ obtained along the parametric estimation in order
to propose a "corrected" model: 
\begin{equation}
\dot{x}=\frac{\theta_{1}}{\theta_{2}^{2}+t}x+\overline{u}\label{eq:corrected_model}
\end{equation}
for minimizing the ``corrected'' prediction error 
\begin{equation}
\mathrm{\mathbb{E}_{\theta^{\text{*}},\sigma}\left[\left\Vert Y-X_{\widehat{\theta},\overline{u}}\right\Vert _{L^{2}}^{2}\right]}\label{eq:corr_prediction_error}
\end{equation}
When model misspecification is suspected, we can consider two trajectory predictors $X_{\widehat{\theta},0}$ and  $X_{\widehat{\theta},\overline{u}}$ for a given $\lambda$. From the definition of the criterion $S$, the corrected trajectory $X_{\widehat{\theta},\overline{u}}$ is prone to give smaller prediction errors than the misspecified trajectory $X_{\widehat{\theta},0}$. This indicates that we should select the hyperparameter $\lambda$ in a different way in the case of misspecification, if we want to use the correction $\overline{u}$ that depends also on $\lambda$. For this reason, we propose to select $\lambda$ by minimizing the Corrected Sum of Squared Errors 
\[CSSE(\lambda) = \sum_{i=1}^n\left(Y_{i}-X_{\widehat{\theta_{\lambda}^{T}},\overline{u}}(t_{i})\right)^{2}.\]
as a proxy for the prediction errror (\ref{eq:corr_prediction_error}). 
Finally, we have two tracking estimators $\widehat{\theta}^T$ and $\widehat{\theta}_c^T$  based on two choices of hyperparameter ($\lambda$ that minimizes $SSE$ and $\lambda_c$ that minimizes $CSSE$). 
The estimation of the perturbation (or control $u$) is done in two ways: 
\begin{itemize}
\item for the Tracking estimator  $\widehat{\theta}^{T}$, it is provided directly by the estimation process;
\item for the NLS estimator $\widehat{\theta}^{NLS}$, we propose to estimate the perturbation based on the nonparametric proxy $\widehat{X}$ 
\[
\overline{u}(t)=\dot{\widehat{X}}(t)-\frac{\widehat{\theta_{1}}}{\widehat{\theta_{2}}^{2}+t}\widehat{X}(t).
\]
\end{itemize}
For the Generalized Smoothing, we do not have to estimate the perturbation $u$, as the penalized spline $\widehat{X}(\cdot,\hat{\theta}^{GS})$ already contains the model misspecification. 

\begin{center}
{\small{}}
\begin{table}[h]
\centering{}{\small{}}%
\begin{tabular}{|c|c|c|c|c|c|}
\hline 
{\small{}$(n,\sigma)$} &  & {\small{}MSE ($\times10^{-2}$)} & {\small{}ARE ($\times10^{-2}$)} & {\small{}Pred error } & {\small{} Corrected Pred error }\tabularnewline
\hline 
\hline 
{\small{}$(100,2)$} &  &  &  &  & \tabularnewline
\hline 
 & $\widehat{\theta}^{T}$ & {\small{}4.31} & {\small{}24.16} & {\small{}8.51} & {\small{}8.16}\tabularnewline
\hline 
 & $\widehat{\theta^{c}}^{T}$ & \selectlanguage{french}%
2.88\selectlanguage{english}%
 & \selectlanguage{french}%
18.52\selectlanguage{english}%
 & \selectlanguage{french}%
16.94\selectlanguage{english}%
 & \selectlanguage{french}%
8.03\selectlanguage{english}%
\tabularnewline
\hline 
 & {\small{}$\widehat{\theta}^{NLS}$} & {\small{}4.61} & {\small{}25.11} & {\small{}8.15} & {\small{}8.09}\tabularnewline
\hline 
 & {\small{}$\widehat{\theta}^{GS}$} & \selectlanguage{french}%
2.39\selectlanguage{english}%
 & \selectlanguage{french}%
14.27\selectlanguage{english}%
 & \selectlanguage{french}%
42.36\selectlanguage{english}%
 & \selectlanguage{french}%
42.27\selectlanguage{english}%
\tabularnewline
\hline 
{\small{}$(50,2)$} &  &  &  &  & \tabularnewline
\hline 
 & $\widehat{\theta}^{T}$ & {\small{}4.35} & {\small{}24.05} & {\small{}8.84} & {\small{}8.20}\tabularnewline
\hline 
 & $\widehat{\theta^{c}}^{T}$ & \selectlanguage{french}%
3.31\selectlanguage{english}%
 & \selectlanguage{french}%
19.18\selectlanguage{english}%
 & \selectlanguage{french}%
20.83\selectlanguage{english}%
 & \selectlanguage{french}%
8.21\selectlanguage{english}%
\tabularnewline
\hline 
 & {\small{}$\widehat{\theta}^{NLS}$} & {\small{}4.62} & {\small{}24.89} & {\small{}8.20} & {\small{}8.27}\tabularnewline
\hline 
 & {\small{}$\widehat{\theta}^{GS}$} & \selectlanguage{french}%
4.64\selectlanguage{english}%
 & \selectlanguage{french}%
19.87\selectlanguage{english}%
 & \selectlanguage{french}%
55.75\selectlanguage{english}%
 & \selectlanguage{french}%
55.63\selectlanguage{english}%
\tabularnewline
\hline 
{\small{}$(100,4)$} &  &  &  &  & \tabularnewline
\hline 
 & $\widehat{\theta}^{T}$ & {\small{}4.74} & {\small{}24.95} & {\small{}16.14} & {\small{}15.75}\tabularnewline
\hline 
 & $\widehat{\theta^{c}}^{T}$ & \selectlanguage{french}%
4.36\selectlanguage{english}%
 & \selectlanguage{french}%
21.75\selectlanguage{english}%
 & \selectlanguage{french}%
25.23\selectlanguage{english}%
 & \selectlanguage{french}%
15.83\selectlanguage{english}%
\tabularnewline
\hline 
 & {\small{}$\widehat{\theta}^{NLS}$} & {\small{}4.99} & {\small{}25.67} & {\small{}15.76} & {\small{}15.82}\tabularnewline
\hline 
 & {\small{}$\widehat{\theta}^{GS}$} & \selectlanguage{french}%
8.08\selectlanguage{english}%
 & \selectlanguage{french}%
27.04\selectlanguage{english}%
 & \selectlanguage{french}%
70.25\selectlanguage{english}%
 & \selectlanguage{french}%
70.09\selectlanguage{english}%
\tabularnewline
\hline 
{\small{}$(50,4)$} &  &  &  &  & \tabularnewline
\hline 
 & $\widehat{\theta}^{T}$ & {\small{}4.83} & {\small{}24.77} & {\small{}16.51} & {\small{}15.84}\tabularnewline
\hline 
 & $\widehat{\theta^{c}}^{T}$ & \selectlanguage{french}%
6.10\selectlanguage{english}%
 & \selectlanguage{french}%
25.67\selectlanguage{english}%
 & \selectlanguage{french}%
27.73\selectlanguage{english}%
 & \selectlanguage{french}%
16.01\selectlanguage{english}%
\tabularnewline
\hline 
 & {\small{}$\widehat{\theta}^{NLS}$} & {\small{}5.05} & {\small{}25.49} & {\small{}15.83} & {\small{}16.03}\tabularnewline
\hline 
 & {\small{}$\widehat{\theta}^{GS}$} & \selectlanguage{french}%
9.18\selectlanguage{english}%
 & \selectlanguage{french}%
29.54\selectlanguage{english}%
 & \selectlanguage{french}%
82.82\selectlanguage{english}%
 & \selectlanguage{french}%
82.72\selectlanguage{english}%
\tabularnewline
\hline 
\end{tabular}{\small{}\protect\caption{\label{tab:misspecified_model_results}Estimation results for misspecified
model}
}
\end{table}

\par\end{center}{\small \par}

Results are presented in table \ref{tab:misspecified_model_results}.
The two first columns gives the parametric estimation performance
in terms of MSE and ARE. The third column gives an estimation of (\ref{eq:prediction_error})
and the fourth one an estimation of (\ref{eq:corr_prediction_error}). 

We can see $\widehat{\theta}^{T}$ gives more accurate parametric
estimation than the NLS estimator.  The use of residual control in  $X_{\widehat{\theta},\overline{u}}$ improves the prediction error in
any case. At the contrary, the correction of the NLS estimate with the estimated control 
makes things worst, which can be explained by the use of the non-parametric
estimator of the derivative. The Generalized Smoothing estimator $\widehat{\theta}^{GS}$ competes well
with others approaches, that can be explained by the relaxation introduced
by the collocation which makes the method robust in misspecification
presence. However, we observe a fast drop in estimation precision w.r.t
noise augmentation and poor performance for prediction purpose.

The corrected estimator $\widehat{\theta^{c}}^{T}$ gives higher
precision than $\widehat{\theta}^{T}$ for small measurement error $\sigma$. 
We also notice the dramatic drop in prediction error by using
the corrected model instead of the initial one. It is due to the fact
we effectively take into account an exogeneous perturbation using
$\overline{u}$ for parametric estimation. We simultaneously estimate
the parametric part and the nonparametric part of the model as it is 
the case. But we can not propose an adaptive way to detect when the nonparametric model correction
 $\overline{u}$ should be applied, as it is far beyond the scope of that paper.

Finally, we are also interested in the control $\overline{u}$ itself,
even though we do not expect it gives us an estimation of the true
control $u^{*}(t)=\sin(t)$ (because of identifiability issues). Its
features can give hints about potential misspecification presence
and qualitative informations about its nature. We plot in figure \ref{fig:mean_residual_control}
the mean control obtained in the case $\left(n,\sigma\right)=\left(100,\,2\right)$
in blue and the true control $u^{*}(t)=\sin(t)$ in green for the
sake of comparison. Although, the scale is not the same, we can see that the estimated 
control exhibits some features of the $u^{*}$, such as oscillations, with the same approximate period. The pertubation could be  
used as an exploratory tool for analyzing and inferring the missing part in the dynamics.

\begin{center}
\begin{figure}[h]
\centering{}\includegraphics[scale=0.4]{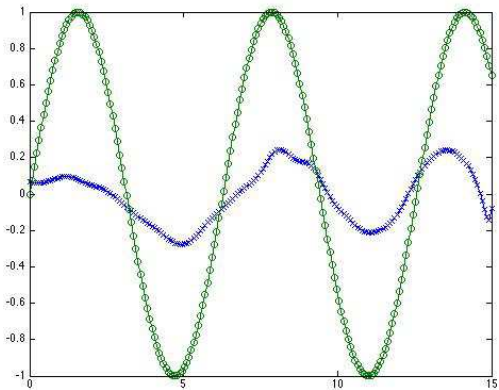}\protect\caption{\label{fig:mean_residual_control}Obtained mean residual control for
$\left(n,\sigma\right)=\left(100,\,2\right)$}
\end{figure}
\par\end{center}

\subsection{$\alpha-$Pinene model}

The "$\alpha-$Pinene model" is a model introduced in \cite{Rodriguez2006} for modeling the isomerization of $\alpha-$Pinene. 
It is an autonomous linear ODE in $\mathbb{R}^5$ with a sparse structure
\[
\dot{X}(t)=\left(\begin{array}{ccccc}
-(\theta_{1}+\theta_{2}) & 0 & 0 & 0 & 0\\
\theta_{1} & 0 & 0 & 0 & 0\\
\theta_{2} & 0 & -(\theta_{3}+\theta_{4}) & 0 & \theta_{5}\\
0 & 0 & \theta_{3} & 0 & 0\\
0 & 0 & \theta_{4} & 0 & -\theta_{5}
\end{array}\right)X(t):=A(\theta)X(t)
\]
It is an Initial Value Problem, with known initial condition $X_{0}^{*}=(100,0,0,0,0)$ . This
estimation problem is still considered as cumbersome and many estimation
methods fails to converge or converge to bad local solutions because
of high correlation between $\theta_{4}$ and $\theta_{5}$.
Before analyzing the real dataset, we perform a simulation study for evaluating the difficulty of the estimation problem, and benchmarking the several estimators. 

\subsubsection{Simulated data}
The observation interval is  $\left[0,\,100\right]$ and the true parameter is $\theta^{*}=(5.93,\,$ $2.96,\,2.05,$ \,27.5,$\,4)\times10^{-4}$.
Because of dramatic differences in the order of magnitude of the state variables, 
the noise standard deviation has to be rescaled componentwise. Here
for a given $\sigma$ value the standard deviation applied to the
state variable $X_{i}$ is equal to $\frac{\sigma}{100}\times\frac{1}{T}\int_{0}^{T}X_{i}(t)dt$,
$\,\frac{1}{T}\int_{0}^{T}X_{i}(t) dt$ being the $X_{i}$ mean value
on the observation interval. For the Tracking estimator, we use a sequence of hyperparameter
$\lambda^{v}=\left\{ 10^{k},\,5\times10^{k}\right\} _{0\leq k\leq5}$. 

\begin{center}
{\small{}}
\begin{table}[h]
\selectlanguage{american}%
\centering{}{\small{}}%
\begin{tabular}{|c|c|c|c|c|}
\hline 
\selectlanguage{english}%
{\small{}$\left(n,\sigma\right)$}\selectlanguage{american}%
 & \selectlanguage{english}%
\selectlanguage{american}%
 & \multicolumn{1}{c|}{\selectlanguage{english}%
{\small{}MSE ($\times10^{-4}$)}\selectlanguage{american}%
} & \selectlanguage{english}%
{\small{}ARE ($\times10^{-2}$)}\selectlanguage{american}%
 & \selectlanguage{english}%
{\small{}Pred error}\selectlanguage{american}%
\tabularnewline
\hline 
\hline 
\selectlanguage{english}%
{\small{}$\left(100,8\right)$}\selectlanguage{american}%
 & \selectlanguage{english}%
\selectlanguage{american}%
 & \selectlanguage{english}%
\selectlanguage{american}%
 & \selectlanguage{english}%
\selectlanguage{american}%
 & \selectlanguage{english}%
\selectlanguage{american}%
\tabularnewline
\hline 
\selectlanguage{english}%
\selectlanguage{american}%
 & \selectlanguage{english}%
{\small{}$\widehat{\theta}^{T}$}\selectlanguage{american}%
 & \selectlanguage{english}%
{\small{}2.83}\selectlanguage{american}%
 & \selectlanguage{english}%
{\small{}1.25}\selectlanguage{american}%
 & \selectlanguage{french}%
52.72\selectlanguage{american}%
\tabularnewline
\hline 
\selectlanguage{english}%
\selectlanguage{american}%
 & \selectlanguage{english}%
{\small{}$\widehat{\theta}^{NLS}$}\selectlanguage{american}%
 & \selectlanguage{english}%
{\small{}2.75}\selectlanguage{american}%
 & \selectlanguage{english}%
{\small{}0.87}\selectlanguage{american}%
 & \selectlanguage{french}%
51.71\selectlanguage{american}%
\tabularnewline
\hline 
\selectlanguage{english}%
\selectlanguage{american}%
 & \selectlanguage{english}%
{\small{}$\widehat{\theta}^{GS}$}\selectlanguage{american}%
 & \selectlanguage{french}%
3.21\selectlanguage{american}%
 & \selectlanguage{french}%
1.72\selectlanguage{american}%
 & \selectlanguage{french}%
54.18\selectlanguage{american}%
\tabularnewline
\hline 
\selectlanguage{english}%
{\small{}$\left(50,8\right)$}\selectlanguage{american}%
 & \selectlanguage{english}%
\selectlanguage{american}%
 & \selectlanguage{english}%
\selectlanguage{american}%
 & \selectlanguage{english}%
\selectlanguage{american}%
 & \selectlanguage{english}%
\selectlanguage{american}%
\tabularnewline
\hline 
\selectlanguage{english}%
\selectlanguage{american}%
 & \selectlanguage{english}%
{\small{}$\widehat{\theta}^{T}$}\selectlanguage{american}%
 & \selectlanguage{english}%
{\small{}2.92}\selectlanguage{american}%
 & \selectlanguage{english}%
{\small{}1.35}\selectlanguage{american}%
 & \selectlanguage{french}%
52.40\selectlanguage{american}%
\tabularnewline
\hline 
\selectlanguage{english}%
\selectlanguage{american}%
 & \selectlanguage{english}%
{\small{}$\widehat{\theta}^{NLS}$}\selectlanguage{american}%
 & \selectlanguage{english}%
{\small{}3.14}\selectlanguage{american}%
 & \selectlanguage{english}%
{\small{}1.24}\selectlanguage{american}%
 & \selectlanguage{french}%
52.20\selectlanguage{american}%
\tabularnewline
\hline 
\selectlanguage{english}%
\selectlanguage{american}%
 & \selectlanguage{english}%
{\small{}$\widehat{\theta}^{GS}$}\selectlanguage{american}%
 & \selectlanguage{french}%
7.05\selectlanguage{american}%
 & \selectlanguage{french}%
2.97\selectlanguage{american}%
 & \selectlanguage{french}%
54.54\selectlanguage{american}%
\tabularnewline
\hline 
\selectlanguage{english}%
{\small{}$\left(100,16\right)$}\selectlanguage{american}%
 & \selectlanguage{english}%
\selectlanguage{american}%
 & \selectlanguage{english}%
\selectlanguage{american}%
 & \selectlanguage{english}%
\selectlanguage{american}%
 & \selectlanguage{english}%
\selectlanguage{american}%
\tabularnewline
\hline 
\selectlanguage{english}%
\selectlanguage{american}%
 & \selectlanguage{english}%
{\small{}$\widehat{\theta}^{T}$}\selectlanguage{american}%
 & \selectlanguage{english}%
{\small{}6.08}\selectlanguage{american}%
 & \selectlanguage{english}%
{\small{}3.58}\selectlanguage{american}%
 & \selectlanguage{french}%
103.73\selectlanguage{american}%
\tabularnewline
\hline 
\selectlanguage{english}%
\selectlanguage{american}%
 & \selectlanguage{english}%
{\small{}$\widehat{\theta}^{NLS}$}\selectlanguage{american}%
 & \selectlanguage{english}%
{\small{}14}\selectlanguage{american}%
 & \selectlanguage{english}%
{\small{}4.8}\selectlanguage{american}%
 & \selectlanguage{french}%
103.38\selectlanguage{american}%
\tabularnewline
\hline 
\selectlanguage{english}%
\selectlanguage{american}%
 & \selectlanguage{english}%
{\small{}$\widehat{\theta}^{GS}$}\selectlanguage{american}%
 & \selectlanguage{french}%
15.4\selectlanguage{american}%
 & \selectlanguage{french}%
4.06\selectlanguage{american}%
 & \selectlanguage{french}%
106.88\selectlanguage{american}%
\tabularnewline
\hline 
\selectlanguage{english}%
{\small{}$\left(50,16\right)$}\selectlanguage{american}%
 & \selectlanguage{english}%
\selectlanguage{american}%
 & \selectlanguage{english}%
\selectlanguage{american}%
 & \selectlanguage{english}%
\selectlanguage{american}%
 & \selectlanguage{english}%
\selectlanguage{american}%
\tabularnewline
\hline 
\selectlanguage{english}%
\selectlanguage{american}%
 & \selectlanguage{english}%
{\small{}$\widehat{\theta}^{T}$}\selectlanguage{american}%
 & \selectlanguage{english}%
{\small{}11}\selectlanguage{american}%
 & \selectlanguage{english}%
{\small{}7.01}\selectlanguage{american}%
 & \selectlanguage{french}%
103.91\selectlanguage{american}%
\tabularnewline
\hline 
\selectlanguage{english}%
\selectlanguage{american}%
 & \selectlanguage{english}%
{\small{}$\widehat{\theta}^{NLS}$}\selectlanguage{american}%
 & \selectlanguage{english}%
{\small{}26}\selectlanguage{american}%
 & \selectlanguage{english}%
{\small{}8.47}\selectlanguage{american}%
 & \selectlanguage{french}%
103.58\selectlanguage{american}%
\tabularnewline
\hline 
\selectlanguage{english}%
\selectlanguage{american}%
 & \selectlanguage{english}%
{\small{}$\widehat{\theta}^{GS}$}\selectlanguage{american}%
 & \selectlanguage{french}%
25.85\selectlanguage{american}%
 & \selectlanguage{french}%
5.36\selectlanguage{american}%
 & \selectlanguage{french}%
107.91\selectlanguage{american}%
\tabularnewline
\hline 
\end{tabular}\foreignlanguage{english}{{\small{}\protect\caption{\label{tab:exp_res_a_pinene_3param}Results obtained for $\alpha-$Pinene}
}}\selectlanguage{english}%
\end{table}

\par\end{center}{\small \par}

The results are presented in table \ref{tab:exp_res_a_pinene_3param}.
The Tracking and Generalized Smoothing estimators give more accurate parameter
estimation than the Nonlinear Least Squares. In the last case, GS gives the best
performance in terms of ARE. In this model, the Relative Error is especially
relevant to quantify parameter precision because of important differences
between the scale of $\theta_{4}$ and the other parameters. As expected the difference
in performance mainly comes from the estimation of the couple $\left(\theta_{4},\theta_{5}\right)$.
This model shows that the NLS is appropriate for parameter
estimation whereas the GS seems to favor the quality of prediction, showing that estimation and prediction are somehow two competing objectives.
The Tracking estimator realizes a trade-off between these two objectives. 

\subsubsection{Real data analysis}

We use the data coming from \cite{Fuguitt1947}, and  presented in table \ref{tab:exp_data}.
They consist in simultaneous measures of the 5 components relative concentration
at eight time steps. We compare our results with the previous estimation  $\widehat{\theta_{b}}=10^{-4}\times\left(0.593,\,0.296,\,0.205,\,2.75,\,0.4\right)$ obtained in \cite{Rodriguez2006}, which provides a good data fitting. We use the Tracking method for the estimation of the parameter and of
 the residual control $\overline{u}$.  In order to avoid numerical problems, we have divided the observation by $1000$ and renormalized the parameters (as the system is autonomous). 
%Because of the previous estimation we expect to find low order value
%for the parameter estimation this can drive to numerical issues for
%the optimization algorithm (i.e stopping too early because of weak
%change in parameter value update). To circumvent that and because
%the model is autonomous we have divided the observation time by $1000$. 

For the non-parametric estimator $\widehat{X}$, we use only
two nodes: one at the end and one at the beginning of the observation
interval. We have also impose to the non-parametric estimator to start
from the known initial condition $X_{0}^{*}=(100,\,0,\,0,\,0,\,0)$.
For the Tracking estimator, we use  $\lambda=10^{k},\, k=1,2\cdots,11$
and $\lambda=5\times10^{k},\, k=1,2,3,4$ and we select the hyper-parameter that minimizes
the SSE (named $\lambda_{SSE}$. We call the corresponding estimator
$\widehat{\theta_{SSE}}^{T}$) . The estimation results are presented
in table \ref{tab:real_case_parametric_est}.

\begin{center}
{\small{}}
\begin{table}[h]
\centering{}{\small{}}%
\begin{tabular}{|c|c|c|c|c|c|}
%\hline  & \multicolumn{5}{c|}{}\tabularnewline
\hline 
%\hline 
{\small{}Times (min)} & {\small{}$X_{1}$} & {\small{}$X_{2}$} & {\small{}$X_{3}$} & {\small{}$X_{4}$} & {\small{}$X_{5}$}\tabularnewline
\hline 
{\small{}1230} & {\small{}88.35} & {\small{}7.3} & {\small{}2.3} & {\small{}0.4} & {\small{}1.75}\tabularnewline
\hline 
{\small{}3060} & {\small{}76.4} & {\small{}15.6} & {\small{}4.5} & {\small{}0.7} & {\small{}2.8}\tabularnewline
\hline 
{\small{}4920} & {\small{}65.1} & {\small{}23.1} & {\small{}5.3} & {\small{}1.1} & {\small{}5.8}\tabularnewline
\hline 
{\small{}7800} & {\small{}50.4} & {\small{}32.9} & {\small{}6} & {\small{}1.5} & {\small{}9.3}\tabularnewline
\hline 
{\small{}10680} & {\small{}37.5} & {\small{}42.7} & {\small{}6} & {\small{}1.9} & {\small{}12}\tabularnewline
\hline 
{\small{}15030} & {\small{}25.9} & {\small{}49.1} & {\small{}5.9} & {\small{}2.2} & {\small{}17}\tabularnewline
\hline 
{\small{}22620} & {\small{}14} & {\small{}57.4} & {\small{}5.1} & {\small{}2.6} & {\small{}21}\tabularnewline
\hline 
{\small{}36420} & {\small{}4.5} & {\small{}63.1} & {\small{}3.8} & {\small{}2.9} & {\small{}25.7}\tabularnewline
\hline 
\end{tabular}{\small{}\protect\caption{\label{tab:exp_data}Experimental data for $\alpha-$pinene model
coming from Fuguitt \&Hawkins }
}
\end{table}
\par\end{center}{\small \par}

\begin{center}
{\small{}}
\begin{table}[h]
\centering{}{\small{}}%
\begin{tabular}{|c|c|c|c|c|c|c|}
%\hline & \multicolumn{6}{c|}{}\tabularnewline
\hline 
%\hline 
{\small{}$10^{-4}$} & {\small{}$\theta_{1}$} & {\small{}$\theta_{2}$} & {\small{}$\theta_{3}$} & {\small{}$\theta_{4}$} & {\small{}$\theta_{5}$} & {\small{}SSE}\tabularnewline
\hline 
{\small{}$\widehat{\theta_{SSE}}^{T}$} & {\small{}0.589} & {\small{}0.290} & {\small{}0.193} & {\small{}2.301} & {\small{}0.234} & {\small{}23.88}\tabularnewline
\hline 
{\small{}$\widehat{\theta}_{b}$} & {\small{}0.593} & {\small{}0.296} & {\small{}0.205} & {\small{}2.75} & {\small{}0.4} & {\small{}19.89}\tabularnewline
\hline 
\end{tabular}{\small{}\protect\caption{\label{tab:real_case_parametric_est} Parameter estimates }
}
\end{table}
\par\end{center}{\small \par}

We have obtained $\lambda_{SSE}=100$. For $\lambda\geq 5000$, the estimated values are almost constant and equal to
$\theta=\left(0.583,\,0.295,\,0.207,\,2.259,\,0.238\right)$. The first three estimated parameters are close to the estimates given in \cite{Rodriguez2006}, but $\theta_{4}$ and $\theta_{5}$ are different; however, we obtain good predicted curves, similar to Rodriguez et al., see figure \ref{fig:plot_est_curve}.

\begin{center}
\begin{figure}[h]
\begin{centering}
\includegraphics[scale=0.5]{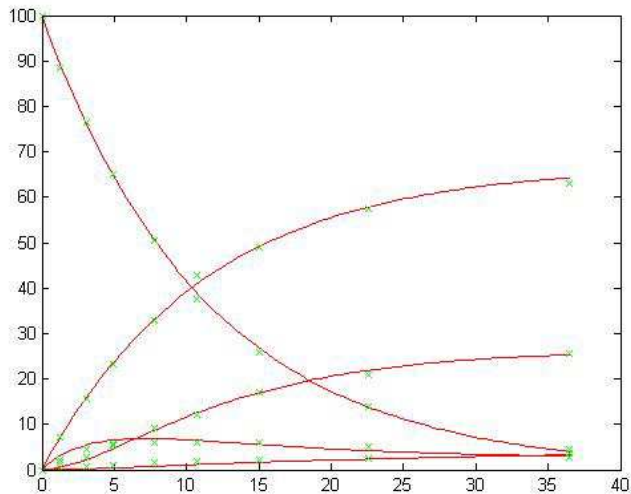}\protect\caption{\label{fig:plot_est_curve}Reconstructed curve for $\widehat{\theta_{SSE}^{T}}$
with data in green}
\par\end{centering}
\end{figure}
\par\end{center}

We can compute the minimal control $\overline{u}_{SSE}$ corresponding to $(\widehat{\theta_{SSE}}^{T},\,\lambda_{SSE})$, and  we can also compute the minimal control corresponding to $\theta=\widehat{\theta_{b}}$ for a given value of $\lambda$ (that is $\overline{u}_{\widehat{\theta_{b}},\lambda}$)
and compare its norm with $\overline{u}_{SSE}$ when $\lambda=\lambda_{SSE}$.  The controls are represented in figure \ref{fig:plot_estimated_control}
where the curves plotted with $\times$ represent the minimal control obtained
for $\widehat{\theta_{b}}$ and the curves plotted with $\circ$ are the control
obtained with $\widehat{\theta_{SSE}}^{T}$ . Here, the Tracking control is a
five-dimensional vector, where each entry $\overline{u}_{i}$ corresponds
to one state variable $X_{i}$. The control plotted in yellow corresponds
to $X_{1}$, the one in black correspond to $X_{2}$, the one in green
correspond to $X_{3}$, the one in blue correspond to $X_{4}$ and
the one in red to $X_{5}$.

\begin{center}
\begin{figure}[h]
\centering{}\includegraphics[scale=0.5]{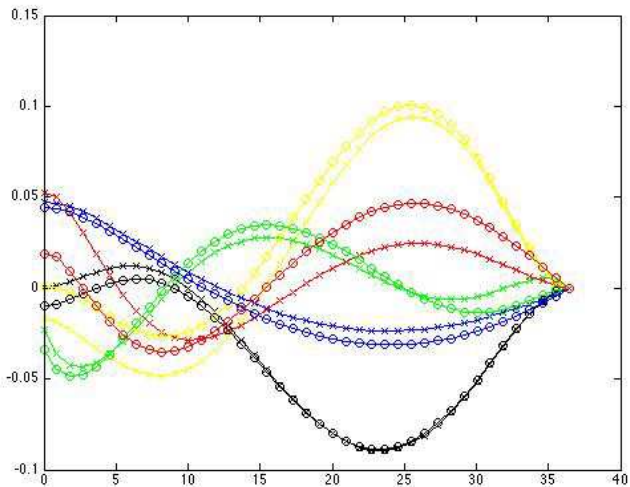}\protect\caption{\label{fig:plot_estimated_control}Estimated control for $\widehat{\theta_{b}}$
with $\lambda=\lambda_{SSE}$ and $\widehat{\theta_{SSE}}^{T}$ }
\end{figure}

\par\end{center}

As expected, the estimated control  $\overline{u}_{SSE}$ 
is smaller in $L^{2}-$ norm for $\widehat{\theta_{SSE}}^{T}$ than
for $\widehat{\theta_{b}}$. Nonetheless according to \ref{fig:plot_estimated_control}, there
is no clear difference between $\overline{u}_{\widehat{\theta_{b}},\lambda}$
and $\overline{u}_{SSE}$ except for their component related to $X_{5}$
(in red on the figure \ref{fig:plot_estimated_control}) and which
is the state variable exclusively related to the parameters $\theta_{4}$
and $\theta_{5}$ (the most difficult parameters to estimate according
to \cite{Rodriguez2006}. Even though the two resulting solutions  $X_{\widehat{\theta_{b}}}$ and $X_{\widehat{\theta_{SSE}}^{T}}$
are close, the insight given by $\overline{u}_{\widehat{\theta_{b}},\lambda}$
and $\overline{u}_{SSE}$ shows stronger differences at the dynamic
scale.

\section{Conclusion}

We have introduced a new estimation method for parameter estimation in linear ODE based on relaxation of the initial model. 
Similarly to the Generalized Smoothing estimator, we end up with a function $X_{\theta,u}$ that is an approximate solution of the ODE model of interest. The added perturbation $u$ enables to take into account the noisy observations but also some uncertainty in the model. Quite remarkably, the trade-off between model and data discrepancy is formulated and solved by using Optimal Control Theory and this work is one of the first use of this theory for dealing with statistical estimation and optimization in infinite dimensional spaces. Moreover, this functional framework and the Riccati theory gives practical algorithms and theoretical insight in the properties of the estimator that permits a detailed analysis of the statistical properties. 
In addition to the parameter estimate, we obtain also directly an estimate of the model discrepancy through the perturbation $u$ that can give hints for analyzing and discussing the relevancy of a parametric model. More can be done with the estimated  $u$ about model analysis, and complementary analysis about model testing could be done following the results given in \cite{HookerEllner2013}. 

In the experiments, we show that the Tracking estimator have similar or better performances than nonlinear least squares or generalized smoothing, even in the case of very simple models with closed-form expression. Hence, paradoxically, the use of perturbed model and of nonparametric estimators can ameliorate the statistical efficiency of standard estimates, even in well-specified cases. In the case of model misspecification, the differences are bigger, as the relaxation brought by $\lambda$ gives us a more robust estimation
method (comparing to NLS) which can deal with small model definition
imperfection. Moreover, the optimal control obtained for a given parameter
estimate allows us to minimize the prediction error by introducing
a proper correction term to the initial model. This control can also
be used as a qualitative tool to diagnose model misspecification.

However, we are aware of some  limitations of our method: first, we assume that the initial condition is known. We can consider $X_0^*$ as an additional parameter to estimate, and reformulate our approach for doing simultaneous observations. The second but the main limitation is the linear assumption about the ODE. Although linear ODEs are common in applications, numerous useful models are nonlinear and thus our methodology cannot be applied directly. Nevetheless, our work can be extended by using more general results of optimal control, such as Pontryagin Maximum Principle that can offer efficient characterization in the general nonlinear case. 
%Despite these issues, the performance of our estimator on the tested model, the benefits brought comparing to other methods drives us to
%look for such kind of improvements.

\newpage{}

\appendix

\part*{Appendix}

\section{Fundamentals Results of Optimal Control: Linear-Quadratic Theory }

The  "theorem and definition" in section 2.2 is a particular case of a more general theorem
which ensures existence and uniqueness of optimal control for cost under
the form: 
\begin{equation}
C\left(t_{0},u,\lambda\right)=z_{u}(T)^{T}Qz_{u}(T)+\int_{t_{0}}^{T}z_{u}(t)^{T}W(t)z_{u}(t)+u(t)^{T}U(t)u(t)dt \label{eq:GeneralCost}
\end{equation}

\begin{thm}
\label{thm:general_lq_theorem_existence_unicity}Let $A\in L^{2}(\left[0,\, T\right],\mathbb{R}^{d\times d})$
and $B\in L^{2}(\left[0,\, T\right],\mathbb{R}^{d\times d})$ We consider
$z_{u}$ the solution of the following ODE:
\[
\dot{z_{u}}(t)=A(t)z_{u}(t)+B(t)u(t),\: z(t_{0})=z_{0}
\]
and we want to minimize the cost (\ref{eq:GeneralCost})
%\[
%C\left(t_{0},u,\lambda\right)=z_{u}(T)^{T}Qz_{u}(T)+\int_{t_{0}}^{T}z_{u}(t)^{T}W(t)z_{u}(t)+u(t)^{T}U(t)u(t)dt
%\]
defined on $L^{2}(\left[0,\, T\right],\mathbb{R}^{d})$,  with $Q$ positive, $W\in L^{\infty}(\left[0,\, T\right],\mathbb{R}^{d\times d})$
positive matrix for all $t\in\left[0,\, T\right]$ and $U(t)$ definite
positive matrix for all $t\in\left[0,\, T\right]$ respecting the
coercivity condition: 
\[
\exists\alpha>0\, s.t\,\forall u\in L^{2}(\left[0,\, T\right],\mathbb{R}^{d})\,:\,\int_{0}^{T}u(t)^{T}U(t)u(t)dt\geq\alpha\int_{0}^{T}\left\Vert u(t)\right\Vert _{2}^{2}dt
\]
It exists a unique optimal trajectory $z_{\bar{u}}$ associated to
the unique optimal control $\overline{u}(t)=U^{-1}(t)E(t)B(t)z_{\overline{u}}(t)$
where $E$ is the matrix solution of the Riccati ODE:\textup{
\[
\begin{array}{l}
\dot{E}(t)=W(t)-A(t)^{t}E(t)-E(t)A(t)-E(t)B(t)U(t)^{-1}B(t)^{T}E(t)\\
E(T)=-Q
\end{array}
\]
} and the minimal cost is equal to: $C\left(t_{0},\overline{u},\lambda\right)=-z_{0}^{T}E(t_{0})z_{0}$. 

\newpage{}
\end{thm}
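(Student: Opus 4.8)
The plan is to prove the theorem by the \emph{completion-of-squares} method, which delivers in one stroke the existence and uniqueness of the optimal control, its closed-loop form, and the minimal value $-z_0^\top E(t_0) z_0$, provided the Riccati terminal value problem admits a solution on the whole interval $[t_0,T]$. One could alternatively obtain mere existence and uniqueness abstractly, from the strict convexity and coercivity of $u\mapsto C(t_0,u,\lambda)$ on the Hilbert space $L^2$ (the map $u\mapsto z_u$ being affine and continuous by the variation-of-constants formula); but completing the square is preferable here, since it also produces the explicit feedback and the Riccati equation.

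First I would secure the existence and regularity of $E$. The Riccati vector field is quadratic in $E$, hence only locally Lipschitz, so a priori the solution of the terminal value problem could blow up before reaching $t_0$; ruling this out is the main analytic obstacle. I would invoke \lemref{Ricatti_existence}, which guarantees a unique solution $E\in C([t_0,T])$ under the stated hypotheses ($W,Q$ positive, $U$ positive definite and coercive, $A,B\in L^2$). The mechanism behind such a statement is to linearize the equation by writing $E=YX^{-1}$ with $(X,Y)$ solving the associated linear Hamiltonian system, and to show that the sign conditions keep $X(t)$ invertible throughout $[t_0,T]$; equivalently, one argues that the finite, nonnegative value function precludes a finite escape time. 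Since $E(T)=-Q$ is symmetric and the Riccati field maps symmetric matrices to symmetric matrices, $E(t)=E(t)^\top$ for all $t$.

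Next, with $E$ in hand, I would compute, for an arbitrary admissible control $u\in L^2$ with trajectory $z_u$, the total derivative $\frac{d}{dt}\bigl(z_u^\top E z_u\bigr)$. Substituting $\dot z_u=Az_u+Bu$ and the Riccati equation for $\dot E$, the cross terms involving $A$ cancel and the $-EBU^{-1}B^\top E$ term combines with the control contribution to complete a square. Integrating over $[t_0,T]$ and using $E(T)=-Q$ and $z_u(t_0)=z_0$, I expect to reach the identity
\[
C(t_0,u,\lambda)=-z_0^\top E(t_0) z_0+\int_{t_0}^T\bigl(u-U^{-1}B^\top E z_u\bigr)^\top U\bigl(u-U^{-1}B^\top E z_u\bigr)\,dt .
\]

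The conclusion then follows by inspection. Because $U(t)$ is positive definite with coercivity constant $\alpha>0$, the integrand is nonnegative and vanishes almost everywhere if and only if $u(t)=U^{-1}(t)B(t)^\top E(t) z_u(t)$, which is exactly the stated closed-loop law; hence $C(t_0,u,\lambda)\ge -z_0^\top E(t_0) z_0$ for every $u$, with equality precisely in feedback form. It remains to verify that this feedback defines an admissible control: the closed-loop system $\dot z=(A+BU^{-1}B^\top E)z$, $z(t_0)=z_0$, has coefficients that are integrable on $[t_0,T]$ (since $E$ is continuous hence bounded, $U^{-1}$ is essentially bounded by the coercivity/positive-definiteness hypothesis, and $A,B\in L^2$), so it admits a unique absolutely continuous solution $z_{\bar u}$, and the associated $\bar u=U^{-1}B^\top E z_{\bar u}$ lies in $L^2$ and attains the lower bound. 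Uniqueness is immediate from the strictly positive quadratic remainder: any control differing from the feedback on a set of positive measure leaves a strictly positive integral. This simultaneously identifies the minimal cost as $-z_0^\top E(t_0) z_0$, completing the proof.
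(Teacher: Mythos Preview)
The paper does not actually prove this theorem: it is recalled in the appendix as a standard result of Linear--Quadratic theory (the surrounding text refers the reader to Sontag's textbook), so there is no in-paper proof to compare against. Your completion-of-squares argument is the standard derivation and is correct in outline and in detail.

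One caution on the logical dependencies: you propose to invoke Lemma~\lemref{Ricatti_existence} to secure global existence of $E$ on $[t_0,T]$, but in the paper the proof of that lemma \emph{uses} Theorem~\ref{thm:general_lq_theorem_existence_unicity} (specifically the cost identity $\min_u C=-z_0^\top E(t_0)z_0$) to bound $E$ and rule out blow-up. Citing the lemma here would therefore be circular. Fortunately your own sketch already contains the non-circular fix: run the completion-of-squares computation on the maximal interval of existence of $E$ to obtain the identity there, then bound the value function by the free cost $C(t_0,0,\lambda)$ to preclude a finite escape time. That is exactly the mechanism in the proof of Lemma~\lemref{Ricatti_existence}, just reorganized so that it does not presuppose the theorem.

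Incidentally, your feedback formula $\bar u=U^{-1}B^\top E\,z_{\bar u}$ is the correct one; the paper's statement has $U^{-1}EB$, which appears to be a typo (harmless in the paper's own applications, where $B=I_d$).
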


\section{Proof \& Intermediary results}

\subsection{$\theta\protect\longmapsto S(\widehat{X};\theta,\lambda)$ and $\theta\protect\longmapsto S(X^{*};\theta,\lambda)$
properties}
\begin{lem}
\label{lem:Ricatti_existence}Let us define $E$ the solution of 
\begin{equation}
\begin{array}{l}
\dot{E}(t)=W(t)-A(t)^{t}E(t)-E(t)A(t)-\frac{1}{\lambda}E(t)^{2}\\
E(T)=-Q
\end{array}\label{eq:classical_riccati_ODE_formulation_1}
\end{equation}
 with $A(t)\in L^{2}(\left[0,\, T\right],\mathbb{R}^{d\times d})$
$Q$ bounded,$W\in L^{\infty}(\left[0,\, T\right],\mathbb{R}^{d\times d})$. 

Then $E$ is bounded on $\left[0\,,\, T\right]$. \end{lem}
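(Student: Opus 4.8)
The plan is to prove global existence of the (unique) solution $E$ on all of $[0,T]$ together with a uniform bound, by combining a local existence argument with an \emph{a priori} two-sided estimate coming from the underlying linear-quadratic control problem. First I would record that, since $A\in L^{2}([0,T])\subset L^{1}([0,T])$ and $W\in L^{\infty}([0,T])$, the right-hand side of (\ref{eq:classical_riccati_ODE_formulation_1}) is Carath\'eodory in $t$ and locally Lipschitz (indeed polynomial) in $E$; hence there is a unique absolutely continuous solution on a maximal subinterval $(t_{*},T]\subseteq[0,T]$, and this solution stays symmetric because the right-hand side maps symmetric matrices to symmetric matrices. The only way the lemma can fail is a finite escape time $t_{*}\ge 0$ with $\|E(t)\|\to\infty$ as $t\downarrow t_{*}$ (the quadratic term $-\tfrac1\lambda E^{2}$ genuinely allows blow-up for indefinite data), so it suffices to produce a bound $\sup_{t\in(t_{*},T]}\|E(t)\|\le C$ that is independent of $t_{*}$.

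The key step is to identify the quadratic form $z_{0}^{\top}E(t_{0})z_{0}$ with (minus) the value of the control problem of Theorem \ref{thm:general_lq_theorem_existence_unicity}, but only on intervals $[t_{0},T]$ with $t_{0}\in(t_{*},T]$, where $E$ is already known to exist; this avoids any circularity. Concretely, for $B=I_{d}$, $U=\lambda I_{d}$ and any control $u$ with state $\dot z=Az+u$, $z(t_{0})=z_{0}$, a completing-the-square computation using (\ref{eq:classical_riccati_ODE_formulation_1}) (integrate $\tfrac{d}{dt}(z^{\top}Ez)$ and use $E(T)=-Q$) yields the exact identity
\[
z(T)^{\top}Qz(T)+\int_{t_{0}}^{T}\!\bigl(z^{\top}Wz+\lambda\|u\|_{2}^{2}\bigr)\,dt=-z_{0}^{\top}E(t_{0})z_{0}+\lambda\int_{t_{0}}^{T}\bigl\|u-\tfrac1\lambda Ez\bigr\|_{2}^{2}\,dt .
\]
This makes the cost minimal exactly at the closed-loop control $u=\tfrac1\lambda Ez$, so the left-hand side minimized over $u$ equals $-z_{0}^{\top}E(t_{0})z_{0}$.

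From this identity both bounds are immediate once $W$ and $Q$ are taken positive (as in Theorem \ref{thm:general_lq_theorem_existence_unicity}, and as holds in our application where $W=I_{d}$ and $Q=0$). For the upper bound, the right-hand side evaluated at the optimal $u$ is a genuine cost with nonnegative integrand and nonnegative terminal term, hence $-z_{0}^{\top}E(t_{0})z_{0}\ge 0$, i.e.\ $E(t_{0})\preceq 0$. For the lower bound, choosing the admissible control $u\equiv 0$ gives $-z_{0}^{\top}E(t_{0})z_{0}\le z(T)^{\top}Qz(T)+\int_{t_{0}}^{T}z^{\top}Wz\,dt$, where now $z(t)=\Phi(t,t_{0})z_{0}$ is driven by $\dot z=Az$; a Gronwall estimate with Cauchy--Schwarz ($\int_{0}^{T}\|A\|\le\sqrt T\,\|A\|_{L^{2}}$) bounds the resolvent by $\|\Phi(t,s)\|\le e^{\sqrt T\|A\|_{L^{2}}}=:M$, whence $-z_{0}^{\top}E(t_{0})z_{0}\le(\|Q\|+T\|W\|_{L^{\infty}})M^{2}\|z_{0}\|_{2}^{2}$ and therefore $E(t_{0})\succeq -C_{1}I_{d}$ with $C_{1}:=(\|Q\|+T\|W\|_{L^{\infty}})M^{2}$.

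Combining the two, the symmetric matrix $E(t_{0})$ satisfies $-C_{1}I_{d}\preceq E(t_{0})\preceq 0$ for every $t_{0}\in(t_{*},T]$, so $\|E(t_{0})\|\le C_{1}$ uniformly. This \emph{a priori} bound rules out a finite escape time, so by the standard continuation argument the solution extends to all of $[0,T]$ and $\sup_{[0,T]}\|E\|\le C_{1}<\infty$, which is the claim. The main obstacle is precisely the escape-time issue: the quadratic nonlinearity offers no elementary Gronwall control by itself, and the honest way around it is the control-theoretic two-sided estimate above, whose validity rests on the positivity of $W$ and $Q$ and on invoking the completing-the-square identity only on subintervals where existence is already guaranteed.
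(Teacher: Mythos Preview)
Your proof is correct and follows essentially the same route as the paper: interpret $-z_{0}^{\top}E(t_{0})z_{0}$ as the optimal value of the associated LQ problem, bound it below by $0$ (positivity of $W,Q$) and above by the cost of the trivial control $u\equiv 0$, and conclude that $E$ cannot blow up. The paper packages this as a contradiction argument citing Sontag's Theorem~30 for the value-function identity, whereas you unpack the completing-the-square identity and the local existence/continuation step explicitly, but the substance is identical.
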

\begin{proof}
(This proof is presented in Sontag's book ``Mathematical Control
Theory'' \cite{Sontag1998} chapter 7 theorem 30) 

By using theorem \ref{thm:general_lq_theorem_existence_unicity} and
if we define the quadratic cost: 
\[
C(t_{0},u,\lambda)=x_{u}(T)^{T}Qx_{u}(T)+\int_{t_{0}}^{T}x_{u}(t)^{T}W(t)x_{u}(t)+\lambda\left\Vert u(t)\right\Vert _{2}^{2}dt
\]
with $x_{u}$ the ODE solution of 
\[
\begin{array}{l}
\dot{x_{u}}(t)=A(t)x_{u}(t)+u(t)\\
x_{u}(t_{0})=x_{0}
\end{array}
\]
We know that
\[
\min_{u}C(t_{0},u,\lambda)=-x_{0}^{T}E(t_{0})x_{0}
\]
Let us reason by contradiction: if $E(t)$is not bounded then $\exists t_{e}\in\left[0\,,\, T\right]$
s.t $\lim{}_{t\rightarrow t_{e}+}\left\Vert E(t)\right\Vert _{2}=+\infty$. 

It implies: 
\begin{equation}
\forall\alpha>0\:\exists t_{0}\in\left]t_{e}\,,\, T\right],\, x_{0}\in\mathbb{R}^{d}\, with\,\left\Vert x_{0}\right\Vert _{2}=1\, s.t\:\left|x_{0}^{T}E(t_{0})x_{0}\right|\geq\alpha\label{eq:limite_E_def}
\end{equation}

We also know it exists a unique optimal trajectory for the LQ problem
on $\left[t_{0},T\right]$ with $x(t_{0})=x_{0}$ and the associated
optimal cost is $-x_{0}^{T}E_{\theta}(t_{0})x_{0}$. But by minimality
of this cost it has to be majored by the cost \emph{$C(t_{0},0,\lambda)$}
i.e the cost associated to the control $u=0$. We can see it exists
a constant $D>0$ such $C(t_{0},0,\lambda)$ is majored by $D\left\Vert x_{0}\right\Vert _{2}^{2}$
and so: 
\[
\left|x_{0}^{T}E_{\theta}(t_{0})x_{0}\right|\leq D
\]
which contradict (\ref{eq:limite_E_def}) and finish the proof.\end{proof}
\begin{lem}
\label{lem:linear_h_wrt_X}$\forall(t,\theta)$ $h_{\theta}(t,.)$
is an affine function of \textup{$X$} and can be written under the
form:\textup{ 
\[
h_{\theta}(t,X)=N_{\theta}(t).X
\]
}With:
\[
N_{\theta}(t).X\,:=\int_{t}^{T}R_{\theta}(T-t,T-s)X(s)ds+E_{\theta}(t)X(t)+\int_{t}^{T}R_{\theta}(T-t,T-s)E_{\theta}(s)r_{\theta}(s)ds
\]
with $R_{\theta}$ defined by (\ref{eq:reverse_time_R}).\end{lem}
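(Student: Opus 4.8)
\noindent
The plan is to solve the linear final-value problem defining $h_\theta(\cdot,X)$ explicitly by reversing time, and then to remove the derivative $\dot X$ by a single integration by parts. First I would set $\tau=T-t$ and $g(\tau)=h_\theta(T-\tau,X)$, which turns the final-value problem $\dot h=-\alpha_\theta h-\beta_\theta(\cdot,X)$, $h(T)=0$ into the forward initial-value problem $g'(\tau)=\alpha_\theta(T-\tau)g(\tau)+\beta_\theta(T-\tau,X)$, $g(0)=0$. Its homogeneous part is governed precisely by the resolvent $R_\theta$ defined in (\ref{eq:reverse_time_R}), so Duhamel's formula followed by the change of variable $s=T-\sigma$ gives
\[
h_\theta(t,X)=\int_t^T R_\theta(T-t,T-s)\,\beta_\theta(s,X)\,ds=\int_t^T R_\theta(T-t,T-s)E_\theta(s)\bigl(A_\theta(s)X(s)+r_\theta(s)-\dot X(s)\bigr)\,ds.
\]
Since $\beta_\theta(s,X)$ is affine in $(X(s),\dot X(s))$ and the integral is linear, this already shows that $X\mapsto h_\theta(t,X)$ is affine; the boundedness established in Proposition~\ref{prop:existence_A_E_h} guarantees the integrand lies in $L^1$ for $X\in H^1$.

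\noindent
It remains to eliminate the term $-\int_t^T M_\theta(t,s)\dot X(s)\,ds$, where $M_\theta(t,s)=R_\theta(T-t,T-s)E_\theta(s)$. I would integrate by parts (legitimate for $X\in H^1$ since $M_\theta(t,\cdot)$ is absolutely continuous): the boundary term at $s=T$ vanishes because $E_\theta(T)=0$, the one at $s=t$ yields $E_\theta(t)X(t)$ since $R_\theta(\tau,\tau)=I_d$, and there remains $+\int_t^T \partial_s M_\theta(t,s)\,X(s)\,ds$. Computing $\partial_s M_\theta$ requires the derivative of the resolvent in its second argument: from $\partial_\tau R_\theta(\tau,\sigma)=\alpha_\theta(T-\tau)R_\theta(\tau,\sigma)$ one obtains the companion identity $\partial_\sigma R_\theta(\tau,\sigma)=-R_\theta(\tau,\sigma)\alpha_\theta(T-\sigma)$, hence $\partial_s R_\theta(T-t,T-s)=R_\theta(T-t,T-s)\alpha_\theta(s)$ and
\[
\partial_s M_\theta(t,s)=R_\theta(T-t,T-s)\bigl(\alpha_\theta(s)E_\theta(s)+\dot E_\theta(s)\bigr).
\]

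\noindent
The decisive simplification comes from the Riccati equation (\ref{eq:Ricatti_equ_fondamental}) together with $\alpha_\theta=A_\theta^\top+E_\theta/\lambda$: substituting $\dot E_\theta=I_d-A_\theta^\top E_\theta-E_\theta A_\theta-E_\theta^2/\lambda$ makes the terms $A_\theta^\top E_\theta$ and $E_\theta^2/\lambda$ cancel, leaving the clean identity $\alpha_\theta E_\theta+\dot E_\theta=I_d-E_\theta A_\theta$. Feeding this back and recombining with the $A_\theta X$ term already present in $\beta_\theta$, the contributions proportional to $R_\theta E_\theta A_\theta X$ cancel exactly, and what survives is precisely $\int_t^T R_\theta(T-t,T-s)X(s)\,ds+E_\theta(t)X(t)+\int_t^T R_\theta(T-t,T-s)E_\theta(s)r_\theta(s)\,ds$, i.e. the claimed $N_\theta(t).X$. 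The only delicate points are bookkeeping ones: fixing the sign and the argument in the reversed-time resolvent derivative, and verifying that the Riccati substitution yields the stated cancellation; the integration by parts and the affine structure are otherwise routine.
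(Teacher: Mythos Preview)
Your proof is correct and follows essentially the same route as the paper: reverse time to turn the final-value problem into an IVP, apply Duhamel's formula to obtain $h_\theta(t,X)=\int_t^T R_\theta(T-t,T-s)\beta_\theta(s,X)\,ds$, then integrate by parts on the $\dot X$ term and use the Riccati equation to collapse $\alpha_\theta E_\theta+\dot E_\theta$ to $I_d-E_\theta A_\theta$, which cancels against the $E_\theta A_\theta X$ contribution. Your write-up is in fact more explicit than the paper's about the resolvent derivative in the second argument and the exact mechanism of the Riccati cancellation, but the argument is the same.
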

\begin{proof}
Considering the backward ODE:
\[
\left\{ \begin{array}{l}
\dot{h_{\theta,i}}(t,X)=\alpha_{\theta}(T-t)h_{\theta,i}(t,X)+\beta_{\theta}(T-t,X)\\
h_{\theta,i}(0,X)=0
\end{array}\right.
\]

We know thanks to Duhamel formula: 
\[
h_{\theta,i}(t,X)=\int_{0}^{t}R_{\theta}(t,s)\beta(T-s,X)ds
\]

with $R_{\theta}$ defined by \ref{eq:reverse_time_R}.

Hence: 
\[
\begin{array}{l}
h_{\theta}(t,X)=h_{i}(T-t,X)=\int_{0}^{T-t}R_{\theta}(T-t,s)\beta(T-s,X)ds\\
=\int_{t}^{T}R_{\theta}(T-t,T-s)\beta_{\theta}(s,X)ds
\end{array}
\]
Taking the value of $\beta$ and using integration by part we have:
\[
\begin{array}{lll}
h_{\theta}(t,X) & = & \int_{t}^{T}\left(R_{\theta}(T-t,T-s)E_{\theta}(s)A_{\theta}(s)+\frac{d\left(R_{\theta}(T-t,T-s)E_{\theta}(s)\right)}{ds}\right)X(s)ds\\
 & + & E_{\theta}(t)X(t)+\int_{t}^{T}R_{\theta}(T-t,T-s)E_{\theta}(s)r_{\theta}(s)ds
\end{array}
\]

And using resolvant property we finally obtain: 
\[
\frac{d\left(R_{\theta}(T-t,T-s)E_{\theta}(s)\right)}{ds}=R_{\theta}(T-t,T-s)\left(I_{p}-E_{\theta}(s)A_{\theta}(s)\right)
\]

So:
\begin{equation}
h_{\theta}(t,X)=\int_{t}^{T}R_{\theta}(T-t,T-s)X(s)ds+E_{\theta}(t)X(t)+\int_{t}^{T}R_{\theta}(T-t,T-s)E_{\theta}(s)r_{\theta}(s)ds\label{eq:h_without_der}
\end{equation}
\end{proof}
\begin{lem}
\label{lem:contolled_X_der_h_by_X_h} Under conditions 1 and 2, $\forall X\in H^{1}(\left[0,\, T\right],\mathbb{R}^{d})$
with $X(0)=x_{0}^{*}$ we have 
\[
\begin{array}{ccc}
\int_{0}^{T}\dot{X}(t)^{T}h_{\theta}(t,X)dt & = & F_{1,\theta}(X)+F_{2,\theta}(X)+F_{3,\theta}(X)\\
 & - & x_{0}^{*T}\int_{0}^{T}R_{\theta}(T,T-s)E_{\theta}(s)r_{\theta}(s)ds\\
 & - & \frac{1}{2}x_{0}^{*T}E_{\theta}(0)x_{0}^{*}
\end{array}
\]

with: $\left\{ \begin{array}{l}
F_{1,\theta}(X)=-x_{0}^{*T}\int_{0}^{T}R_{\theta}(T,T-s)X(s)ds\\
F_{2,\theta}(X)=\int_{0}^{T}X(t)^{T}\left(\alpha_{\theta}(t)h_{\theta}(t,X)+A_{\theta}(t)X+r_{\theta}(t)\right)dt\\
F_{3,\theta}(X)=\frac{1}{2}\int_{0}^{T}X(t)^{T}\dot{E_{\theta}(t)}X(t)dt
\end{array}\right.$\end{lem}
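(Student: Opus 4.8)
The plan is to remove the derivative $\dot{X}$ from $\int_{0}^{T}\dot{X}(t)^{T}h_{\theta}(t,X)\,dt$ by two successive integrations by parts, exploiting the backward equation satisfied by $h_{\theta}$ together with the symmetry and terminal value of $E_{\theta}$ supplied by the Riccati theory. First I would integrate by parts,
\[
\int_{0}^{T}\dot{X}(t)^{T}h_{\theta}(t,X)\,dt=\bigl[X(t)^{T}h_{\theta}(t,X)\bigr]_{0}^{T}-\int_{0}^{T}X(t)^{T}\dot{h}_{\theta}(t,X)\,dt.
\]
Since the final value problem defining $h_{\theta}$ imposes $h_{\theta}(T,X)=0$ and the hypothesis gives $X(0)=x_{0}^{*}$, the boundary contribution collapses to $-x_{0}^{*T}h_{\theta}(0,X)$.

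Next I would substitute the backward equation $\dot{h}_{\theta}=-\alpha_{\theta}h_{\theta}-\beta_{\theta}(\cdot,X)$, so that $-\int_{0}^{T}X^{T}\dot{h}_{\theta}\,dt=\int_{0}^{T}X^{T}\alpha_{\theta}h_{\theta}\,dt+\int_{0}^{T}X^{T}\beta_{\theta}(t,X)\,dt$, and expand $\beta_{\theta}(t,X)=E_{\theta}(A_{\theta}X+r_{\theta}-\dot{X})$. The only remaining obstruction is the reappearance of a derivative through the term $-\int_{0}^{T}X^{T}E_{\theta}\dot{X}\,dt$, which is the crux of the argument.

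To dispose of it I would use that $E_{\theta}(t)$ is symmetric (guaranteed by the theorem and definition of $S$), whence $\frac{d}{dt}\bigl(X^{T}E_{\theta}X\bigr)=2X^{T}E_{\theta}\dot{X}+X^{T}\dot{E}_{\theta}X$. Integrating this identity over $[0,T]$ and using $E_{\theta}(T)=0$ and $X(0)=x_{0}^{*}$ yields
\[
\int_{0}^{T}X^{T}E_{\theta}\dot{X}\,dt=-\tfrac{1}{2}x_{0}^{*T}E_{\theta}(0)x_{0}^{*}-F_{3,\theta}(X),
\]
which is precisely where $F_{3,\theta}$ is produced and which contributes part of the final quadratic boundary term.

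It then remains to expand $h_{\theta}(0,X)$ by Lemma \ref{lem:linear_h_wrt_X} and to use $X(0)=x_{0}^{*}$: the piece $\int_{0}^{T}R_{\theta}(T,T-s)X(s)\,ds$ produces $F_{1,\theta}(X)$, the piece $\int_{0}^{T}R_{\theta}(T,T-s)E_{\theta}(s)r_{\theta}(s)\,ds$ produces the term $-x_{0}^{*T}\int_{0}^{T}R_{\theta}(T,T-s)E_{\theta}(s)r_{\theta}(s)\,ds$, and $E_{\theta}(0)X(0)$ produces a full $-x_{0}^{*T}E_{\theta}(0)x_{0}^{*}$. Collecting everything, the $\alpha_{\theta}h_{\theta}$ contribution together with the nonderivative part of $\beta_{\theta}$ assemble into $F_{2,\theta}(X)$, the two half-quadratics combine into $-\tfrac{1}{2}x_{0}^{*T}E_{\theta}(0)x_{0}^{*}$, and $F_{3,\theta}(X)$ survives with the correct sign, giving the claimed identity. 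The genuinely delicate point is the second integration by parts: without the symmetry of $E_{\theta}$ the term $\int_{0}^{T}X^{T}E_{\theta}\dot{X}\,dt$ would not recombine into a total derivative, and the derivative of $X$ reintroduced through $\beta_{\theta}$ would fail to cancel.
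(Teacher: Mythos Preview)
Your proof is correct and follows essentially the same route as the paper: integrate by parts to pick up the boundary term $-x_{0}^{*T}h_{\theta}(0,X)$, substitute the backward equation $\dot{h}_{\theta}=-\alpha_{\theta}h_{\theta}-\beta_{\theta}$, handle the residual $\int X^{T}E_{\theta}\dot{X}$ via the symmetry of $E_{\theta}$ and a second integration by parts, and finally expand $h_{\theta}(0,X)$ using Lemma~\ref{lem:linear_h_wrt_X}. The emphasis you place on the symmetry of $E_{\theta}$ as the key point is exactly right.
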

\begin{proof}
Integration by part give us:

\[
\begin{array}{l}
\int_{0}^{T}\dot{X}(t)^{T}h_{\theta}(t,X)dt=\left[X(t)^{T}h_{\theta}(t,X)\right]_{0}^{T}+\int_{0}^{T}X(t)^{T}\left(\alpha_{\theta}(t)h_{\theta}(t,X)+\beta_{\theta}(t,X)\right)dt\\
=-x_{0}^{*T}h_{\theta}(0,X)+\int_{0}^{T}X(t)^{T}\left(\alpha_{\theta}(t)h_{\theta}(t,X)+E_{\theta}(t)\left(A_{\theta}(t)X+r_{\theta}(t)\right)\right)dt-\int_{0}^{T}X(t)^{T}E_{\theta}(t)\dot{X}(t)dt
\end{array}
\]

and 
\[
\begin{array}{l}
\int_{0}^{T}X(t)^{T}E_{\theta}(t)\dot{X}(t)dt=-\frac{1}{2}\left(x_{0}^{*T}E_{\theta}(0)x_{0}^{*}+\int_{0}^{T}X(t)^{T}\dot{E_{\theta}(t)}X(t)dt\right)\end{array}
\]

Moreover using affine nature of $h$ w.r.t $X$ and using the same
notation as in proposition \lemref{linear_h_wrt_X}: 

\[
\begin{array}{ccc}
x_{0}^{*T}h_{\theta}(0,X) & = & x_{0}^{*T}\int_{0}^{T}R_{\theta}(T,T-s)X(s)ds+x_{0}^{*T}E_{\theta}(0)x_{0}^{*}\\
 & + & x_{0}^{*T}\int_{0}^{T}R_{\theta}(T,T-s)E_{\theta}(s)r_{\theta}(s)ds
\end{array}
\]

Finally we obtain:
\[
\begin{array}{lll}
\int_{0}^{T}\dot{X}(t)^{T}h_{\theta}(t,X)dt & = & -x_{0}^{*T}\int_{0}^{T}R_{\theta}(T,T-s)X(s)ds-\frac{1}{2}x_{0}^{*T}E_{\theta}(0)x_{0}^{*}\\
 & + & \int_{0}^{T}X(t)^{T}\left(\alpha_{\theta}(t)h_{\theta}(t,X)dt+\left(A_{\theta}(t)X+r_{\theta}(t)\right)\right)dt\\
 & + & \frac{1}{2}\int_{0}^{T}X(t)^{T}\dot{E_{\theta}(t)}X(t)dt\\
 & - & x_{0}^{*T}\int_{0}^{T}R_{\theta}(T,T-s)E_{\theta}(s)r_{\theta}(s)ds
\end{array}
\]

\end{proof}

\subsection{Consistency Proof}

In the following proposition \ref{prop:controlled_S_by_h_X} we show
$\left|S(\widehat{X};\theta,\lambda)-S(X^{*};\theta,\lambda)\right|$
is controlled by the distance between $\widehat{X}$ and $X^{*}$
and between $\widehat{h}$ and $h^{*}$. In proposition \ref{prop:controlled_h_by_X}
we show $\left\Vert \widehat{h_{\theta}}-h_{\theta}^{*}\right\Vert _{L^{2}}$
is uniquely controlled by $\left\Vert \widehat{X}-X^{*}\right\Vert _{L^{2}}$
the same will follow for $\left|S_{\lambda}(\theta)-S_{\lambda}^{*}(\theta)\right|$
\begin{prop}
\label{prop:controlled_S_by_h_X} Under conditions 1 and 3, $\forall\theta\in\Theta$
we have:
\[
\begin{array}{l}
\left|S(\widehat{X};\theta,\lambda)-S(X^{*};\theta,\lambda)\right|\\
\leq2\left(\bar{A}\bar{h}+K_{1}+K_{2}\left\Vert \widehat{h}_{\theta}\right\Vert _{L^{2}}+K_{3}\left\Vert \widehat{X}\right\Vert _{L^{2}}\right)\left\Vert X^{*}-\widehat{X}\right\Vert _{L^{2}}\\
+\left(\bar{A}\left\Vert \widehat{X}\right\Vert _{L^{2}}+K_{4}+\frac{1}{\lambda}\left(\left\Vert \widehat{h}_{\theta}\right\Vert _{L^{2}}+\bar{h}\right)\right)\left\Vert h_{\theta}^{*}-\widehat{h}_{\theta}\right\Vert _{L^{2}}
\end{array}
\]
With \textup{$:\left\{ \begin{array}{l}
K_{1}=\sqrt{d}\left\Vert X_{0}\right\Vert _{2}\bar{R}+d\bar{E}\bar{A}\bar{X}+\sqrt{d}\bar{\dot{E}}\overline{X}\\
K_{2}=\sqrt{d}\left(\bar{A}+\frac{\bar{E}}{\lambda}\right)\\
K_{3}=d\bar{E}\bar{A}+\sqrt{d}\bar{\dot{E}}\\
K_{4}=\sqrt{d}\left(\bar{A}+\frac{\bar{E}}{\lambda}\right)\bar{X}
\end{array}\right.$}

and: $\begin{array}{l}
\bar{R}=\sup_{\theta\in\Theta}\left\Vert R_{\theta}(T,T-.)\right\Vert _{L^{2}}\\
\bar{\dot{E}}=\sup_{\theta\in\Theta}\left\Vert \dot{E}_{\theta}\right\Vert _{L^{2}}
\end{array}$\end{prop}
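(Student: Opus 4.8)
The plan is to reduce the whole estimate to a bilinear/quadratic bookkeeping exercise in the pair $(X,h)$, once $\dot X$ has been eliminated from the criterion. I would start not from the raw closed form (\ref{eq:ExpressionS}) — which still contains $\dot\zeta$ — but from the $\dot X$-free representation of $S(X;\theta,\lambda)$ obtained by combining (\ref{eq:ExpressionS}) with \lemref{contolled_X_der_h_by_X_h} (valid for arguments satisfying $X(0)=X_0^*$, as both $X^*$ and the initial-condition-constrained proxy $\widehat X$ do). This is essential: the target bound involves only $\|\widehat X-X^*\|_{L^2}$ and $\|\widehat h_\theta-h^*_\theta\|_{L^2}$ and no derivative norm, so the $\dot X$-containing form is unusable. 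In that representation $S$ is a finite sum of integral terms, each either linear in $X$, bilinear in $(X,h_\theta(\cdot,X))$, quadratic in $X$ (the pieces from $F_{3,\theta}$, the $A_\theta X$ part of $F_{2,\theta}$, and the $E_\theta X$ part of $h_\theta$), or quadratic in $h$ (the term $\frac1\lambda\|h_\theta\|^2$), plus terms depending only on $\theta,\lambda,X_0^*$. The first observation is that these last terms are identical for $\widehat X$ and for $X^*$ and hence cancel in the difference, so it suffices to control the trajectory-dependent terms.

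The core device is to treat $h_\theta(\cdot,X)$ as a second, independent argument and to difference each remaining term by a single telescoping step. For a bilinear term $\int_0^T (M_\theta(t)X(t))^T h(t)\,dt$ I would write the difference as $\int (M_\theta\widehat X)^T(\widehat h_\theta-h^*_\theta)+\int (M_\theta(\widehat X-X^*))^T h^*_\theta$, then apply Cauchy--Schwarz in $L^2$ and bound the matrix factor by its uniform sup-norm; for a quadratic term $\int_0^T X^T M_\theta X\,dt$ I would use $\widehat X^T M_\theta\widehat X-X^{*T}M_\theta X^*=\widehat X^T M_\theta(\widehat X-X^*)+(\widehat X-X^*)^T M_\theta X^*$, bounding $\|X^*\|_{L^2}\le\bar X$; and for $\frac1\lambda\|h\|^2$ I would use $\|\widehat h_\theta\|^2-\|h^*_\theta\|^2=\langle \widehat h_\theta+h^*_\theta,\widehat h_\theta-h^*_\theta\rangle$ with $\|h^*_\theta\|_{L^2}\le\bar h$. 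All the required sup-norms $\bar A,\bar X,\bar E,\bar h,\bar R,\bar{\dot E}$ are finite under C1 and C3 by \propref{existence_A_E_h}, and the matrix factors $M_\theta$ appearing are built from $A_\theta,E_\theta,\dot E_\theta,R_\theta$ via the affine form of $h_\theta$ from \lemref{linear_h_wrt_X}; thus the inequality is a purely deterministic, pointwise-in-$\theta$ bound and no convergence argument enters at this stage.

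It then remains to collect. I would regroup all contributions proportional to $\|X^*-\widehat X\|_{L^2}$ into the first bracket and all contributions proportional to $\|h^*_\theta-\widehat h_\theta\|_{L^2}$ into the second, reading off the constants by matching sources. The leading $\bar A\bar h$ comes from contracting $A_\theta(\widehat X-X^*)$ in the main term $-2\int(A_\theta X+r_\theta)^T h$ against $h^*_\theta$; the coefficient $K_2$ of $\|\widehat h_\theta\|_{L^2}$ comes from the $\int X^T\alpha_\theta h_\theta$ part of $F_{2,\theta}$, with $\alpha_\theta=A_\theta^T+E_\theta/\lambda$; the constant $K_3$ (coefficient of $\|\widehat X\|_{L^2}$) and $K_1$ come from the genuinely quadratic-in-$X$ terms, $K_1$ also absorbing $F_{1,\theta}$ (hence the $\sqrt d\,\|X_0\|_2\bar R$ contribution) together with the $X^*$-side of the quadratic terms. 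On the other side, $\bar A\|\widehat X\|_{L^2}$, the constant $K_4=\sqrt d(\bar A+\bar E/\lambda)\bar X$, and the factor $\frac1\lambda(\|\widehat h_\theta\|_{L^2}+\bar h)$ arise respectively from the main term, from $\int X^{*T}\alpha_\theta(\widehat h_\theta-h^*_\theta)$, and from the quadratic $\frac1\lambda\|h\|^2$ term.

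The main obstacle is not conceptual but is the error-prone bookkeeping itself. The two delicate points are: (i) deciding, for each quadratic contraction, which factor is kept explicit as $\|\widehat X\|_{L^2}$ (resp. $\|\widehat h_\theta\|_{L^2}$) and which is absorbed as $\bar X$ (resp. $\bar h$) into $K_1,K_4$, since this choice fixes the exact shape of the stated constants; and (ii) tracking the dimensional factors $\sqrt d$ and $d$ produced when converting between the Euclidean norm on $\mathbb R^d$ and the Frobenius norm on $d\times d$ matrices and when bounding matrix products such as $E_\theta A_\theta$ — this is precisely what distinguishes the factor $d\bar E\bar A$ appearing inside $K_1$ and $K_3$ from the simpler factors elsewhere. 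Once these are handled consistently, summing the per-term estimates and factoring out the displayed coefficients yields the stated inequality.
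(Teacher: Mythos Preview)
Your approach is essentially the same as the paper's: both rely on \lemref{contolled_X_der_h_by_X_h} to eliminate $\dot X$, then telescope each bilinear/quadratic term and apply Cauchy--Schwarz together with the uniform bounds of \propref{existence_A_E_h}. The only cosmetic difference is the order of operations: the paper first splits $S$ (in the original $\dot X$-containing form) into the three pieces $2\int (A_\theta X)^T h$, $2\int \dot X^T h$, and $\frac{1}{\lambda}\int \|h\|^2$, bounds the first and third directly, and invokes the lemma only on the middle piece, whereas you pass to the $\dot X$-free representation of $S$ at the outset and then difference; the resulting term-by-term estimates and the collected constants are the same.
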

\begin{proof}
For the sake of notation we will consider the homogenous case i.e
$r_{\theta}(t)=0$ 

By triangular inequality we have:

\[
\begin{array}{l}
\left|S(\widehat{X};\theta,\lambda)-S(X^{*};\theta,\lambda)\right|\\
\leq2\left|\int_{0}^{T}\left(h_{\theta}^{*}(t)^{T}A_{\theta}(t)X^{*}(t)-\widehat{h}_{\theta}(t)^{T}A_{\theta}(t)\widehat{X}(t)\right)dt\right|\\
+2\left|\int_{0}^{T}\left(\dot{\widehat{X}}(t)^{T}\widehat{h}_{\theta}(t)-\dot{X}^{*}(t)^{T}h_{\theta}^{*}(t)\right)dt\right|\\
+\frac{1}{\lambda}\left|\int_{0}^{T}\left(h_{\theta}^{*}(t)^{T}h_{\theta}^{*}(t)-\widehat{h}_{\theta}(t)^{T}\widehat{h}_{\theta}(t)\right)dt\right|
\end{array}
\]

Now we will separatly bound each of the three previous terms:

The first one:

\[
\begin{array}{l}
\left|\int_{0}^{T}\left(h_{\theta}^{*}(t)^{T}A_{\theta}(t)X^{*}(t)-\widehat{h}_{\theta}(t)^{T}A_{\theta}(t)\widehat{X}(t)\right)dt\right|\\
\leq\left|\int_{0}^{T}h_{\theta}^{*}(t)^{T}A_{\theta}(t)\left(X^{*}(t)-\widehat{X}(t)\right)dt\right|+\left|\int_{0}^{T}\left(h_{\theta}^{*}(t)-\widehat{h_{\theta}}(t)\right)^{T}A_{\theta}(t)\widehat{X}(t)dt\right|\\
\leq\left\Vert h_{\theta}^{*T}A_{\theta}\right\Vert _{L^{2}}\left\Vert X^{*}-\widehat{X}\right\Vert _{L^{2}}+\left\Vert A_{\theta}\widehat{X}\right\Vert _{L^{2}}\left\Vert h_{\theta}^{*}-\widehat{h_{\theta}}\right\Vert _{L^{2}}
\end{array}
\]

The last inequality has been obtained thanks to Cauchy-Schwarz inequality.

The second one inequality is a bit cumbersome in terms of computation.
For the sake of clarity we left some computational details in \lemref{contolled_X_der_h_by_X_h}
and we obtain with the same notation:
\[
\begin{array}{ccc}
\int_{0}^{T}\dot{\widehat{X}}(t)^{T}\widehat{h_{\theta}}(t)dt & = & F_{1,\theta}(\widehat{X})+F_{2,\theta}(\widehat{X})+F_{3,\theta}(\widehat{X})\\
 & - & x_{0}^{*T}E_{\theta}(0)x_{0}^{*}
\end{array}
\]

and:

\[
\begin{array}{ccc}
\int_{0}^{T}\dot{X}(t)^{*T}h_{\theta}^{*}(t)dt & = & F_{1,\theta}(X^{*})+F_{2,\theta}(X^{*})+F_{3,\theta}(X^{*})\\
 & - & x_{0}^{*T}E_{\theta}(0)x_{0}^{*}
\end{array}
\]

Hence we can formulate $S(\widehat{X};\theta,\lambda)$ without the
derivative form expression and the last decomposition allows us to
bound $\left|\int_{0}^{T}\left(\dot{\widehat{X}}(t)^{T}\widehat{h_{\theta}}(t)-\dot{X}^{*}(t)^{T}h_{\theta}^{*}(t)\right)dt\right|$
only with $\left\Vert \widehat{X}-X^{*}\right\Vert _{L^{2}}$ and
$\left\Vert \widehat{h}_{\theta}-h_{\theta}^{*}\right\Vert _{L^{2}}$

By use of norm inequalities we obtain the following bounds: 

\[
\begin{array}{l}
\left|F_{1,\theta}(\widehat{X})-F_{1,\theta}(X^{*})\right|\leq\sqrt{d}\left\Vert X_{0}\right\Vert _{2}\bar{R}\left\Vert \widehat{X}-X^{*}\right\Vert _{L^{2}}\\
\begin{array}{lll}
\left|F_{2,\theta}(\widehat{X})-F_{2,\theta}(X^{*})\right| & \leq & \sqrt{d}\left(\bar{A}+\frac{\bar{E}}{\lambda}\right)\left(\left\Vert \widehat{X}-X^{*}\right\Vert _{L^{2}}\left\Vert \widehat{h_{\theta}}\right\Vert _{L^{2}}+\bar{X}\left\Vert \widehat{h}_{\theta}-h_{\theta}^{*}\right\Vert _{L^{2}}\right)\\
 & + & \sqrt{d}\bar{A}\left(\left\Vert \widehat{X}\right\Vert _{L^{2}}+\bar{X}\right)\left\Vert \widehat{X}-X^{*}\right\Vert _{L^{2}}
\end{array}\\
\left|F_{3,\theta}(\widehat{X})-F_{3,\theta}(X^{*})\right|\leq\sqrt{d}\left\Vert \widehat{X}-X^{*}\right\Vert _{L^{2}}\bar{\dot{E}}\left(\left\Vert \widehat{X}\right\Vert _{L^{2}}+\overline{X}\right)
\end{array}
\]

And we obtain for the second part:

\[
\begin{array}{lll}
\left|\int_{0}^{T}\left(\dot{\widehat{X}}(t)^{T}\widehat{h}_{\theta}(t)-\dot{X}(t)^{*T}h_{\theta}^{*}(t)\right)dt\right| & \leq & \left(K_{1}+K_{2}\left\Vert \widehat{h_{\theta}}\right\Vert _{L^{2}}+K_{3}\left\Vert \widehat{X}\right\Vert _{L^{2}}\right)\left\Vert \widehat{X}-X^{*}\right\Vert _{L^{2}}\\
 &  & +K_{4}\left\Vert \widehat{h_{\theta}}-h_{\theta}^{*}\right\Vert _{L^{2}}
\end{array}
\]

$\textrm{With: }\left\{ \begin{array}{l}
K_{1}=\sqrt{d}\left\Vert X_{0}\right\Vert _{2}\bar{R}+\sqrt{d}\bar{A}\bar{X}+\sqrt{d}\bar{\dot{E}}\overline{X}\\
K_{2}=\sqrt{d}\left(\bar{A}+\frac{\bar{E}}{\lambda}\right)\\
K_{3}=\sqrt{d}\bar{A}+\sqrt{d}\bar{\dot{E}}\\
K_{4}=\sqrt{d}\left(\bar{A}+\frac{\bar{E}}{\lambda}\right)\bar{X}
\end{array}\right.$

For the third one we have:

\[
\begin{array}{l}
\left|\int_{0}^{T}\left(h_{\theta}^{*}(t)^{T}h_{\theta}^{*}(t)-\widehat{h_{\theta}}(t)^{T}\widehat{h_{\theta}}(t)\right)dt\right|\\
=\left|\int_{0}^{T}\left(h_{\theta}^{*}(t)^{T}\left(h_{\theta}^{*}(t)-\widehat{h_{\theta}}(t)\right)-\widehat{h_{\theta}}(t)^{T}\left(\widehat{h_{\theta}}(t)-h_{\theta}^{*}(t)\right)\right)dt\right|\\
\leq\left(\left\Vert \widehat{h_{\theta}}\right\Vert _{L^{2}}+\left\Vert h_{\theta}^{*}\right\Vert _{L^{2}}\right)\left\Vert h_{\theta}^{*}-h_{\theta}\right\Vert _{L^{2}}
\end{array}
\]

Hence by summing we finish the proof.\end{proof}
\begin{prop}
\label{prop:controlled_h_by_X} Under conditions 1 and 3 $\forall\theta\in\Theta$
we have: \textup{
\[
\begin{array}{l}
\left\Vert \widehat{h_{\theta}}-h_{\theta}^{*}\right\Vert _{L^{2}}\leq K_{5}\left\Vert \widehat{X}-X^{*}\right\Vert _{L^{2}}\\
\textrm{with : }K_{5}=\sqrt{d}\left(Tde^{\sqrt{d}\left(\overline{A}+\frac{\overline{E}}{\lambda}\right)T}+\overline{E}\right)
\end{array}
\]
 }\end{prop}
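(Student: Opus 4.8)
The plan is to avoid working with the defining final-value ODE for $h_\theta(\cdot,\zeta)$ directly. If one simply subtracted the two instances of that ODE for $\zeta=\widehat X$ and $\zeta=X^{*}$, the difference of the source terms $\beta_\theta$ would produce $E_\theta\bigl(A_\theta\,\Delta X-\dot{\Delta X}\bigr)$ with $\Delta X:=\widehat X-X^{*}$, and the term $\dot{\Delta X}=\dot{\widehat X}-\dot{X^{*}}$ cannot be controlled by $\Vert\Delta X\Vert_{L^{2}}$. Instead I would start from the affine representation of $h_\theta$ established in \lemref{linear_h_wrt_X}, in which an integration by parts has already removed every dependence on $\dot\zeta$. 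Subtracting the two instances of that formula, the inhomogeneous $r_\theta$-part cancels and one is left with
\[
\widehat{h_\theta}(t)-h_\theta^{*}(t)=\int_{t}^{T}R_\theta(T-t,T-s)\,\Delta X(s)\,ds+E_\theta(t)\,\Delta X(t).
\]
This identity is the crux of the proof: the right-hand side is a bounded \emph{linear} operator applied to $\Delta X$, with no derivative involved.

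Next I would produce a uniform-in-$\theta$ bound on the resolvent $R_\theta$. From its defining IVP (\ref{eq:reverse_time_R}), $R_\theta(\cdot,s)$ solves a linear homogeneous system with coefficient $\alpha_\theta(T-\cdot)$, so Gr\"onwall's inequality gives $\Vert R_\theta(t,s)\Vert\le\exp\bigl(\int_{s}^{t}\Vert\alpha_\theta(T-\tau)\Vert\,d\tau\bigr)$. Since $\alpha_\theta=A_\theta^{T}+E_\theta/\lambda$ with $A_\theta$ and $E_\theta$ bounded on $[0,T]\times\Theta$ by \propref{existence_A_E_h}, this is dominated uniformly by $\exp\!\bigl(\sqrt d\,(\overline A+\overline E/\lambda)T\bigr)$, which is exactly the exponential factor appearing in $K_{5}$; the matrix $E_\theta$ is likewise uniformly bounded by $\overline E$.

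It then remains to take $L^{2}$ norms in the displayed identity. For the convolution-type term I would bound the integrand pointwise using the resolvent estimate, apply Cauchy--Schwarz in $s$, and then integrate in $t$; each integration over an interval of length at most $T$ contributes one factor of $T$, which is the source of the $T$ multiplying the exponential. For the second term the pointwise estimate $\Vert E_\theta(t)\Delta X(t)\Vert\le\overline E\,\Vert\Delta X(t)\Vert$ integrates directly to $\overline E\,\Vert\Delta X\Vert_{L^{2}}$. Summing the two contributions and collecting the dimensional factors $\sqrt d$ and $d$ produced by the matrix--vector norm inequalities yields precisely $K_{5}=\sqrt d\,\bigl(Td\,e^{\sqrt d(\overline A+\overline E/\lambda)T}+\overline E\bigr)$.

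The only delicate point is the very first step: one must recognize that the bound is impossible if one differentiates the ODE, and that invoking the derivative-free affine formula of \lemref{linear_h_wrt_X} is what makes the estimate depend on $\Vert\Delta X\Vert_{L^{2}}$ alone. Everything afterwards is a routine Gr\"onwall estimate combined with Cauchy--Schwarz, and the explicit constant $K_{5}$ is merely the bookkeeping of these bounds.
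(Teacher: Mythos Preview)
Your proposal is correct and follows essentially the same route as the paper: start from the derivative-free affine representation of \lemref{linear_h_wrt_X} to obtain $\widehat{h_\theta}(t)-h_\theta^{*}(t)=\int_{t}^{T}R_\theta(T-t,T-s)\,\Delta X(s)\,ds+E_\theta(t)\,\Delta X(t)$, bound the resolvent via Gr\"onwall (the paper invokes this as ``proposition~3 in supplementary material''), then apply Cauchy--Schwarz and integrate to assemble $K_{5}$. Your explicit remark that subtracting the defining ODEs would introduce an uncontrollable $\dot{\Delta X}$ term, and that \lemref{linear_h_wrt_X} is precisely what circumvents this, is spot on and in fact articulates the motivation more clearly than the paper itself.
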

\begin{proof}
Thanks lemma \ref{lem:linear_h_wrt_X} we have the following affine
dependance of $h$ w.r.t $X$: 
\[
\widehat{h_{\theta}}(t)-h_{\theta}^{*}(t)=\int_{t}^{T}R_{\theta}(T-t,T-s)\left(\widehat{X}(s)-X^{*}(s)\right)ds+E_{\theta}(t)\left(\widehat{X}(t)-X^{*}(t)\right)
\]

Taking the norm gives us: 
\[
\begin{array}{lll}
\left\Vert \widehat{h}_{\theta}(t)-h_{\theta}^{*}(t)\right\Vert _{2} & \leq & \left\Vert \int_{t}^{T}R_{\theta}(T-t,T-s)\left(\widehat{X}(s)-X^{*}(s)\right)ds\right\Vert _{2}\\
 & + & \left\Vert E_{\theta}(t)\left(\widehat{X}(t)-X^{*}(t)\right)\right\Vert _{2}\\
 & \leq & \sqrt{d}\left(\sqrt{T}de^{\sqrt{d}\left(\overline{A}+\frac{\overline{E}}{\lambda}\right)T}\left\Vert \widehat{X}-X^{*}\right\Vert _{L^{2}}+\left\Vert E_{\theta}(t)\right\Vert _{2}\left\Vert \widehat{X}(t)-X^{*}(t)\right\Vert _{2}\right)
\end{array}
\]

Using condition C1 and C3 and the upper bound $\left\Vert R_{\theta}(T-t,T-s)\right\Vert _{2}\leq de^{\sqrt{d}\left(\overline{A}+\frac{\overline{E}}{\lambda}\right)T}$
thanks to proposition 3 in supplementary material.

Finally we obtain: 
\[
\left\Vert \widehat{h_{\theta}}-h_{\theta}^{*}\right\Vert _{L^{2}}\leq\sqrt{d}\left(Tde^{\sqrt{d}\left(\overline{A}+\frac{\overline{E}}{\lambda}\right)T}+\left\Vert E_{\theta}\right\Vert _{L^{2}}\right)\left\Vert \widehat{X}-X^{*}\right\Vert _{L^{2}}
\]

\end{proof}

\subsection{Asymptotic normality proof}

The proof of continuity of some functionals useful for proposition
\propref{linear_decomposition} are left in the supplementary materials,
as they require cumbersome computations and they does not provide
particular insights in the mechanics of the proofs.
\begin{prop}
\label{prop:linear_decomposition_ap}Under conditions 1-5, we have
: 
\[
\widehat{\theta}^{T}-\theta^{*}=2\frac{\partial^{2}S(X^{*};\theta^{*},\lambda)}{\partial\theta^{T}\partial\theta}^{-1}\left(\Gamma(\widehat{X})-\Gamma(X^{*})\right)+o_{P}(1)
\]
where $\Gamma\::\: C\left(\left[0\,,\, T\right],\mathbb{R}^{d}\right)\:\rightarrow\mathbb{R}^{p}$
is a linear functional defined by 
\begin{equation}
\Gamma(X)=\int_{0}^{T}\left(\frac{\partial\left(A_{\theta*}(t).X^{*}\right)}{\partial\theta}+\frac{1}{\lambda}\frac{\partial h_{\theta^{*}}(t,X^{*})}{\partial\theta}\right)^{T}\left(\int_{t}^{T}R_{\theta^{*}}(T-t,T-s)X(s)ds\right)dt.\label{eq:DefinitionGamma_ap}
\end{equation}
 $R_{\theta^{*}}$ is defined \textup{by (\ref{eq:reverse_time_R}).}\end{prop}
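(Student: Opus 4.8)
The plan is to treat $\widehat{\theta}^T$ as a standard M-estimator and to linearize its first-order optimality condition around the base point $(X^*,\theta^*)$. Writing $\Psi(X,\theta):=\nabla_\theta S(X;\theta,\lambda)$ for the gradient computed in Proposition \ref{prop:c1_caracterisation_S}, the estimator satisfies $\Psi(\widehat{X},\widehat{\theta}^T)=0$, while the fact that $\theta^*$ is the global minimizer of $S(X^*;\cdot,\lambda)$ (Proposition \ref{prop:asym_identifiability}) gives $\Psi(X^*,\theta^*)=0$. I would subtract these and split the increment into a $\theta$-part and an $X$-part:
\[
0=\bigl[\Psi(\widehat{X},\widehat{\theta}^T)-\Psi(\widehat{X},\theta^*)\bigr]+\bigl[\Psi(\widehat{X},\theta^*)-\Psi(X^*,\theta^*)\bigr].
\]
A mean-value expansion in $\theta$ turns the first bracket into $H_n(\widehat{\theta}^T-\theta^*)$, where $H_n$ is the Hessian $\partial^2_\theta S$ evaluated at $(\widehat{X},\tilde{\theta})$ for some $\tilde{\theta}$ between $\widehat{\theta}^T$ and $\theta^*$. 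Consistency (Theorem \ref{thm:consistency}) forces $\tilde{\theta}\overset{P}{\to}\theta^*$, and joint continuity of the Hessian in $(X,\theta)$ together with $\widehat{X}\overset{P}{\to}X^*$ gives $H_n\overset{P}{\to}\partial^2_\theta S(X^*;\theta^*,\lambda)$, which is invertible by condition C5.

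The crux is to identify the second bracket with $-2\bigl(\Gamma(\widehat{X})-\Gamma(X^*)\bigr)$ up to negligible order; since $\Gamma$ is linear this is $-2\Gamma(\delta)$ with $\delta:=\widehat{X}-X^*$. Here I would exploit two exact identities at the base point: because $X^*$ solves the ODE at $\theta^*$ and the optimal control vanishes there, one has $h_{\theta^*}(\cdot,X^*)\equiv 0$ and $A_{\theta^*}X^*+r_{\theta^*}-\dot{X}^*\equiv 0$. Differentiating the two integrals defining $\Psi$ in the direction $\delta$, every term in which one of these two quantities appears \emph{undifferentiated} drops out, so only the contributions where the factor $h_{\theta^*}(t,X)$ (respectively $A_{\theta^*}X+r_{\theta^*}-\dot{X}$ and $\tfrac1\lambda h$) is itself differentiated survive. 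Those surviving pieces involve the Fréchet derivative $D_X h_{\theta^*}(t,X^*)[\delta]$, which by the affine formula of Lemma \ref{lem:linear_h_wrt_X} equals $\int_t^T R_{\theta^*}(T-t,T-s)\delta(s)\,ds+E_{\theta^*}(t)\delta(t)$; the integral piece is exactly the inner factor appearing in $\Gamma$.

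What remains — and what I expect to be the main obstacle — is the bookkeeping showing that all the \emph{other} surviving pieces (the $E_{\theta^*}\delta$ contributions, the $\partial_\theta r_{\theta^*}$ contributions, and in particular a term carrying $\dot\delta$ coming from differentiating $A_{\theta^*}X+r_{\theta^*}-\dot{X}$) recombine to leave precisely $-2\Gamma(\delta)$. The $\dot\delta$ term must be eliminated by an integration by parts of exactly the type carried out in Lemma \ref{lem:contolled_X_der_h_by_X_h}, which transfers the derivative onto $E_{\theta^*}$, $R_{\theta^*}$ and $\partial_\theta h_{\theta^*}$ and invokes the Riccati and resolvent relations; this is the cumbersome computation the paper defers to the supplementary material. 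It is cleanest to run the whole differentiation on the derivative-free representation of $\Psi$ (the one guaranteed by the proposition preceding the Consistency subsection), so that neither the linear term nor the remainder carries $\dot\delta$.

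Finally, working with that derivative-free form, $\Psi(\cdot,\theta^*)$ is a quadratic functional of $X$ (through the affine dependence of $h$ on $X$), so the discarded part of the second bracket is a bilinear form evaluated at $(\delta,\delta)$ and is $O_P(\|\widehat{X}-X^*\|_{L^2}^2)$, hence $o_P$ of the leading linear term. Inverting $H_n$ then yields
\[
\widehat{\theta}^T-\theta^*=2\,\partial^2_\theta S(X^*;\theta^*,\lambda)^{-1}\bigl(\Gamma(\widehat{X})-\Gamma(X^*)\bigr)+o_P(1),
\]
which is the claim; combined with the $O_P(n^{-1/2})$ rate of Proposition \ref{prop:newey_result} (so that the remainder is in fact $o_P$ of an $O_P(n^{-1/2})$ term), this delivers Theorem \ref{thm:asymptotic}.
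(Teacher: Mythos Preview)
Your approach is essentially the same M-estimator linearization as the paper's: both exploit $h_{\theta^*}(\cdot,X^*)\equiv 0$ and $A_{\theta^*}X^*+r_{\theta^*}-\dot X^*\equiv 0$ to kill terms, both use the affine structure of $h$ in $X$ (Lemma~\ref{lem:linear_h_wrt_X}) for the $X$-increment, and both remove $\dot\delta$ by integration by parts with the boundary conditions $\partial_\theta h_{\theta^*}(T,X^*)=0$ and $\widehat X(0)=X_0^*$.

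There is one technical point where your organization creates a gap the paper deliberately avoids. You apply the mean-value theorem directly to $\Psi(\widehat X,\cdot)=\nabla_\theta S(\widehat X;\cdot,\lambda)$ to produce a Hessian $H_n=\partial^2_\theta S(\widehat X;\tilde\theta,\lambda)$ at an intermediate $\tilde\theta$. Under C1--C5 the paper only establishes $S\in C^1$ in $\theta$ (Proposition~\ref{prop:c1_caracterisation_S}); the general Hessian would carry $\partial^2_\theta A_\theta$ and $\partial^2_\theta r_\theta$, which are not assumed to exist --- the remark following condition~C5 stresses precisely that only $C^1$ regularity of $A_\theta,r_\theta$ is needed, because at $\theta^*$ the second-order pieces are multiplied by $h_{\theta^*}^*=0$. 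The paper sidesteps this by \emph{not} Taylor-expanding the full gradient: instead it inserts the exact decompositions
\[
A_{\hat\theta}\widehat X+r_{\hat\theta}-\dot{\widehat X}=A_{\hat\theta}(\widehat X-X^*)+\tfrac{\partial(A_{\tilde\theta}X^*+r_{\tilde\theta})}{\partial\theta}(\widehat\theta-\theta^*)+(\dot X^*-\dot{\widehat X})
\]
and $h_{\hat\theta}(t,\widehat X)=\partial_\theta h_{\tilde\theta}(t,\widehat X)\,(\widehat\theta-\theta^*)+N_{\theta^*}(t)(\widehat X-X^*)$ into the gradient formula, collecting the result as $\int H_1\,dt\cdot(\widehat\theta-\theta^*)=\int H_2(\widehat X-X^*)\,dt$ plus the $\dot\delta$ term. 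Here $H_1$ is built only from \emph{first} $\theta$-derivatives of the pieces (evaluated at a mixture of $\widehat\theta,\tilde\theta,\widehat X,X^*$) and converges to the Hessian at $(\theta^*,X^*)$ by continuity of those first derivatives; no mean-value theorem on the full gradient is invoked. If you keep your two-bracket split, you need either an extra $C^2$ assumption on $A_\theta,r_\theta$, or to redo the $\theta$-bracket in the paper's piecewise fashion.
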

\begin{proof}
For the sake of notational simplicity here $\widehat{\theta}^{T}$
will be simply denoted $\widehat{\theta}$. 

The first order optimal condition is 
\[
\nabla_{\theta}S(\hat{X};\widehat{\theta},\lambda)=0
\]
Equivalently, we have 
\begin{equation}
\int_{0}^{T}\frac{\partial\left(A_{\widehat{\theta}}(t).\widehat{X}+r_{\widehat{\theta}}(t)\right)}{\partial\theta}^{T}h_{\widehat{\theta}}(t,\widehat{X})+\frac{\partial h_{\widehat{\theta}}(t,\widehat{X})}{\partial\theta}^{T}\left(A_{\widehat{\theta}}(t).\widehat{X}+r_{\widehat{\theta}}(t)-\dot{\widehat{X}}\right)+\frac{1}{\lambda}\frac{\partial h_{\widehat{\theta}}(t,\widehat{X})}{\partial\theta}^{T}h_{\widehat{\theta}}(t,\widehat{X})=0\label{eq:first_order_condition_ap}
\end{equation}
We use the following decomposition for $A_{\widehat{\theta}}(t).\widehat{X}-\dot{\widehat{X}}$
and $h_{\widehat{\theta}}(t,\widehat{X})$: 
\[
\begin{array}{l}
A_{\widehat{\theta}}(t).\widehat{X}+r_{\widehat{\theta}}(t)-\dot{\widehat{X}}=A_{\widehat{\theta}}(t)\left(\widehat{X}-X^{*}\right)+\frac{\partial\left(A_{\widetilde{\theta}}(t).X^{*}+r_{\widetilde{\theta}}(t)\right)}{\partial\theta}\left(\widehat{\theta}-\theta^{*}\right)+\left(\dot{X}^{*}-\dot{\widehat{X}}\right)\\
h_{\widehat{\theta}}(t,\widehat{X})=\frac{\partial\left(h_{\widetilde{\theta}}(t,\widehat{X})\right)}{\partial\theta}\left(\widehat{\theta}-\theta^{*}\right)+N_{\theta^{*}}(t).\left(\widehat{X}-X^{*}\right)
\end{array}
\]
with $\widetilde{\theta}$ being a random point between $\theta^{*}$
and $\widehat{\theta}$ and $N$ defined as in \lemref{linear_h_wrt_X}.
By replacing in (\ref{eq:first_order_condition_ap}), we obtain:

\begin{equation}
\int_{0}^{T}H_{1}(t,\widehat{\theta},\widehat{X})dt\;\left(\widehat{\theta}-\theta^{*}\right)=\int_{0}^{T}H_{2}(t,\widehat{\theta},\widehat{X})\left(\widehat{X}-X^{*}\right)-\frac{\partial\left(h_{\widehat{\theta}}(t,\widehat{X})\right)}{\partial\theta}^{T}\left(\dot{X}^{*}-\dot{\widehat{X}}\right)dt\label{eq:linear_form_non_as_ap}
\end{equation}
with 
\[
\begin{array}{l}
H_{1}(t,\widehat{\theta},\widehat{X})=\frac{\partial\left(A_{\widehat{\theta}}(t).\widehat{X}+r_{\widehat{\theta}}(t)\right)}{\partial\theta}^{T}\frac{\partial h_{\widetilde{\theta}}(t,\widehat{X})}{\partial\theta}+\frac{\partial\left(h_{\widehat{\theta}}(t,\widehat{X})\right)^{T}}{\partial\theta}\frac{\partial\left(A_{\widetilde{\theta}}(t).X^{*}+r_{\widetilde{\theta}}(t)\right)}{\partial\theta}+\frac{1}{\lambda}\frac{\partial\left(h_{\widehat{\theta}}(t,\widehat{X})\right)}{\partial\theta}^{T}\frac{\partial h_{\widetilde{\theta}}(t,\widehat{X})}{\partial\theta}\\
H_{2}(t,\widehat{\theta},\widehat{X})=\frac{\partial\left(A_{\widehat{\theta}}(t).\widehat{X}+r_{\widehat{\theta}}(t)\right)}{\partial\theta}^{T}N_{\theta^{*}}(t)+\frac{\partial\left(h_{\widehat{\theta}}(t,\widehat{X})\right)^{T}}{\partial\theta}A_{\widehat{\theta}}(t)+\frac{1}{\lambda}\frac{\partial\left(h_{\widehat{\theta}}(t,\widehat{X})\right)}{\partial\theta}^{T}N_{\theta^{*}}(t)
\end{array}
\]
Thanks to propositions in supplementary material, the following functionals
\[
\left\{ \begin{array}{l}
D_{1}:\theta\longmapsto\left(t\longmapsto A_{\theta}(t)\right)\\
D_{2}:\left(\theta,X\right)\longmapsto\left(t\longmapsto\frac{\partial\left(A_{\theta}(t).X\right)}{\partial\theta}\right)\\
D_{3}:\left(\theta,X\right)\longmapsto\left(t\longmapsto h_{\theta}(t,X)\right)\\
D_{4}:\left(\theta,X\right)\longmapsto\left(t\longmapsto\frac{\partial\left(h_{\theta}(t,X)\right)}{\partial\theta}\right)
\end{array}\right.
\]
are continuous on $\Theta\times L^{2}\left(\left[0\,,\, T\right],\mathbb{R}^{d}\right)$,
and the continuous mapping theorem implies that $t\longmapsto H_{1}(t,\widehat{\theta},\widehat{X})$
and $t\longmapsto H_{2}(t,\widehat{\theta},\widehat{X})$ converge
in probability in the $L^{2}$ sense to the function $t\longmapsto H_{1}(t,\theta^{*},X^{*})$
and $t\longmapsto H_{2}(t,\theta^{*},X^{*})$. So $\left\Vert H_{1}(.,\widehat{\theta},\widehat{X})\right\Vert _{L^{2}}$
converges in probability to $\left\Vert H_{1}(.,\theta^{*},X^{*})\right\Vert _{L^{2}}$
and so it is bounded. Finally, we have the convergence in probability
of each entry of $\int_{0}^{T}H_{1}(t,\widehat{\theta},\widehat{X})dt$
to the corresponding entry to $\int_{0}^{T}H_{1}(t,\theta^{*},X^{*})dt$.
Moreover, condition C5 assumes that the Hessian 
\[
\int_{0}^{T}H_{1}(t,\theta^{*},X^{*})dt=\frac{1}{2}\frac{\partial^{2}S(X^{*};\theta^{*},\lambda)}{\partial\theta^{T}\partial\theta}
\]
is nonsingular at $\theta=\theta^{*}$. Finally, we have 
\[
\int_{0}^{T}H_{1}(t,\widehat{\theta},\widehat{X})dt\overset{P}{\longrightarrow}\frac{1}{2}\frac{\partial^{2}S(X^{*};\theta^{*},\lambda)}{\partial\theta^{T}\partial\theta}
\]
By an analogous reasoning, the asymptotic behavior of $\widehat{\theta}-\theta^{*}$
is given by 
\[
2\frac{\partial^{2}S(X^{*};\theta^{*},\lambda)}{\partial\theta^{T}\partial\theta}^{-1}\left(\int_{0}^{T}H_{2}(t,\theta^{*},X^{*})\left(\widehat{X}-X^{*}\right)dt-\frac{\partial\left(h_{\theta^{*}}(t,X^{*})\right)}{\partial\theta}^{T}\left(\dot{X}^{*}-\dot{\widehat{X}}\right)dt\right)
\]
and Integration By Part gives
\[
\begin{array}{ccc}
\int_{0}^{T}\frac{\partial h_{\theta^{*}}(t,X^{*})}{\partial\theta}^{T}\left(\dot{X}^{*}-\dot{\widehat{X}}\right)dt & = & \left[\frac{\partial h_{\theta^{*}}(t,X^{*})}{\partial\theta}^{T}\left(X^{*}-\widehat{X}\right)\right]_{0}^{T}\\
 & - & \int_{0}^{T}\frac{d}{dt}\left(\frac{\partial h_{\theta^{*}}(t,X^{*})}{\partial\theta}^{T}\right)\left(X^{*}-\widehat{X}\right)dt
\end{array}
\]
But, as $\frac{\partial h(T,\theta^{*},X^{*})}{\partial\theta}=0$
and $\widehat{X}(0)=x_{0}^{*}$ we have:

\[
\begin{array}{l}
\int_{0}^{T}\left(H_{2}(t,\theta^{*},X^{*})+\frac{d}{dt}\left(\frac{\partial h_{\theta^{*}}(t,X^{*})}{\partial\theta}^{T}\right)\right).X(t)dt\\
=\int_{0}^{T}\left(\frac{\partial\left(A_{\theta*}(t).X^{*}+r_{\theta^{*}}(t)\right)}{\partial\theta}+\frac{1}{\lambda}\frac{\partial h_{\theta^{*}}(t,X^{*})}{\partial\theta}\right)^{T}\left(h_{\theta^{*}}(t,X)-E_{\theta^{*}}(t).X(t)\right)dt
\end{array}
\]
Hence we can write
\[
\widehat{\theta}-\theta^{*}=2\frac{\partial^{2}S(X^{*};\theta^{*},\lambda)}{\partial\theta^{T}\partial\theta}^{-1}\left(\Gamma(\widehat{X})-\Gamma(X^{*})\right)+o_{P}(1)
\]
with 
\[
\Gamma(X)=\int_{0}^{T}\left(\frac{\partial\left(A_{\theta*}(t).X^{*}\right)}{\partial\theta}+\frac{1}{\lambda}\frac{\partial h_{\theta^{*}}(t,X^{*})}{\partial\theta}\right)^{T}\left(\int_{t}^{T}R_{\theta^{*}}(T-t,T-s)X(s)ds\right)dt
\]
where $R_{\theta^{*}}$ is defined by (\ref{eq:reverse_time_R}).\end{proof}
\begin{prop}
\label{prop:newey_result_ap}Under conditions 1-8 and by defining
$\Gamma$ as in proposition \propref{linear_decomposition_ap} we
have that $\Gamma(\widehat{X})-\Gamma(X^{*})$ is asymptotically normal
and $\Gamma(\widehat{X})-\Gamma(X^{*})=O_{P}(n^{-1/2})$\end{prop}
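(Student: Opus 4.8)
The plan is to exploit the structural fact, already visible in (\ref{eq:DefinitionGamma_ap}), that $\Gamma$ is a \emph{continuous linear} functional of $X$, and then to recognize $\Gamma(\widehat{X})$ as a plug-in series estimator of $\Gamma(X^{*})$, so that the asymptotic normality theory of \cite{Newey1997} for linear functionals of regression splines applies directly.

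First I would put $\Gamma$ into single-integral form. Setting $\psi_{\theta^{*}}(t)=\frac{\partial\left(A_{\theta^{*}}(t).X^{*}\right)}{\partial\theta}+\frac{1}{\lambda}\frac{\partial h_{\theta^{*}}(t,X^{*})}{\partial\theta}$, the functional (\ref{eq:DefinitionGamma_ap}) is an integral over the triangle $0\le t\le s\le T$, so Fubini's theorem gives
\[
\Gamma(X)=\int_{0}^{T}G_{\theta^{*}}(s)\,X(s)\,ds,\qquad G_{\theta^{*}}(s)=\int_{0}^{s}\psi_{\theta^{*}}(t)^{T}R_{\theta^{*}}(T-t,T-s)\,dt,
\]
where $G_{\theta^{*}}$ is a fixed $\mathbb{R}^{p\times d}$-valued weight depending only on $\theta^{*}$ and $\lambda$. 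Boundedness (hence square-integrability) of $G_{\theta^{*}}$ follows from the regularity already established: $\psi_{\theta^{*}}$ is bounded by C3--C4 together with the continuity of $\left(\theta,X\right)\mapsto\partial h_{\theta}(\cdot,X)/\partial\theta$ used in \propref{c1_caracterisation_S}, and $\left\Vert R_{\theta^{*}}(T-t,T-s)\right\Vert _{2}\le d\,e^{\sqrt{d}\left(\bar{A}+\bar{E}/\lambda\right)T}$ by the bound invoked in the proof of \propref{controlled_h_by_X}. By linearity this yields the representation $\Gamma(\widehat{X})-\Gamma(X^{*})=\int_{0}^{T}G_{\theta^{*}}(s)\bigl(\widehat{X}(s)-X^{*}(s)\bigr)\,ds$.

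Next I would identify this with a plug-in functional estimator in the sense of \cite{Newey1997}. Since $\widehat{X}=\beta_{K}^{T}p_{K}$ is a least-squares regression spline, $\Gamma(\widehat{X})$ is precisely a continuous linear functional of a series estimator of the regression function $X^{*}$. The three ingredients of Newey's asymptotic-normality theorem would then be checked against the stated hypotheses: (i) mean-square continuity of the functional, which holds because $G_{\theta^{*}}\in L^{2}$ supplies a square-integrable Riesz representer $s\mapsto G_{\theta^{*}}(s)$; (ii) the sampling and design assumptions, given by C6 (i.i.d.\ data with $\mathrm{Var}(Y_{i}\mid t_{i})=\sigma I_{d}$) and C7 (uniform design on $[0,T]$, hence a density bounded away from $0$ and $\infty$); and (iii) the bias--variance rate conditions, given by C8: the smoothness $A_{\theta^{*}},r_{\theta^{*}}\in C^{s-1}$ forces $X^{*}\in C^{s}$ through the ODE (\ref{eq:LinearODEmodel}), so the spline approximation error is $O(K^{-s})$ and $\sqrt{n}K^{-s}\to0$ annihilates the bias contribution $\sqrt{n}\,\Gamma(X^{*}-\Pi_{K}X^{*})$, while $K^{s}/n\to0$ (a fortiori $K/n\to0$ since $s\ge1$) controls the variance and the remainder in the linearization of the least-squares coefficients. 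Invoking Newey's theorem then gives $\sqrt{n}\bigl(\Gamma(\widehat{X})-\Gamma(X^{*})\bigr)\overset{d}{\rightarrow}\mathcal{N}(0,V)$ for a finite $V$, whence $\Gamma(\widehat{X})-\Gamma(X^{*})=O_{P}(n^{-1/2})$; the explicit form of $V$ is the one recorded in Appendix D.

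I expect the principal difficulty to be aligning the paper's conditions with Newey's precise hypotheses, and in particular verifying the mean-square continuity step rigorously: it is exactly the property $G_{\theta^{*}}\in L^{2}$ that secures the parametric $\sqrt{n}$ rate, rather than the slower nonparametric rate one would obtain for a functional lacking an $L^{2}$ representer. The accompanying technical work is to control the spline-specific quantities (the sup-norm bound on the basis, of order $\sqrt{K}$, and the spectrum of the design Gram matrix) so that, under C7--C8, the expansion of $\Gamma(\widehat{X})-\Gamma(X^{*})$ into a sample average of $G_{\theta^{*}}$-weighted residuals plus a bias term leaves a remainder that is genuinely $o_{P}(n^{-1/2})$.
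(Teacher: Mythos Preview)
Your proposal is correct and follows essentially the same route as the paper: both arguments reduce to checking the hypotheses of Theorem~9 in \cite{Newey1997} after observing that $\Gamma$ is a continuous linear functional on $L^{2}$ with an $L^{2}$ Riesz representer, and then matching conditions C6--C8 with Newey's sampling, design, and smoothness requirements. The only cosmetic difference is that you construct the representer $G_{\theta^{*}}$ explicitly via Fubini, whereas the paper simply invokes the Riesz--Fr\'echet theorem to assert its existence; your explicit construction is a slight improvement in transparency but does not change the argument.
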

\begin{proof}
This proposition is a direct consequence of Theorem 9 in \cite{Newey1997}.
The conditions to be satisfied are 
\begin{enumerate}
\item $\left(Y_{i},t_{i}\right)$ are i.i.d with $Var(Y\mid t)$ bounded.
\item $E(\left(Y-X^{*}(t)\right)^{4}\mid t)$ is bounded, and $Var(Y\mid t)$
is bounded away from 0.
\item The support of $t$ is a compact interval on which $t$ has a probability
density function bounded away from 0.
\item There is $v(t)$ such that $E(v(t)v(t)^{T})$ is finite and non-singular
such that: $D(\Gamma)(X^{*})(X^{*})=E(v(t)X^{*}(t))$ and $D(\Gamma)(X^{*})(p_{kK})=E(v(t)p_{kK}(t))$
for all $k$ and $K$ and there is $\beta_{K}$ with $E(\left\Vert v(t)-\beta_{K}p_{K}(t)\right\Vert _{2}^{2})\rightarrow0$
\item $X^{*}(t)=E(Y\mid t)$ is derivable of order $s$ on the support of
$t$.
\end{enumerate}
Requirements 1,2,3 are direct consequences of conditions C6 and C7
(and the solution is always defined on $\left[0,T\right]$). 

For the fourth requirement we will consider the monodimensional case
$d=1$. We know that $\Gamma$ is linear and continuous on $L^{2}\left(\left[0,\, T\right],\mathbb{R}^{d}\right)$
thanks to conditions C1 and C3-4 and hence differentiable with: $D(\Gamma)(X^{*})(X)=\Gamma(X)$
. By the Riesz-Frechet representation theorem we have: $v\in L^{2}(\left[0\,,\, T\right],\mathbb{R})$
s.t $\Gamma(X)=\int_{0}^{T}v(t)X(t)dt$ which verify the three conditions
of the forth requirement. Starting from the mono-dimensional case,
multi-dimensional case can be made componentwise.

Requirement 5 is a simple consequence of the condition C8.
\end{proof}
\newpage{}

\section{Gradient Computation : Adjoint Method \& Sensitivity equation}

\subsection{Notation and partial derivative computation}

For optimization purpose we need to compute the gradient of $S(\widehat{X};\theta,\lambda)$.
For this we will present two methods: a direct approach using sensitivity
equation and a second one using adjoint method.

\subsubsection{Row vector notation for the vector field of the general Riccati equation}

We will define the solution of the general Riccati equation in row
formulation, we introduce 
\[
Q_{\theta}(t)=\left(\widehat{h_{\theta}}^{T},\left(E_{\theta}^{r}\right)^{T}\right)^{T}(t)
\]
with $E_{\theta}^{r}:=\left(E_{\theta,1}^{T},\cdots,E_{\theta,d}^{T}\right)^{T}$the
row formulation of $E_{\theta}$, $E_{\theta,i}$ beeing the $i-th$
column of $E_{\theta}$. It is a $D:=d^{2}+d$ sized function respecting
the ODE :
\[
\begin{array}{l}
\dot{Q_{\theta}}=F(Q_{\theta},\theta,t)\\
Q_{\theta}(T)=0
\end{array}
\]

by introducing the general vector field $F$:

\[
F(Q_{\theta},\theta,t)=\left(\begin{array}{c}
G(Q_{\theta},\theta,t)\\
H(Q_{\theta},\theta)
\end{array}\right)
\]

with $G$ and $H$ defined by: 
\[
\begin{array}{lll}
G(Q_{\theta},\theta,t) & := & -\left(A_{\theta}(t)^{T}+\frac{E_{\theta}}{\lambda}\right)\widehat{h_{\theta}}-E_{\theta}\left(A_{\theta}(t)\widehat{X}(t)-\dot{\widehat{X}}(t)+r_{\theta}(t)\right)\\
H_{(j-1)d+i}(Q_{\theta},\theta) & := & \delta_{i,j}-(A_{\theta,i}^{T}E_{j}+A_{\theta,j}^{T}E_{\theta,i}+\frac{1}{\lambda}E_{\theta,i}^{T}E_{\theta,j})
\end{array}
\]

and $A_{\theta,i}$ beeing the $i-th$ column of $A_{\theta}$.

We also introduce:
\[
g(Q_{\theta},\theta,t)=-2\left(A_{\theta}(t)\widehat{X}(t)-\dot{\widehat{X}}(t)+r_{\theta}(t)\right)^{T}\widehat{h_{\theta}}-\frac{1}{\lambda}\widehat{h_{\theta}}^{T}\widehat{h_{\theta}}
\]

In order to write our system under the row form: 

\begin{equation}
\begin{array}{l}
S(\widehat{X};\theta,\lambda)\,:=\int_{0}^{T}g(Q_{\theta}(t),\theta,t)dt\\
\left\{ \begin{array}{l}
\dot{Q_{\theta}}=F(Q_{\theta},\theta,t)\\
Q_{\theta}(T)=0
\end{array}\right.
\end{array}\label{eq:func_criteria_simplified}
\end{equation}
For the next subsections we will drop dependence in $\theta$ for
$A_{\theta},\, r_{\theta},\, E_{\theta},\,\widehat{h_{\theta}}$

\subsubsection{Partial derivative of Riccati vector field}

In order to compute sensitivity equation or adjoint model we need
to compute $\frac{\partial g}{\partial\theta}(Q_{\theta},\theta,t),\frac{\partial g}{\partial Q}(Q_{\theta},\theta,t),$
$\frac{\partial F}{\partial\theta}(Q_{\theta},\theta,t)$ and$\frac{\partial F}{\partial Q}(Q_{\theta},\theta,t)$

The computation for $\frac{\partial g}{\partial\theta}(Q_{\theta},\theta,t),\:\frac{\partial g}{\partial Q}(Q_{\theta},\theta,t)$
is straightforward

\[
\frac{\partial g}{\partial\theta}(Q_{\theta},\theta,t)=-2\widehat{h}^{T}\left(\frac{\partial\left(A(t)\widehat{X}(t)\right)}{\partial\theta}+\frac{\partial r}{\partial\theta}(t)\right)
\]

\[
\frac{\partial g}{\partial Q}(Q_{\theta},\theta,t)=\left(-2\left(A(t)\widehat{X}(t)-\dot{\widehat{X}}(t)+r(t)+\frac{\widehat{h}}{\lambda}\right)^{T},0_{1,d^{2}}\right)
\]

For $\frac{\partial F}{\partial\theta}(h,E^{r},\theta,t)$ and $\frac{\partial F}{\partial Q}(R_{\theta},\theta,t)$
we obtain

\[
\begin{array}{l}
\frac{\partial F}{\partial\theta}(Q_{\theta},\theta,t)=\left(\begin{array}{c}
\frac{\partial G}{\partial\theta}(Q_{\theta},\theta,t)\\
\frac{\partial H}{\partial\theta}(Q_{\theta},\theta)
\end{array}\right)\\
\frac{\partial F}{\partial Q}(Q_{\theta},\theta,t)=\left(\begin{array}{cc}
-\left(A(t)^{T}+\frac{E}{\lambda}\right) & \frac{\partial G_{i}}{\partial E_{j}^{r}}(Q_{\theta},\theta,t)\\
0_{d^{2},d} & \frac{\partial H(Q_{\theta},\theta)}{\partial E^{r}}
\end{array}\right)
\end{array}
\]

with:
\[
\begin{array}{l}
\frac{\partial G_{i}}{\partial E_{(k-1)d+h}^{r}}(Q_{\theta},\theta,t)=-\delta_{i,h}\left(\frac{\widehat{h}}{\lambda}+A(t)\widehat{X}(t)-\dot{\widehat{X}}(t)+r(t)\right)_{k}\\
\frac{\partial G}{\partial\theta}(Q_{\theta},\theta,t)=-\left(h^{T}\frac{\partial A_{i}(t)}{\partial\theta}\right)_{1\leq i\leq d}-E\left(\frac{\partial\left(A(t)\widehat{X}(t)\right)}{\partial\theta}+\frac{\partial r(t)}{\partial\theta}\right)
\end{array}
\]

We also need to compute $H(Q_{\theta},\theta)$ partial derivative
w.r.t $E^{r}$ and $\theta$.

We have:
\[
\left(\frac{\partial H(Q_{\theta},\theta)}{\partial E^{r}}\right)_{(j-1)d+i}=-\left(\begin{array}{ccccc}
0 & A_{j}^{t} & 0 & A_{i}^{t} & 0\end{array}\right)-\frac{1}{\lambda}\left(\begin{array}{ccccc}
0 & E_{j}^{t} & 0 & E_{i}^{t} & 0\end{array}\right)
\]

Because:
\begin{itemize}
\item $\frac{\partial}{\partial E^{r}}\left(A_{j}^{t}E_{i}+A_{i}^{t}E_{j}\right)=\left(\begin{array}{ccccc}
0 & A_{j}^{t} & 0 & A_{i}^{t} & 0\end{array}\right)$ where $A_{j}^{t}$ is in $i-th$ position and $A_{i}^{t}$ is in
$j-th$ position.
\item $\frac{1}{\lambda}\frac{\partial}{\partial E}\left(E_{j}^{t}E_{i}\right)=\left(\begin{array}{ccccc}
0 & \frac{1}{\lambda}E_{j}^{t} & 0 & 0 & 0\end{array}\right)+\left(\begin{array}{ccccc}
0 & 0 & \frac{1}{\lambda}E_{i}^{t} & 0 & 0\end{array}\right)$ where $E_{j}^{t}$ is in $i-th$ position and $E_{i}^{t}$ is in
$j-th$ position.
\end{itemize}
And:

\[
\left(\frac{\partial H(Q_{\theta},\theta)}{\partial\theta}\right)_{(j-1)d+i}=-E_{i}^{t}\frac{\partial A_{j}}{\partial\theta}-E_{j}^{t}\frac{\partial A_{i}}{\partial\theta}
\]

\begin{itemize}
\item Because $\frac{\partial}{\partial\theta}\left(A_{j}^{t}E_{i}+A_{i}^{t}E_{j}\right)=E_{i}^{t}\frac{\partial A_{j}}{\partial\theta}+E_{j}^{t}\frac{\partial A_{i}}{\partial\theta}$
where $\frac{\partial A_{i}}{\partial\theta}=\left(\frac{\partial A_{i}}{\partial\theta_{1}}\cdots\frac{\partial A_{i}}{\partial\theta_{p}}\right)$
a $d\times p$ matrix
\end{itemize}

\subsection{Gradient computation by sensitivity equation }

By Gradient definition we have 
\[
\nabla_{\theta}S(\widehat{X};\theta,\lambda)=\int_{0}^{T}\frac{\partial g(Q_{\theta}(t),\theta,t)}{\partial Q}\frac{\partial Q_{\theta}(t)}{\partial\theta}+\frac{\partial g(Q_{\theta}(t),\theta,t)}{\partial\theta}dt
\]

With $\frac{\partial Q_{\theta}(t)}{\partial\theta}$ solution of
the sensitivity equation: 
\[
\frac{d}{dt}(\frac{\partial Q_{\theta}(t)}{\partial\theta})=\frac{\partial F}{\partial Q}(Q_{\theta}(t),\theta,t)\frac{\partial Q_{\theta}(t)}{\partial\theta}+\frac{\partial F}{\partial\theta}(Q_{\theta}(t),\theta,t)
\]

And we know that $Q_{\theta}(T)=0$ so $\frac{\partial Q_{\theta}(T)}{\partial\theta}=0$,
hence we can obtain $\frac{\partial Q_{\theta}(t)}{\partial\theta}$
by solving the Cauchy problem: 

\[
\begin{array}{l}
\frac{d}{dt}(\frac{\partial Q_{\theta}(t)}{\partial\theta})=\frac{\partial F}{\partial Q}(Q_{\theta}(t),\theta,t)\frac{\partial Q_{\theta}(t)}{\partial\theta}+\frac{\partial F}{\partial\theta}(Q_{\theta}(t),\theta,t)\\
\frac{\partial Q_{\theta}(T)}{\partial\theta}=0
\end{array}
\]

\subsection{Gradient computation by adjoint Method}

Once again we have 
\[
\nabla_{\theta}S(\widehat{X};\theta,\lambda)=\int_{0}^{T}\frac{\partial g(Q_{\theta}(t),\theta,t)}{\partial Q}\frac{\partial Q_{\theta}(t)}{\partial\theta}+\frac{\partial g(Q_{\theta}(t),\theta,t)}{\partial\theta}dt
\]

With $\frac{\partial Q_{\theta}(t)}{\partial\theta}$ solution of
the sensitivity equation: 
\[
\frac{d}{dt}(\frac{\partial Q_{\theta}(t)}{\partial\theta})=\frac{\partial F}{\partial Q}(Q_{\theta}(t),\theta,t)\frac{\partial Q_{\theta}(t)}{\partial\theta}+\frac{\partial F}{\partial\theta}(Q_{\theta}(t),\theta,t)
\]
If we premultiply the right and left term of the previous ODE by the
$D-$sized adjoint vector $P(t)$ and then integrate we obtain 
\[
\int_{0}^{T}P(t).\frac{d}{dt}(\frac{\partial Q_{\theta}(t)}{\partial\theta})dt=\int_{0}^{T}P(t).\frac{\partial F}{\partial Q}(Q_{\theta}(t),\theta,t)\frac{\partial Q_{\theta}(t)}{\partial\theta}dt+\int_{0}^{T}P(t).\frac{\partial F}{\partial\theta}(Q_{\theta}(t),\theta,t)dt
\]

Integration by part gives us 
\[
\int_{0}^{T}P(t).\frac{d}{dt}(\frac{\partial Q_{\theta}(t)}{\partial\theta})dt=P(T).\frac{\partial Q_{\theta}(T)}{\partial\theta}-P(0).\frac{\partial Q_{\theta}(0)}{\partial\theta}-\int_{0}^{T}\dot{P}(t).\frac{\partial Q_{\theta}(t)}{\partial\theta}dt
\]

We already know that $\frac{\partial Q_{\theta}(T)}{\partial\theta}=0$
and if we take $P(0)=0$ we obtain the variational relation:

\[
\int_{0}^{T}\left(\dot{P}(t)+P(t).\frac{\partial F}{\partial Q}(Q_{\theta}(t),\theta,t)\right)\frac{\partial Q_{\theta}(t)}{\partial\theta}dt+\int_{0}^{T}P(t).\frac{\partial F}{\partial\theta}(Q_{\theta}(t),\theta,t)dt=0
\]

and by imposing: 
\[
\dot{P}(t)+P(t).\frac{\partial F}{\partial Q}(Q_{\theta},\theta,t)=\frac{\partial g(Q_{\theta}(t),\theta,t)}{\partial Q}
\]
we deduce that 
\[
\int_{0}^{T}\frac{\partial g(Q_{\theta}(t),\theta,t)}{\partial Q}\frac{\partial Q_{\theta}(t)}{\partial\theta}dt=-\int_{0}^{T}P(t).\frac{\partial Q}{\partial\theta}(Q_{\theta}(t),\theta,t)dt
\]
and so
\[
\nabla_{\theta}S(\widehat{X};\theta,\lambda)=\int_{0}^{T}\frac{\partial g(Q_{\theta}(t),\theta,t)}{\partial\theta}-P(t).\frac{\partial F}{\partial\theta}(Q_{\theta}(t),\theta,t)dt
\]

We propose here an alternative for gradient computation, we compute
$\nabla_{\theta}S(\widehat{X};\theta,\lambda)$ by considering: 
\[
\begin{array}{c}
\nabla_{\theta}S(\widehat{X};\theta,\lambda)=\int_{0}^{T}\frac{\partial g(Q_{\theta}(t),\theta,t)}{\partial\theta}-P(t).\frac{\partial F}{\partial\theta}(Q_{\theta}(t),\theta,t)dt\\
\begin{array}{l}
\dot{P}(t)=\frac{\partial g(Q_{\theta}(t),\theta,t)}{\partial Q}-P(t).\frac{\partial F}{\partial Q}(Q_{\theta}(t),\theta,t)\\
P(0)=0
\end{array}
\end{array}
\]

The interest here is computational, computing gradient by solving
sensitivity equation drives us to solve a $D\times p$ ODE system.
Here the adjoint system defining $P$ is only of size $D$.

\newpage{}

\section{Asymptotic variance expression}

We know asymptotically $\widehat{\theta}^{T}-\theta^{*}$ behaves
as: 
\[
2\frac{\partial^{2}S(X^{*};\theta^{*},\lambda)}{\partial\theta^{T}\partial\theta}^{-1}\left(\Gamma(\widehat{X})-\Gamma(X^{*})\right)
\]

with:
\[
\frac{1}{2}\frac{\partial^{2}S(X^{*};\theta^{*},\lambda)}{\partial\theta^{T}\partial\theta}=
\]

\[
\frac{\partial\left(A_{\theta^{*}}(t).X^{*}\right)}{\partial\theta}^{T}\frac{\partial h_{\theta^{*}}(t,X^{*})}{\partial\theta}+\frac{\partial\left(h_{\theta^{*}}(t,X^{*})\right)^{T}}{\partial\theta}\frac{\partial\left(A_{\theta^{*}}(t).X^{*}\right)}{\partial\theta}+\frac{1}{\lambda}\frac{\partial\left(h_{\theta^{*}}(t,X^{*})\right)}{\partial\theta}^{T}\frac{\partial h_{\theta^{*}}(t,X^{*})}{\partial\theta}
\]

the hessian of the asymptotic criteria at $\theta=\theta^{*}$

and: 
\[
\Gamma(X)=\int_{0}^{T}\left(\frac{\partial\left(A_{\theta*}(t).X^{*}\right)}{\partial\theta}+\frac{1}{\lambda}\frac{\partial h_{\theta^{*}}(t,X^{*})}{\partial\theta}\right)^{T}\left(\int_{t}^{T}R_{\theta^{*}}(T-t,T-s)X(s)ds\right)dt
\]

A linear functional w.r.t to $X$ so asymptotically: 

\[
Var(\widehat{\theta}^{T})=4\frac{\partial^{2}S(X^{*};\theta^{*},\lambda)}{\partial\theta^{T}\partial\theta}^{-1}Var(\Gamma(\widehat{X}))\frac{\partial^{2}S(X^{*};\theta^{*},\lambda)}{\partial\theta^{T}\partial\theta}^{-1}
\]

If $\widehat{X}$ is a b-Splines basis decomposition estimator under
the form $\widehat{X}=\sum_{i=1}^{K}\widehat{\beta}_{iK}p_{iK}(t)$
we can formulate $\Gamma$ as a linear function w.r.t coefficients
$\widehat{\beta}_{iK}$: 
\[
\Gamma(\widehat{X}):=P(\theta^{*},X^{*})\widehat{\beta}_{K}
\]

with: 
\[
P_{i}(\theta,X)=\int_{0}^{T}\left(\frac{\partial\left(A_{\theta}(t).X\right)}{\partial\theta}+\frac{1}{\lambda}\frac{\partial h_{\theta}(t,X)}{\partial\theta}\right)^{T}\left(\int_{t}^{T}R_{\theta}(T-t,T-s)p_{iK}(s)ds\right)dt
\]

the $i-$th columns

Finally the asymptotic variance of $\widehat{\theta}^{T}$ is equal
to: 
\begin{equation}
Var(\widehat{\theta}^{T})=4\frac{\partial^{2}S(X^{*};\theta^{*},\lambda)}{\partial\theta^{T}\partial\theta}^{-1}P(\theta^{*},X^{*})Var(\widehat{\beta}_{K})P(\theta^{*},X^{*})^{T}\frac{\partial^{2}S(X^{*};\theta^{*},\lambda)}{\partial\theta^{T}\partial\theta}^{-1}\label{eq:as_variance_expression}
\end{equation}

and we can use the consistent estimator: 
\[
\widehat{Var(\widehat{\theta}^{T})}=4\frac{\partial^{2}S(\widehat{X};\widehat{\theta}^{T},\lambda)}{\partial\theta^{T}\partial\theta}^{-1}P(\widehat{\theta}^{T},,\widehat{X})\widehat{Var(\widehat{\beta}_{K})}P(\widehat{\theta}^{T},,\widehat{X})^{T}\frac{\partial^{2}S(\widehat{X};\widehat{\theta}^{T},\lambda)}{\partial\theta^{T}\partial\theta}^{-1}
\]

{\footnotesize{}\bibliographystyle{plain}
\bibliography{biblio_weak}

\begin{thebibliography}{10}

\bibitem{Blayo2011}
E.~Blayo, E.~Cosme, M.~Nodet, and A.~Vidart.
\newblock Introduction to data assimilation.
\newblock 2011.

\bibitem{boor2001}
C.~De Boor.
\newblock {\em A practical guide to Splines}, volume~27 of {\em Applied
  Mathematical Sciences}.
\newblock Spinger, 2001.

\bibitem{Brezis1983}
H.~Brezis.
\newblock {\em Functional Analysis}.
\newblock Dunod, 1983.

\bibitem{Brunel2008}
N.~J-B. Brunel.
\newblock Parameter estimation of ode's via nonparametric estimators.
\newblock {\em Electronic Journal of Statistics}, 2:1242--1267, 2008.

\bibitem{Brunel2013}
N.~J-B. Brunel, Q.~Clairon, and F.~D'Alche-Buc.
\newblock Parameter estimation of ordinary differential equations with
  orthogonality conditions.
\newblock {\em JASA}, 109(205):173--185, 2014.

\bibitem{calderhead2009}
Ben Calderhead and Mark Girolami.
\newblock {Estimating Bayes factors via thermodynamic integration and
  population MCMC}.
\newblock {\em Computational Statistics \& Data Analysis}, 53(12):4028--4045,
  October 2009.

\bibitem{Campbell2013}
D.A. Campbell and O.~Chkrebtii.
\newblock Maximum profile likelihood estimation of differential equation
  parameters through model based smoothing state estimates.
\newblock {\em Mathematical Biosciences}, 2013.

\bibitem{Cao2003}
Y.~Cao, S.~Li, L.~Petzold, and R.~Serban.
\newblock Adjiont sensitivity analysis for differential-algebraic equations:
  the adjoint dae system and its numerical solution.
\newblock {\em SIAM J. on Scientific Computing}, 24(3):1076--1089, 2003.

\bibitem{clarke2013variationalcalculus}
Francis Clarke.
\newblock {\em Functional Analysis, Calculus of Variations and Optimal
  Control}.
\newblock Graduate Texts in Mathematics. Springer-Verlag London, 2013.

\bibitem{Campbell2011}
G.~Hooker D.A.~Campbell and K.~B. McAuley.
\newblock Parameter estimation in differential equation models with constrained
  states.
\newblock {\em Journal of Chemometrics}, 26:322--332, 2011.

\bibitem{Engl2009}
Hein~W Engl, Christoph Flamm, Philipp K{\"u}gler, James Lu, Stefan M{\"u}ller,
  and Peter Schuster.
\newblock Inverse problems in systems biology.
\newblock {\em Inverse Problems}, 25(12), 2009.

\bibitem{CellComputl2002}
C.P. Fall, E.S. Marland, J.M. Wagner, and J.J. Tyson, editors.
\newblock {\em Computational Cell Biology}.
\newblock Interdisciplinary applied mathematics. Springer, 2002.

\bibitem{Fuguitt1947}
R.~E. Fuguitt and J.E. Hawkins.
\newblock {Rate of Thermal Isomerization of a-Pinene in the Liquid Phase}.
\newblock {\em J.A.C.S}, 319(39), 1947.

\bibitem{gelman1996}
A.~Gelman, F.~Bois, and J.~Jiang.
\newblock Physiological pharmacokinetic analysis using population modeling and
  informative prior distributions.
\newblock {\em Journal of the American Statistical Association}, 91, 1996.

\bibitem{ghasemi2011}
O.~Ghasemi, M.~Lindsey, T.~Yang, N.~Nguyen, Y.~Huang, and Y.~Jin.
\newblock Bayesian parameter estimation for nonlinear modelling of biological
  pathways.
\newblock {\em BMC Systems Biology}, 5, 2011.

\bibitem{calderhead2011}
Mark Girolami and Ben Calderhead.
\newblock Riemann manifold langevin and hamiltonian monte carlo methods.
\newblock volume~73, pages 1--37, 2011.

\bibitem{albert1997biochemical}
Albert Goldbeter.
\newblock {\em {Biochemical Oscillations and Cellular Rhythms: The Molecular
  Bases of Periodic and Chaotic Behaviour}}.
\newblock Cambridge University Press, 1997.

\bibitem{GugushviliKlaassen2010}
S.~Gugushvili and C.A.J. Klaassen.
\newblock {Root-n-consistent parameter estimation for systems of ordinary
  differential equations: bypassing numerical integration via smoothing}.
\newblock {\em Bernoulli}, to appear, 2011.

\bibitem{Hooker2009}
G.~Hooker.
\newblock Forcing function diagnostics for nonlinear dynamics.
\newblock {\em Biometrics}, 65:928--936, 2009.

\bibitem{HookerEllner2013}
G.~Hooker and S.~Ellner.
\newblock Goodness of fit in nonlinear dynamics: Mis-specified rates or
  mis-specified states?
\newblock Technical report, Cornell University, 2013.
\newblock arXiv:1312.0294.

\bibitem{Huang2006}
Y.~Huang and H.~Wu.
\newblock A bayesian approach for estimating antiviral efficacy in hiv dynamic
  models.
\newblock {\em Journal of Applied Statistics}, 33:155--174, 2006.

\bibitem{chervoneva2014}
B.~Hipszer T. V.~Apanasovich I.~Chervoneva, B.~Freydin and J.I. Joseph.
\newblock Estimation of nonlinear differential equation model for
  glucose-insulin dynamics in type i diabetic patients using generalized
  smoothing.
\newblock {\em Annals of Applied Statistics(submitted)}, 2014.

\bibitem{Timmer2012}
D.~Kaschek and J.~Timmer.
\newblock A variational approach to parameter estimation in ordinary
  differential equations.
\newblock {\em BMC Systems Biology}, 6:99, 2012.

\bibitem{Kirk1998optimalcontrol}
Donald~E. Kirk.
\newblock {\em Optimal Control Theory: An Introduction}.
\newblock Dover Publication, 1998.

\bibitem{Koul1992}
H.L. Koul.
\newblock Weighted empiricals and linear models.
\newblock {\em Hayward, CA: Institute of Mathematical Statistics}, 21:105--175,
  1992.

\bibitem{pontryagin1962minimum}
R.~V.~Gamkrelidze L.~S.~Pontryagin, V. G.~Boltyanskii and E.~F. Mischenko.
\newblock {\em The Mathematical Theory of Optimal Processes}.
\newblock Wiley-Interscience, 1962.

\bibitem{LiOsborne2005}
Z.~Li, M.R. Osborne, and T.~Prvan.
\newblock Parameter estimation of ordinary differential equations.
\newblock {\em IMA Journal of Numerical Analysis}, 25:264--285, 2005.

\bibitem{LiangWujasa2008}
H~Liang and H.~Wu.
\newblock Parameter estimation for differential equation models using a
  framework of measurement error in regression models.
\newblock {\em Journal of the American Statistical Association},
  103(484):1570--1583, December 2008.

\bibitem{milyutin1998}
A.A. Milyutin and N.P. Osmolovskii.
\newblock {\em Calculus of Variation and Optimal control}.
\newblock Mathematical Monographs. American mathematical society, 1998.

\bibitem{Mirsky2009}
H.P. Mirsky, A.C. Liu, D.K. Welsh, S.A. Kay, and F.J.~Doyle III.
\newblock A model of the cell-autonomous mammalian circadian clock.
\newblock {\em PNAS}, 106(27):11107--11112, July 2009.

\bibitem{Newey1997}
W.~K. Newey.
\newblock Convergence rates and asymptotic normality for series estimators.
\newblock {\em Journal of Econometrics}, 79:147--168, 1997.

\bibitem{QiZhao2010}
Xin Qi and Hongyu Zhao.
\newblock Asymptotic efficiency and finite-sample properties of the generalized
  profiling estimation of parameters in ordinary differential equations.
\newblock {\em The Annals of Statistics}, 1:435--481, 2010.

\bibitem{Raftery2010}
A.E. Raftery and L.~Bao.
\newblock Estimating and projecting trends in hiv/aids generalized epidemics
  using incremental mixture importance sampling.
\newblock {\em Biometrics}, 66:1162--1173, 2010.

\bibitem{Ramsay2007}
J.O. Ramsay, G.~Hooker, J.~Cao, and D.~Campbell.
\newblock Parameter estimation for differential equations: A generalized
  smoothing approach.
\newblock {\em Journal of the Royal Statistical Society (B)}, 69:741--796,
  2007.

\bibitem{Rodriguez2006}
M.~Rodriguez-Fernandez, J.A. Egea, and J.~R Banga.
\newblock Novel metaheuristic for parameter estimation in nonlinear dynamic
  biological systems.
\newblock {\em BioMed Central}, 2006.

\bibitem{ruppert2003semiparametric}
D.~Ruppert, M.P. Wand, and R.J. Carroll.
\newblock {\em {Semiparametric regression}}.
\newblock Cambridge series on statistical and probabilistic mathematics.
  Cambridge University Press, 2003.

\bibitem{Sontag1998}
E.~Sontag.
\newblock {\em Mathematical Control Theory: Deterministic finite-dimensional
  systems}.
\newblock Springer-Verlag (New-York), 1998.

\bibitem{Vaart1998}
A.W. van~der Vaart.
\newblock {\em Asymptotic Statistics}.
\newblock Cambridge Series in Statistical and Probabilities Mathematics.
  Cambridge University Press, 1998.

\bibitem{Varah1982}
J.~M. Varah.
\newblock A spline least squares method for numerical parameter estimation in
  differential equations.
\newblock {\em SIAM J.sci. Stat. Comput.}, 3(1):28--46, 1982.

\end{thebibliography}
}
\end{document}